\documentclass[conference]{IEEEtran}

\usepackage[table]{xcolor} %
\usepackage{array}
\usepackage{booktabs}
\usepackage{multirow}
\usepackage{tcolorbox}
\usepackage{enumitem}
\usepackage{float}
\usepackage{amsmath}
\usepackage{amssymb}
\usepackage{graphicx}
\usepackage{pifont}
\usepackage{mdframed}
\usepackage{mathtools}
\usepackage{algorithm}
\usepackage{algpseudocode}
\usepackage{tabularx}
\usepackage{makecell}
\usepackage{amsthm}
\usepackage{diagbox}
\usepackage{threeparttable}
\usepackage{caption}
\usepackage{subcaption}
\usepackage{fontawesome5}
\usepackage{tikz}
\usepackage{url}
\usepackage{hyperref}
\usepackage{cite}
\renewcommand{\footnoterule}{%
    \kern -3pt
    \hrule width 0.35\columnwidth height 0.4pt
    \kern 2.6pt
}
\usepackage[available,functional,reproduced]{ndssbadges}

\theoremstyle{definition}
\newtheorem{definition}{\textbf{Definition}}[section]
\theoremstyle{plain}
\newtheorem{theorem}{Theorem}[section]

\newtheorem{corollary}[theorem]{Corollary}

\newcommand{\cmark}{\ding{51}}  %
\newcommand{\xmark}{\ding{55}}  %
\newcommand{\hmark}{\ensuremath{\approx}\kern-0.15em\ding{51}} %
\newcommand{\mingxun}[1]{\textcolor{red}{MZ: #1}}
\newcommand{\ignore}[1]{}
\newcommand{\name}{OrionLink}
\newcommand{\negl}{\mathsf{negl}}
\newcommand{\re}[1]{{\color{black}#1}} %
\newcommand{\reon}{\color{black}} %
\newcommand{\ret}[1]{{\color{black}#1}} %

\definecolor{buaa-blue}{rgb}{0.118, 0.220, 0.420}
\definecolor{buaa-orange}{rgb}{0.77, 0.35, 0.06}
\definecolor{buaa-green}{rgb}{0.23, 0.37, 0.13}
\definecolor{note-blue}{rgb}{0.239, 0.565, 0.859}
\definecolor{deepred}{rgb}{0.5216, 0.0745, 0.1294}

\newtcolorbox{takeawaybox}{
  colframe=black!30,
  colback=black!5,
  boxrule=0.5pt,
  arc=1pt,
  left=4pt,
  right=4pt,
  top=2pt,
  bottom=2pt
}

\newcommand{\circnum}[1]{%
  \tikz[baseline=(n.base)]{%
    \node[draw,circle,minimum size=1.0em,inner sep=0.5pt](n){#1};}}

\newcommand{\circnumtwo}[1]{%
  \tikz[baseline=(n.base)]{%
    \node[draw,circle,minimum size=1.0em,inner sep=0.5pt](n){\small #1};%
  }%
}

\newcommand{\narrowxleftarrow}[2][]{%
  \mathrel{%
    \xleftarrow[#1]{\smash{\raisebox{-0.4ex}{$\scriptscriptstyle #2$}}}%
  }%
}

\begin{document}

\title{OrionLink: Single Sign-On with\\Oblivious Identity Brokers}

\author{
\IEEEauthorblockN{Yijia Fang\IEEEauthorrefmark{1}\IEEEauthorrefmark{3},
Bingyu Li\IEEEauthorrefmark{1}\textsuperscript{\ding{41}},
Luchao Jin\IEEEauthorrefmark{1},
Jingqiang Lin\IEEEauthorrefmark{2},\\
Mingxun Zhou\IEEEauthorrefmark{3},
Zhijintong Zhang\IEEEauthorrefmark{1},
Linghui Li\IEEEauthorrefmark{4},
Qianhong Wu\IEEEauthorrefmark{1}}
\IEEEauthorblockA{%
\begin{tabular}{@{}c@{\hspace{1.5em}}c@{}}
\IEEEauthorrefmark{1}School of Cyber Science and Technology, &
\IEEEauthorrefmark{2}School of Cyber Science and Technology,\\
Beihang University, China &
University of Science and Technology of China, China\\
\IEEEauthorrefmark{3}The Hong Kong University of Science and Technology &
\IEEEauthorrefmark{4}Beijing University of Posts and Telecommunications
\end{tabular}\\
\{fangyj, libingyu, jinluchao, zhangzjt, qianhong.wu\}@buaa.edu.cn\\
linjq@ustc.edu.cn, mingxunz@ust.hk, lilinghui@bupt.edu.cn}}

\IEEEoverridecommandlockouts
\makeatletter\def\@IEEEpubidpullup{3.5\baselineskip}\makeatother
\IEEEpubid{\parbox{\columnwidth}{
    Network and Distributed System Security (NDSS) Symposium 2027\\
    22--26 March 2027, Seoul, Republic of Korea\\
    ISBN 978-1-970672-09-1\\
    https://dx.doi.org/10.14722/ndss.2027.231871\\
    www.ndss-symposium.org
}
\hspace{\columnsep}\makebox[\columnwidth]{}}

\maketitle
\thispagestyle{IEEEtitlepagestyle}

\begingroup
\renewcommand{\thefootnote}{}
\footnotetext{\textsuperscript{\ding{41}} Bingyu Li is the corresponding author.}
\endgroup

\begin{abstract}
Single Sign-On (SSO) is a federated authentication mechanism that improves user experience by enabling access to multiple services, or Relying Parties (RPs), through a single authentication at an Identity Provider (IdP).
To reduce integration complexity, many RPs rely on third-party \emph{Identity Brokers (Brokers)} that aggregate numerous IdPs into a unified interface, thereby simplifying protocol management and interoperability.
However, while brokers ease development, they introduce a fundamental security and privacy risk: sensitive authentication credentials and user behavioral data are exposed to the broker, creating a high-value single point of failure.
Existing SSO solutions provide limited protection against these risks.

With the goal of achieving broker obliviousness, we address these challenges at their root and present \emph{OrionLink}, a comprehensive privacy-enhancing SSO pipeline that is compatible with existing brokered SSO architectures and systematically mitigates broker-induced risks without sacrificing usability.
OrionLink incorporates two core components:
(\emph{a}) an \textit{anonymous pseudonym derivation scheme} that generates unlinkable, per-RP identifiers to prevent cross-service user tracking;
(\emph{b}) an \textit{anonymous key exchange protocol} that establishes end-to-end encrypted channels to protect user data from broker visibility.
We further instantiate OrionLink with two new features, preserving broker functionality and deployability. We implemented a prototype of OrionLink within a brokered SSO environment, and experimental evaluation shows that the framework achieves substantial security and privacy improvements with acceptable performance overhead.
\end{abstract}

\begin{IEEEkeywords}
Single Sign-On, Authentication, End-to-End Encryption, Privacy
\end{IEEEkeywords}

\IEEEpeerreviewmaketitle

\section{Introduction}
Single Sign-On (SSO) is one of the most widely adopted federated authentication mechanisms, commonly realized through standards such as OpenID Connect 1.0 (OIDC) \cite{sakimura2014openid} and OAuth 2.0 \cite{hardt2012oauth}. 
It allows users to access websites and services, known as relying parties (RPs), with a pre-registered account managed by an identity provider (IdP), 
significantly reducing registration and account management burdens. 
The adoption of SSO is widespread; more than 45,000 websites in the Tranco Top 1M \cite{pochat2018tranco} rely on major IdPs such as Apple, Google, Microsoft, and Facebook \cite{jannett2024sok}. 
Over the past decade, numerous studies have analyzed the security of SSO \cite{ghasemisharif2022towards,fett2016comprehensive,ghasemisharif2018single,fett2014expressive,jannett2024sok,liu2021investigation,philippaerts2022oauch}, 
and the IETF regularly revises its specifications \cite{ietf-oauth-v2-1-13,sakimura2014openid}, 
such as the \textit{Security Best Current Practices (SBCPs)}~\cite{lodderstedt2025rfc}, to incorporate these findings.

\ignore{
\mingxun{Remove the following paragraph. Not necessary. Jump right into Brokered SSO.}
\textbf{Challenges of SSO Development.} 
Despite standardization, the complexity of authentication protocols and their rapid evolution make correct deployment difficult for RPs. 
Measurement studies consistently reveal widespread noncompliance with recommended practices \cite{shi2019mossot,drakonakis2020cookie,wei2021moscan,jannett2022distinct,innocenti2023oauth,westers2024single,rahat2022cerberus}. 
For example, data from July 2023 showed that only 25\% of RPs (10,527 of 45,532) complied with SBCPs \cite{jannett2024sok}. 
Moreover, differences across IdPs force RP developers to repeatedly adapt implementations, increasing overhead and introducing security pitfalls.
}

\noindent \textbf{Brokered SSO.}
Brokered SSO is an increasingly adopted architecture in which an intermediate \emph{identity broker} (IdB) mediates interactions between relying parties (RPs) and identity providers (IdPs). 
The broker exposes a unified authentication interface to RPs 
while internally managing the heterogeneous, 
protocol-specific authentication workflows of multiple IdPs. 
This design relieves RPs from the need to directly integrate with diverse identity systems.
Figure~\ref{fig:three-party sso} contrasts conventional three-party SSO with brokered SSO.

The primary advantage of brokered SSO is reduced deployment and maintenance overhead. 
By integrating with a single broker, an RP can simultaneously support authentication from many IdPs, analogous to payment aggregators (e.g., Stripe or Alipay) abstracting bank-specific protocols for online merchants. 

Adoption of brokered SSO has grown rapidly. 
A 2025 measurement study \cite{innocenti2025only} reports that approximately 25\% of SSO-enabled websites rely on identity brokers. 
Emerging standards, such as the Model Context Protocol (MCP) \cite{hou2025model}, further increase demand for brokered authentication \cite{mcpauth, cloudflare}. 
As a result, identity brokers have become a critical component of the authentication ecosystem, 
with significant security and privacy implications for both RPs and end users (see Appendix~\ref{app:Measurements of Ecosystem}).

\ignore{

To mitigate these development burdens and reduce security risks from misconfigurations, a new architectural pattern has emerged: 
delegating RP-IdP interactions to third-party services known as \emph{Identity Brokers (Brokers)}. 
Brokers offer three benefits: (\emph{a}) \emph{compliance to standards}, by offloading protocol implementation to specialized providers; 
(\emph{b}) \emph{unified integration}, enabling support for diverse IdPs with a single integration; 
and (\emph{c}) \emph{data management}, by centralizing token verification, mapping, re-issuance, and lifecycle management. 
The traditional three-party SSO flow involves direct communication between the user, IdP, and RP. 
In brokered SSO, the broker acts as an intermediary: 
the RP communicates only with the broker, which in turn manages interactions with the appropriate IdP. 
This distinction is illustrated in Figure \ref{fig:three-party sso}.
}

\begin{figure}[t]
  \centering
  \includegraphics[width=\linewidth]{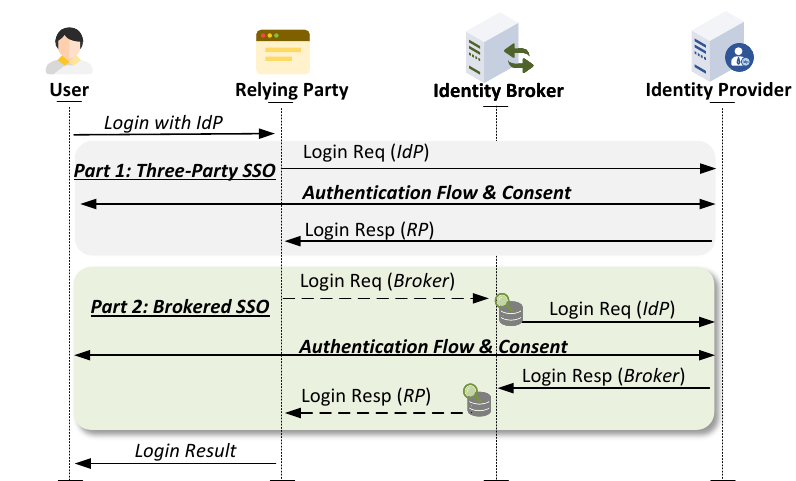}
  \caption{Traditional three-party SSO (top) features direct RP–IdP interaction, while brokered SSO (bottom) inserts an identity broker that relays and re-issues authentication tokens.}

  \label{fig:three-party sso}
\end{figure}

\noindent \textbf{Data Exposure Risk in Brokered SSO.}
While identity brokers simplify SSO deployment, they introduce a fundamental security and privacy risk
by concentrating sensitive authentication and behavioral data at a single intermediary, 
thereby creating a high-value point of failure. 
At a minimum, brokers can observe users’ cross-site authentication activities, enabling fine-grained tracking and potential secondary use of privacy-sensitive information.

More critically, brokers routinely process high-impact authentication data across many RPs, 
including plaintext user identifiers, access tokens, and session cookies. 
This aggregation makes brokers attractive targets for attackers, with potentially severe downstream consequences. 
In 2023, for example, a major identity broker, \emph{Okta}, was compromised via its customer support systems, exposing HTTP Archive (HAR) files that contained authentication tokens and session cookies for multiple RPs~\cite{OktaDataBreach}. 
The incident affected several foundational infrastructure providers, including Cloudflare \cite{cloudflare} and 1Password \cite{password}, and propagated to their dependent services.

Beyond isolated breaches, systemic risks are also evident. 
A large-scale study by Innocenti et al.~\cite{innocenti2025only} found that 49 brokers inadequately validate redirection chains, potentially exposing user data or authentication tokens to attackers.
In some cases, these flaws enable full account takeover through token leakage.

Given the current state of brokered SSO deployments, we pose the following fundamental research question: 

\begin{itemize}[leftmargin=6mm]
\item[]
\textit{
Is it possible to design a brokered SSO system that minimizes data leakage at the broker while still providing a unified authentication interface with practical efficiency?}
\end{itemize}

\noindent\textbf{Our Contribution.}
We take a more fundamental approach to addressing the security and privacy risks of brokered SSO.  
Rather than introducing additional protection mechanisms to secure data already exposed to identity brokers, we aim to \emph{minimize the information observable by the broker in the first place}.  
To this end, we design a new brokered SSO pipeline, \textbf{\name}, 
based on cryptographic engineering approaches.

The central design goal of \name{} is \emph{broker obliviousness}. 
In \name{}, the broker does not learn users’ real identities and acts solely as a forwarding intermediary between RPs and IdPs. 
Sensitive authentication data are protected using end-to-end encryption (E2EE), ensuring that even a compromised broker cannot access critical private information or exfiltrate it retrospectively (as shown in Figure \ref{fig:improving-privacy}).

Concretely, \name{} enforces multiple forms of broker obliviousness through carefully tailored cryptographic designs  (summarized in Table~\ref{tab:broker-privacy}), including:

\begin{enumerate}[itemsep=0pt,leftmargin=12pt]
    \item \emph{Identity Obliviousness:
    The broker does not learn the user’s real identity in any SSO session.} 
    In each session, the RP derives an anonymous identifier and generates a corresponding authenticity proof that is verified by the IdP. 
    The IdP then applies an Oblivious Pseudo-Random Function (OPRF)~\cite{casacuberta2022sok} to blindly derive a pairwise pseudonym for the user, bound to the RP’s anonymous identifier. 
    This construction ensures that pseudonyms are unique per RP, preventing cross-domain tracking while preserving the session consistency required for broker-side data handling.  
    
    \item \emph{Authorization Credential Obliviousness:}
    \textit{The broker does not learn authorization credentials (e.g., access tokens) used to access user resources at the IdP.} 
    In each session, the IdP anonymously authenticates the RP and verifies that it is registered and authorized by the user. 
    We further design an \emph{anonymous authentication key agreement (AAKA)} protocol that enables the IdP and RP to establish a secure session key.
    This key is used to encrypt authorization credentials in the login response, 
    which is forwarded by the broker and can only be decrypted by the authorized RP within the active session. 
    As a result, the broker never observes authorization credentials in plaintext, mitigating the risk of credential leakage and unauthorized resource access. 
    
    \item \emph{Associated Personal Information Obliviousness:}
   \textit{ The broker does not learn the user's personal information (e.g., email address, name, or phone number).}
   Similar to authorization credentials, personal information is encrypted by the IdP in the login response and forwarded through the broker such that only the authorized RP can decrypt it. 
   Since the broker neither possesses authorization credentials nor access to plaintext user attributes, 
   it cannot harvest personal information or query user resources at the IdP, thereby substantially reducing data leakage risks.  
\end{enumerate}

\begin{figure}[t]
\centering

\begin{subfigure}[t]{0.9\linewidth}
  \centering
  \includegraphics[width=\linewidth]{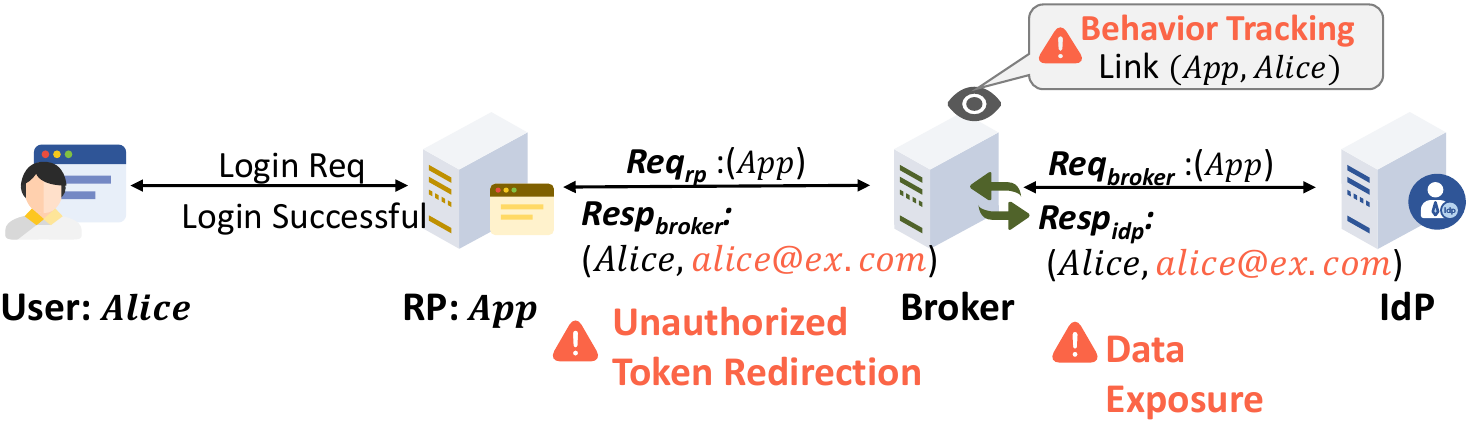}
  \subcaption{Before}
\end{subfigure}

\vspace{0.6em}

\begin{subfigure}[t]{0.9\linewidth}
  \centering
  \includegraphics[width=\linewidth]{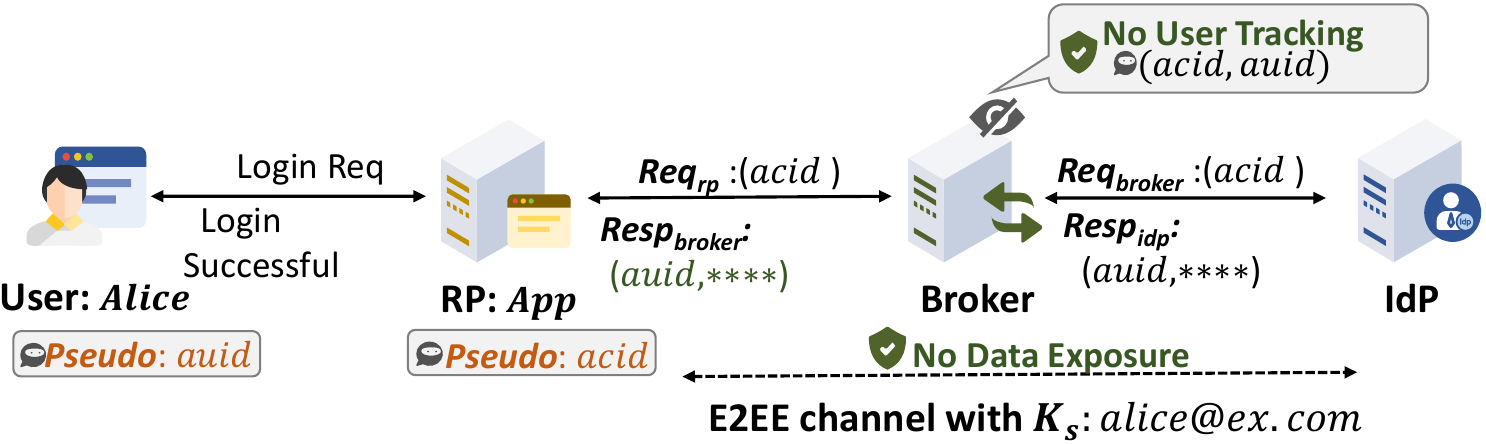}
  \subcaption{\name{}}
\end{subfigure}

\vspace{-0.3em}
\caption{Improving Privacy: Before vs.\ \name{}.}
\label{fig:improving-privacy}
\end{figure}

\begin{table}[t]
\centering
\caption{Broker Information Exposure: Before vs.\ \name{}.}
\label{tab:broker-privacy}

\vspace{-0.3em}

\begin{threeparttable}
\begin{footnotesize}

\newcommand{\riskicon}{\faExclamationTriangle}
\newcommand{\shieldicon}{\faIcon{shield-alt}}
\newcommand{\partialicon}{\faAdjust}

\begin{tabular}{lcc}
\toprule
\textbf{Information Type} & \textbf{Before} & \textbf{\name{}} \\
\midrule
Unique user identifier & \riskicon & \shieldicon\tnote{1} \\
Cross-RP linkability & \riskicon & \shieldicon \\
User attributes (e.g., email, name, phone) & \riskicon & \shieldicon \\
Access token & \riskicon & \shieldicon \\
Refresh token & \riskicon & \partialicon\tnote{2} \\
\bottomrule
\end{tabular}

\begin{tablenotes}
\footnotesize
\item[1] The broker observes only an RP-specific pseudonym rather than the user’s real identifier.
\item[2] The broker receives a refresh token but cannot access the refreshed access token or its contents.
\end{tablenotes}

\end{footnotesize}
\end{threeparttable}
\end{table}

\re{Throughout our design, we assume an \emph{honest-but-curious} identity broker: 
the broker correctly follows the protocol specification but attempts to infer sensitive information from all data it can observe. 
This threat model captures a broad range of realistic risks, including data breaches, as any information visible to the broker is assumed to be ultimately accessible to an adversary. 
Our model therefore reflects the practical risk that broker-observed information may be exposed through channels outside the primary authentication logic.
Designing brokered SSO protocols secure against a fully malicious broker is an important direction for future work and lies beyond the scope of this paper.}

In addition, we instantiate \name{} on top of the widely deployed OIDC authorization code flow \cite{sakimura2014openid} and introduce two new features to improve practical usability:
(\emph{a}) \emph{private authorization management} and
(\emph{b}) \emph{forward-secure token refresh}.

The first feature enables users to review and revoke RP accesses without revealing authorization relationships to the IdP. 
This addresses a critical yet often overlooked gap in prior privacy-preserving SSO systems \cite{he2024arpsso,heher2025bison,xu2023miso,kroschewski2025oppid}: 
although anonymous IdP-to-RP authentication prevents the IdP from tracking users’ authorization choices, 
users still require visibility and control over their authorized RPs for tasks such as access review and revocation.  
The second feature allows the broker to assist RPs in securely maintaining token validity using refresh tokens.
Each refresh response is encrypted under a freshly derived session key, providing forward secrecy and post-compromise security.

\name{} is usable and efficient in practice, and it remains largely transparent to the user.
Our evaluation shows that \name{} incurs an end-to-end login latency of 2.08 seconds in a WAN setting, with additional local computation overhead below 0.07 seconds.
Compared to a standard brokered SSO pipeline, which typically requires approximately 0.5 seconds per login, \name{} introduces only modest overhead while providing substantially stronger security and privacy guarantees.
These results indicate that \name{} meets real-world performance requirements and is suitable for practical deployment.

In summary, we make the following contributions: 

\begin{itemize}[leftmargin=*]
\item \textbf{Problem formulation.} 
We identify \emph{data exposure at identity brokers} as a fundamental security and privacy risk in brokered SSO systems 
and formalize the brokered SSO pipeline through a rigorous system model. 

\item \textbf{System design.} 
We design \name{}, a new brokered SSO pipeline that enforces \emph{broker obliviousness} through cryptographic constructions, ensuring that brokers neither learn user identities nor access sensitive authentication data, thereby preventing data leakage in a principled manner.
We further instantiate \name{} on OIDC and introduce two usability-enhancing features. 

\item \textbf{Formal security analysis.} 
We formally define the security and privacy goals of broker obliviousness and provide rigorous cryptographic proofs.

\item \textbf{Implementation and evaluation.} 
We implement and open-source a prototype of \name{} and evaluate its performance and usability, demonstrating that it achieves strong security guarantees with only modest overhead compared to standard brokered SSO pipelines, making it suitable for real-world deployment. 
\end{itemize}

\section{Preliminaries and Building Blocks}
In this section, we outline the preliminaries and foundational blocks necessary for \name{}.

\noindent\textbf{Notations:}  
Let $\mathbb{G}$ be a cyclic group of prime order $p$ with generator $g$, and denote $\mathbb{G}^* = \mathbb{G} \setminus \{1_{\mathbb{G}}\}$. 
All group operations are written multiplicatively, with scalar exponentiation denoted as $X^a$ for $X \in \mathbb{G}$ and $a \in \mathbb{Z}_p$.

\subsection{Brokered SSO Flow}\label{sec:Brokered SSO Flow}

Brokered SSO enables RPs to connect to IdPs through a single broker, which provides a range of functionalities to reduce deployment and maintenance overhead, including:
\begin{enumerate}[leftmargin=*]
    \item \textit{Unified Authentication Interface.}
Exposes a single standardized authentication interface, allowing applications to support heterogeneous IdPs without IdP-specific logic.
\item \textit{Identity Normalization.}
Maps identities from diverse IdPs into a unified user representation, enabling RPs to operate on stable and standardized claims.
\item \textit{Centralized Security Enforcement.}
Centralizes security-critical mechanisms, including token validation,
token refresh, MFA, and risk control.
\item \textit{Scalability.}
Allows IdPs to be added or updated at the broker without requiring
RP-side changes.
\end{enumerate}

Figure~\ref{fig:brokerSSOflow} shows the standard brokered SSO pipeline, consisting of a \textit{Registration} phase and a \textit{Login Flow} phase.
Overall, the broker performs two standard SSO flows independently: first acting as a proxy RP to authenticate with the IdP, and then as a proxy IdP to authorize the RP.

\noindent\textbf{Phase 1: Registration.} 
\circnum{1} The RP registers with the IdP and obtains \texttt{client\_idb} (i.e., \texttt{client\_id} and \texttt{client\_secret}).  
\circnum{2} The RP forwards \texttt{client\_idb} to the broker to register and receives \texttt{client\_rp}.  

\noindent\textbf{Phase 2: Login Flow.} 
\circnum{1}–\circnum{2} When the user selects \textit{``Login with IdP''} on the RP, the RP redirects the authentication request to the broker.  
\circnum{3} The broker generates a standardized request according to the specific IdP and forwards it to the IdP to initiate authentication.  
\circnum{4} The user verifies the RP’s identity on the frontend and decides whether to grant authorization.
\circnum{5} After authenticating the user and obtaining consent, the IdP returns a token containing the user’s personal information (\texttt{userinfo}) to the broker.  
In the OIDC Authentication Code flow, the broker additionally exchanges the authorization code for an \texttt{id\_token} or \texttt{access\_token}.  
\circnum{6} The broker verifies the token signature, parses its content, and records the RP–user mapping. Then the broker issues a new token for the RP and redirects the browser back to the RP.  
\circnum{7} The RP validates the token, establishes a local session, and grants the user access.  

\begin{figure}[t]
    \centering
    \includegraphics[width=\linewidth]{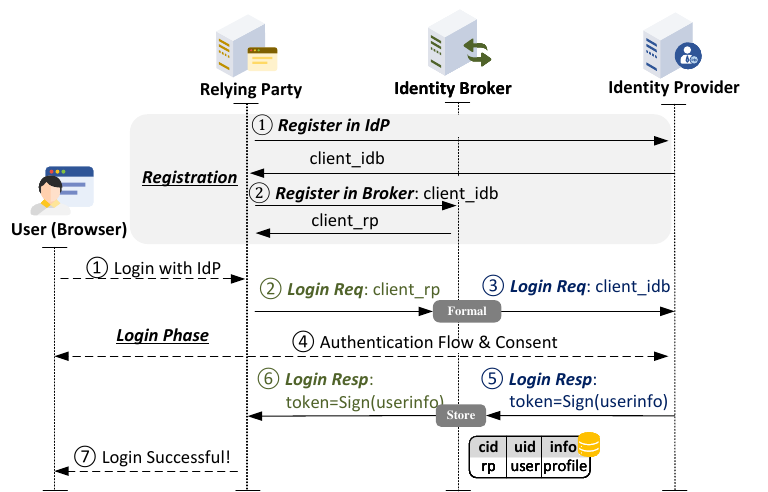}    
    \caption{Brokered SSO Login Flow.}
    \label{fig:brokerSSOflow}
\end{figure}

\subsection{Oblivious Pseudo-Random
Functions} \label{sec:OPRF}

An OPRF~\cite{casacuberta2022sok,davidson2023rfc} is a two-party protocol where the server holds a PRF key $k$, and the client provides an input $X$. 
The client learns only the output $\mathsf{OPRF}_k(X)$, while the server learns nothing about $X$ or the final output.

We adopt OPRFs over prime-order groups~\cite{davidson2019oblivious,jarecki2014round}:

\begin{itemize}[leftmargin=*]
    \item $(r,\alpha) \leftarrow \mathsf{Blind}(x)$: 
    The client samples a random blinding factor $r \xleftarrow{\$} \mathbb{Z}_p^*$, computes the blinded element $\alpha = H(x)^r$, and outputs $(r,\alpha)$, where $H: \{0,1\}^* \rightarrow \mathbb{G}$ is a hash-to-group function. 
    
    \item $\beta \leftarrow \mathsf{BlindEval}(k,\alpha)$: 
    The server computes $\beta = \alpha^k$ using its secret key $k \in \mathbb{Z}_p^*$ and returns $\beta$. 
    
    \item $y \leftarrow \mathsf{Unblind}(r,\beta)$: 
    The client computes $y=\beta^{1/r}=H(x)^k$ to remove the blinding from the input value $\beta$, and outputs unblinded element $y$. In practice, $y$ is often hashed together with the input $x$ to produce the final OPRF output.  
\end{itemize}
This protocol satisfies \textit{correctness}, \textit{consistency}, and 
\textit{pseudorandomness} for which we provide formal definitions
in Appendix~\ref{sec:Security Properties of OPRFs}.

\subsection{Proof of Knowledge}
A Proof of Knowledge (PoK) allows a prover to convince a verifier of knowledge of a secret witness without revealing it. 
We use non-interactive PoKs derived from $\Sigma$-protocols~\cite{damgaard2002sigma} via the Fiat–Shamir transform~\cite{fiat1986prove}. 
A non-interactive zero-knowledge PoK is represented as:
$$
\mathsf{PoK} \{(x_1, x_2, ...) : \text{statements about } x_1, x_2, ... \},
$$
where $(x_1, x_2, ...)$ are secret values that, together with public parameters, satisfy the specified statement.

\subsection{Anonymous Credentials} 
\label{sec:PS Signature}  

Anonymous credentials~\cite{kakvi2023sok} allow a \textit{prover} to obtain credentials on attributes from an \textit{issuer} and later prove possession or selectively disclose attributes to a \textit{verifier}, preserving unlinkability. We use a Pointcheval-Sanders (PS) signature-based anonymous credential scheme~\cite{pointcheval2016short}, summarized below. 
\ret{We discuss why \name{} chooses PS signatures in Section~\ref{sec:discussion}.}
\subsubsection{Pointcheval-Sanders Signature}  
\begin{itemize}[leftmargin=*]
    \item $pp \leftarrow \mathsf{PS.Setup}(1^\lambda)$: Given a security parameter $\lambda$, select a type-3 bilinear map $e: \mathbb{G}_1 \times \mathbb{G}_2 \rightarrow \mathbb{G}_T$ with generators $g_1 \in \mathbb{G}_1$, $g_2 \in \mathbb{G}_2$, 
    where no efficient homomorphism exists between $\mathbb{G}_1$ and $\mathbb{G}_2$. 
    The public parameters are $pp = (e, \mathbb{G}_1, \mathbb{G}_2, \mathbb{G}_T, g_1, g_2)$.

    \item $(sk,vk) \leftarrow \mathsf{PS.KeyGen}(pp,n)$: 
    For $n$ attributes, randomly select $x, y_1, ..., y_n \xleftarrow{\$} \mathbb{Z}_p$, compute $X=g_1^x, Y_i=g_1^{y_i}, \bar{X}=g_2^x, \bar{Y}_i=g_2^{y_i}$ for $i=1,...,n$. 
    The keypair is secret key $sk=(x, y_1,...,y_n, X)$, and 
    verification key $vk=(Y_1,...,Y_n,\bar{X},\bar{Y}_1,...,\bar{Y}_n)$.

    \item $\sigma \leftarrow \mathsf{PS.Sign}(sk,A):$ Given attribute vector $A=(A_1, \ldots, A_n)$, select a random $h \narrowxleftarrow{\$} \mathbb{G}^*_1$ and compute signature
   $\sigma=(\sigma_1, \sigma_2)=(h, h^{x+\sum_{i=1}^n A_i y_i})$. 

    \item $(0/1) \leftarrow \mathsf{PS.Verify}(vk,\sigma,A)$: Parse $\sigma$ as $(\sigma_1, \sigma_2)$, check $e(\sigma_2, g_2)=e(\sigma_1, \bar{X} ) \prod_{i=1}^n e(\sigma_1, \bar{Y}_i)^{A_i}$.

   \end{itemize}

\noindent A key property of PS signatures is \textit{unlinkability:} 
a PS signature $\sigma=(\sigma_1, \sigma_2)$ can be randomized by sampling $k \xleftarrow{\$} \mathbb{Z}_p$ and computing $\sigma' = (\sigma_1^k, \sigma_2^k)$, 
which remains a valid signature on the same attributes. 

\subsubsection{Anonymous Credential Protocol}  The protocol adopts the same setup and key
generation procedures as the PS signature.
\begin{itemize}[leftmargin=*]
    \item $pp \leftarrow \mathsf{AC.Setup}(1^\lambda)$:
    Run $\mathsf{PS.Setup}(1^\lambda)$ to generate the public parameters $pp$.
    \item $(sk,vk) \leftarrow \mathsf{AC.KeyGen}(pp,n)$:
    Run $\mathsf{PS.KeyGen}(pp,n)$ to generate the issuer's secret key $sk$
    and verification key $vk$.
\end{itemize}
The protocol proceeds in three phases: 
Issuance, Presentation, and Verification.

\textbf{Issuance.} $(A,\sigma) \leftarrow  \mathsf{AC.IssueCred}(pp,sk,A) $. 
The prover obtains a credential from the issuer for attribute vector $A=(A_h,A_p)$, hiding private attributes $A_h=(A_1,...,A_j)$ while disclosing public attributes $A_p=(A_{j+1},...,A_n)$: 
\begin{itemize}[leftmargin=*]
    \item $(d,\pi_1) \leftarrow \mathsf{PrepareBlindSign}(vk,A_h)$: 
    Prover selects $d \xleftarrow{\$} \mathbb{Z}_p^*$, generates a non-interactive zero-knowledge (NIZK) proof $\pi_1 = \mathsf{PoK}\{(d,A_h): C=g_1^d \prod_{i=1}^j Y_i^{A_i} \wedge \phi_1(A_h)=1\}$ that hidden attributes ($A_h$) satisfy relation $\phi_1$. 
  
    \item $\sigma' \leftarrow \mathsf{BlindSign}(sk,A_p,\pi_1)$: 
    Issuer verifies $\pi_1$, extracts $C$, selects $r \xleftarrow{\$} \mathbb{Z}_p$, and computes the signature $\sigma' = (g_1^r, (X C \prod_{i=j+1}^n Y_i^{A_i})^r)$, then sends it to the prover.   
   
    \item $\sigma \leftarrow \mathsf{UnblindSign}(d,\sigma')$: Prover computes $\sigma=(\sigma'_1, \sigma'_2  \sigma_1^{'-d})$ to obtain the final credential $\sigma$.
\end{itemize}

\noindent\textbf{Presentation.} $(A'_p,\sigma' ,\pi_2,\phi) \leftarrow \mathsf{AC.PresentCred}(vk,\sigma, A)$.
To prove possession of a credential, the prover re-randomizes $\sigma$ and selectively discloses public attributes ($A'_p$), generating NIZK proof $\pi_2$ that hidden attributes ($A'_h$) satisfy relation $\phi$.
Specifically, it samples $k,t \xleftarrow{\$} \mathbb{Z}_p^*$ and computes
$$\sigma' = (\sigma_1',\sigma_2') = (\sigma_1^k, (\sigma_2  \sigma_1^t)^k).$$

The prover then generates a NIZK proof
$$\pi_2 = \mathsf{PoK}\{(t,A_h'): \mathsf{PS.Verify}(vk, (\sigma_1', \sigma_2' \sigma_1^{'-t}), A) \wedge \phi_2(A_h')=1\},$$%
which demonstrates knowledge of a valid signature on the entire attribute vector $A$, while keeping hidden attributes $A_h'$ secret.

\noindent\textbf{Verification.} $(0/1) \leftarrow \mathsf{AC.VerifyCred}(vk,A'_p,\sigma',\pi_2,\phi)$.
The verifier checks the NIZK proof $\pi_2$ to ensure that the credential is valid under the public attributes $A'_p$ and that the hidden attributes $A'_h$ satisfy the relation $\phi_2$.

\re{When no attribute is hidden ($A'_h=\emptyset$), $\pi_2$ degenerates to revealing $t$ itself, and the verifier directly checks
$\mathsf{PS.Verify}(vk,(\sigma_1',\sigma_2' \cdot \sigma_1'^{-t}),A)=1$.
This protocol satisfies \textit{correctness}, \textit{unlinkability},
\textit{blindness}, and \textit{unforgeability}, for which we provide formal definitions
in Appendix~\ref{sec:Security Properties of Anonymous Credential}.}
\section{Threat Model and Target Properties}
This section first introduces the threat model and then outlines the security and privacy properties, as well as the security assumptions, targeted by our proposed framework.

\subsection{Threat Model}
\label{sec:Threat Model}

We consider a setting with malicious users and malicious RPs, as well as honest-but-curious identity brokers and IdPs.
Corruption is static, and the IdP colludes neither with the broker nor with any RP. The broker may share its view with corrupted RPs while still following the protocol.

\begin{itemize}[leftmargin=*]
    \item \textbf{Malicious Users.} 
    An adversary may control a user and attempt to impersonate another honest user to authenticate to honest RPs. 

    \item \textbf{Malicious RPs.} 
    An adversary may control one or more RPs and attempt to impersonate honest RPs to obtain user authorization or retrieve user data from the broker without proper consent. 
    Colluding malicious RPs may further attempt to link a user’s identity or activities across multiple services.

    \item \textbf{Honest-but-Curious IdP.} 
    The IdP is assumed to follow the protocol correctly but may attempt to infer additional information, such as tracking a user’s behavior across different RPs.
   
    \item \textbf{Honest-but-Curious Broker.} 
    The broker is assumed to behave honestly due to reputational and operational incentives but may passively collect and analyze all authentication and authorization data it observes, including information provided by IdPs. 
    For example, the broker may attempt to reconstruct users’ behavioral patterns across multiple RPs and IdPs. 
    In addition, mismanagement or misconfiguration may inadvertently expose user data to unauthorized RPs or third parties (see Appendix~\ref{app:Measurements of Vulnerabilities} for details). 

\end{itemize}

\subsection{Target Properties}
\label{sec:Target Properties} 

OrionLink aims to preserve the standard security and privacy guarantees of brokered SSO while achieving the stronger goal of \textit{broker obliviousness}. 
The ideal functionality $\mathcal{F}_{\mathsf{OrionLink}}$, shown in Figure~\ref{fig:ideal-orionlink}, captures the desired security and privacy guarantees by structuring SSO interactions into a sequence of well-defined stages and explicitly specifying the information revealed at each stage.  

\noindent \textbf{Standard SSO Security.}
\textit{OrionLink} satisfies the following standard SSO security properties: 

\begin{itemize}[leftmargin=*]
\item \textbf{Authentication Correctness.}
An adversary cannot impersonate an honest user or an honest RP to successfully authenticate through an honest IdP. 

\item \textbf{Authorization Correctness.}
An adversary cannot obtain or use protected resources without a valid authorization explicitly granted by the user to the RP. 

\item \textbf{Session Integrity.}
An honest RP accepts a login result or associated authorization credentials only if the user has intentionally authenticated that RP within the same protocol execution.

\end{itemize}

\noindent \textbf{Broker Obliviousness.} 
The broker learns no information beyond the explicitly permitted metadata and the length of transferred user information, as specified by the $\mathsf{BrokerView}$ outputs in Figure~\ref{fig:ideal-orionlink}.
We formalize this property in Definition~\ref{def:broker-obliviousness}.

\noindent \textbf{Unlinkability and Untraceability.} 
The RP remains anonymous to the IdP, preventing the IdP from learning which RP a user interacts with, as formalized in Definition~\ref{def:untraceability}.
In addition, user identifiers are unlinkable across RP–broker pairs, protecting users from cross-domain tracking, as formalized in Definition~\ref{def:unlinkability}.

\noindent \textbf{Compatibility \& Usability.} 
OrionLink remains deployable within existing brokered SSO ecosystems without requiring client-side modifications, while preserving the broker’s assistive functionalities, as discussed in Section~\ref{sec:discussion}.

\begin{figure*}[t]
\begin{mdframed}[innerleftmargin=3.5pt, innerrightmargin=3.5pt,
                 innertopmargin=3pt, innerbottommargin=3pt]
\begin{center}
    \textbf{$\mathcal{F}_{\mathsf{OrionLink}}$}
\end{center}

\textbf{Notation.}
Let $D$ denote the RP identity (e.g., domain),
$cid$ an RP identifier,
$acid$ a session-specific RP pseudonym,
$uid$ a global user identifier,
$uid_{rp}$ an RP-specific user pseudonym,
$info$ authentication credentials and user attributes,
and $sid$ a session identifier.

\begin{itemize}[leftmargin=*]

\re{\item \textbf{On receiving} $(\mathsf{RegisterRP}, RP, D)$:
\begin{enumerate}[leftmargin=5mm]
  \item Sample a fresh random identifier $cid$. Set $\mathsf{RPReg}[cid] := D$ and record the registrant $\mathsf{Owner}[cid] := RP$.
  \item Output $(\mathsf{Registered}, cid)$ to $RP$,
        $(\mathsf{Registered}, cid, D)$ to $IdP$,
        and $(\mathsf{BrokerView}, cid, D)$ to $Broker$.
\end{enumerate}

\item \textbf{On receiving} $(\mathsf{InitLogin}, RP, sid, cid, acid)$:
\begin{enumerate}[leftmargin=5mm]
  \item If $\mathsf{RPReg}[cid] = \bot$ or $\mathsf{Sess}[sid] \neq \bot$, abort.
  \item If $RP$ is honest, replace the supplied $acid$ with a fresh random one.
  \item If $RP$ is corrupted, abort unless $\mathsf{Owner}[cid]$ is corrupted. Then send $(\mathsf{Judge}, sid, cid, acid)$ to the adversary and abort unless it replies $\mathsf{accept}$, in which case adopt the supplied $acid$.
  \item Set $\mathsf{Sess}[sid] := (\mathsf{pending}, cid, acid)$.
  \item Output $(\mathsf{LoginCtx}, sid, acid)$ to $RP$,
        $(\mathsf{LoginReq}, sid, acid)$ to $IdP$,
        and $(\mathsf{BrokerView}, sid, cid, acid)$ to $Broker$.
\end{enumerate}}

\item \textbf{On receiving} $(\mathsf{ShowRP}, RP, sid, cid, D)$:
\begin{enumerate}[leftmargin=5mm]
  \item If $\mathsf{Sess}[sid] \neq (\mathsf{pending}, cid, \cdot)$ or $\mathsf{RPReg}[cid] \neq D$, abort.
  \item Output $(\mathsf{ShowRP}, sid, D)$ to $User$.
\end{enumerate}

\item \textbf{On receiving} $(\mathsf{Authorize}, User, sid, uid, decision)$:
\begin{enumerate}[leftmargin=5mm]
  \item If $\mathsf{Sess}[sid] \neq (\mathsf{pending}, cid, acid)$ for some $(cid, acid)$, abort.
  \item If $decision = \mathsf{approve}$, update $\mathsf{Sess}[sid] := (\mathsf{approved}, cid, acid)$, otherwise abort.
  \item Output $(\mathsf{Consent}, sid, uid)$ to $IdP$.
\end{enumerate}

\item \textbf{On receiving} $(\mathsf{FinalizeLogin}, IdP, sid, uid, acid, info)$:
\begin{enumerate}[leftmargin=5mm]
  \item If $\mathsf{Sess}[sid] \neq (\mathsf{approved}, cid, acid)$ with the same $acid$ as supplied, abort. Otherwise parse the recorded $cid$.
  \item If $\mathsf{UID}[uid,cid] = \bot$,
        sample fresh $uid_{rp}$ and set $\mathsf{UID}[uid,cid] := uid_{rp}$.
  \item Output $(\mathsf{AuthOK}, sid, uid_{rp}, info)$ to $RP$,
        and $(\mathsf{BrokerView}, sid, uid_{rp}, |info|)$ to $Broker$.
  \item Set $\mathsf{Sess}[sid] := \mathsf{done}$.
\end{enumerate}

\end{itemize}
\end{mdframed}
\caption{Ideal functionality $\mathcal{F}_{\mathsf{OrionLink}}$.}
\label{fig:ideal-orionlink}
\end{figure*}

\subsection{Assumptions}
\label{sec:Assumptions}

Our design is based on the following security and operational assumptions:

\begin{itemize}[leftmargin=*]
    \item \textbf{Broker Visibility.} 
    It is acceptable for an IdP to learn an RP's broker, as brokers typically serve multiple RPs (see Appendix~\ref{app:Measurements of Ecosystem}). 
    This does not compromise RP anonymity. 
    Accordingly, we do not require additional network-layer anonymity mechanisms (e.g., Tor~\cite{dingledine2004tor}) to mitigate IP-based linkability, 
    improving usability and deployability.

    \item \textbf{Unique User Identifier.}  
    We assume that the unique identifier ($uid$) is the sole uniquely identifying user attribute maintained by the IdP. 

\end{itemize}

\section{OrionLink}

OrionLink is a privacy-enhancing brokered SSO system, 
involving four entities: the user, the RP, the IdP, and the broker. 
We describe \name{} based on the standard Brokered SSO workflow of Section \ref{sec:Brokered SSO Flow}, which is compatible with common SSO protocols, including OIDC \cite{sakimura2014openid} and SAML 2.0 \cite{hughes2005security}.
\subsection{Overview}

The simplest route to broker obliviousness is end-to-end encryption between the RP and the IdP, with the broker reduced to a forwarder. This \emph{strawman} hides user data from the broker, but it does not by itself hide the identities used to establish the channel. In a conventional end-to-end encrypted design, the RP and the IdP still authenticate using long-term or otherwise linkable credentials. As a result, the user-RP authorization relationship remains linkable and traceable, which breaks our security targets in \S\ref{sec:Target Properties}. A second design axis, \emph{anonymous endpoints}, is therefore unavoidable.

\name{} couples anonymous authentication with the encrypted channel through two mechanisms.
(\emph{a})~\textit{Anonymous pseudonym derivation} replaces long-term RP identifiers in IdP-facing interactions with fresh session-specific pseudonyms, while mapping global user identifiers to stable, RP-specific pseudonyms.
(\emph{b})~\textit{Anonymous authentication key agreement (AAKA)} lets an anonymized RP and the IdP derive a session key $K_S$ over the broker without exposing any long-term key or stable party identity. We detail the construction of both mechanisms in \S\ref{sec:Design}.

\noindent\textbf{Flow Overview} (Fig.~\ref{fig:overview of OrionLink}). At a high level, \name{} proceeds in four phases as outlined below. We defer the full protocol specification to \S\ref{sec:System Details}.
\begin{itemize}[leftmargin=*]
  \item \emph{Registration} (\circnum{1}--\circnum{4}). The RP registers its identity $D$ at the IdP, which verifies the RP and issues an RP identifier $cid$ together with an anonymous credential $\mathcal{DC}$. The RP then registers $(cid,\mathcal{DC})$ at the broker. This phase mirrors the standard brokered SSO registration flow, with the only difference being that the RP additionally obtains the anonymous credential $\mathcal{DC}$ at the IdP for use in subsequent logins.
  \item \emph{Login Request} (\circnum{5}--\circnum{6}). On a user login, the RP derives a session pseudonym $acid$ from $\mathcal{DC}$, generates a fresh session key $K_S$, and sends $(cid,acid)$ to the broker, which formats and forwards $(acid)$ to the IdP. The RP still reaches multiple IdPs through the broker's unified authentication interface, preserving one of the broker's key assistive roles.
  \item \emph{Auth \& Consent} (\circnum{7}). The user independently verifies that $acid$ commits to $D$, which is displayed on the consent page, and then authenticates and grants consent at the IdP.
  \item \emph{Login Response} (\circnumtwo{8}--\circnumtwo{11}). The IdP completes the AAKA to fix $K_S$, derives the RP-specific user pseudonym $auid$, and encrypts the user info as $\mathsf{info}_{Enc}=\mathsf{Enc}_{K_S}(\mathsf{info})$. It returns $(auid,\mathsf{info}_{Enc})$ through the broker over an E2EE channel. The RP unblinds $auid$ into $uid_{rp}$ and decrypts $\mathsf{info}$ under $K_S$. The RP then returns $uid_{rp}$ to the broker, which records it as the stable per-RP user pseudonym for user management.
\end{itemize}

\noindent\textbf{Notations:}
We use $\mathsf{Enc}_K(\cdot)$ and $\mathsf{Dec}_K(\cdot)$ to denote symmetric authenticated encryption and decryption under key $K$, respectively, and $\mathsf{AEAD.Enc}_K(\cdot,\cdot)$ and $\mathsf{AEAD.Dec}_K(\cdot,\cdot)$ to denote authenticated encryption with associated data (AEAD). 
We denote $\mathsf{KDF}(\cdot)$ as a key derivation function, and $\mathsf{Sig}{sk}(\cdot)$ as a digital signature generated using the signing key $sk$.

\subsection{Design}\label{sec:Design}

This section presents the construction of \textit{anonymous pseudonym derivation} and \textit{AAKA}, which together realize the security and privacy targets of \S\ref{sec:Target Properties}.

To facilitate cryptographic operations on the client-side without requiring browser extensions, we define a \textit{Trusted Computation Agent (TCA)}. 
The TCA is a JavaScript component loaded into the user's browser via an \texttt{iframe} from a publicly accessible, auditable source. 
It assists the user in performing essential computational and verification steps, such as verifying credentials and deriving keys.
The design draws inspiration from \cite{he2024arpsso}.

\noindent\textbf{Remark.} Revealing the RP’s identity to the broker is acceptable and does not violate our security properties (see Appendix~\ref{app:additional discussion}). Hiding RP identity from the broker is left for future work under stronger assumptions.
\subsubsection{Anonymous Pseudonym Derivation}

To achieve identity obliviousness, \name{} must ensure two properties: (1) \textit{RP pseudonym derivation:} the IdP should authenticate the RP using a session-specific RP pseudonym without learning user–RP relationships, preventing cross-domain tracking. 
(2) \textit{User pseudonym derivation:} each user should be identified by a unique, RP-specific pseudonym, which prevents identity linkability by the RP and the broker, and preserves the session consistency needed for broker-side data management. While prior work \cite{he2024arpsso,heher2025bison,xu2023miso,fett2015spresso,lin2025understanding} has addressed this in standard SSO, 
we adapt these concepts for the brokered model.

\noindent\textbf{RP pseudonym derivation.} 
An RP’s identity consists of two components. The first is a human-readable identity descriptor $D$, such as a domain name, which can be directly inspected by the user to verify that the RP is indeed the intended party to be authorized.
The second is a long-term identity secret $\mathsf{rsk}_{IK}$, held exclusively by the RP, which is used to prove possession of a registered identity to the IdP, such as the  \texttt{client\_secret} in OIDC flow.

During SSO, the RP’s identity must be verified twice:

\begin{itemize}[leftmargin=*]
    \item \textit{User-facing verification}. The user verifies the RP’s identity on the frontend and decides whether to grant authorization which relies on presenting the identity descriptor $D$ to the user (step~\circnum{7}).
    \item \textit{Token Issuance}. The RP anonymously proves possession of the corresponding $\mathsf{rsk}_{IK}$ to the IdP, thereby establishing identity validity without revealing its real identity. To ensure \textit{session integrity}, the IdP must further verify that the entity requesting the authentication credentials is the same entity that the user has authorized (step~\circnumtwo{8}).
\end{itemize}
Since users do not need to verify $\mathsf{rsk}_{IK}$, we issue two separate credentials so that the user only verifies $D$ on the frontend. Building on the anonymous credential scheme from Section~\ref{sec:PS Signature}, the RP obtains two credentials from the IdP during registration: 

\begin{itemize}[leftmargin=*]
    \item \textit{Domain Credential ($\mathcal{DC}$)}: 
    A PS signature on the public attribute $D$, binding the RP to its registered identity.  
    \item \textit{Secret Credential ($\mathcal{SC}$)}: 
    A PS signature over both the public attribute $D$ and the private attribute $\mathsf{rsk}_{IK}$, which is also used in the AAKA protocol for key agreement.  

\end{itemize}

For \textit{User-facing verification}, the RP derives a pseudonymous identifier $k,t\narrowxleftarrow{\$} \mathbb{Z}^*_p, acid :=(\sigma_1^k,(\sigma_2 \sigma_1^t)^k)$ from $\mathsf{AC.PresentCred}(\mathcal{DC})$ (step~\circnum{5}) and the broker redirects it to the IdP window.
Since $D$ is fully disclosed to the user at consent time, this presentation is a fully disclosed one in the sense of Section~\ref{sec:PS Signature}, so the accompanying proof $\pi_D$ is realized by the randomization factor $t$ itself, which the RP sends alongside $acid$ and the broker forwards to the TCA.
We use TCA on behalf of the user to derandomize $acid$ with $t$, verify $D$ against it, and display $D$ on the consent page (step~\circnum{7}). Upon user confirmation, the consent result is sent to the IdP to obtain the authentication credential.
For \textit{Token Issuance}, the RP generates a NIZK proof $\pi_2$ via $\mathsf{AC.PresentCred(\mathcal{DC},\mathcal{SC})}$ (step~\circnum{5}) to prove its link to $acid$ and its possession of $\mathsf{rsk}_{IK}$.
The IdP verifies it with $\mathsf{AC.VerifyCred}$ (step~\circnumtwo{8}), ensuring RP anonymity and consistency without revealing $D$ or $\mathsf{rsk}_{IK}$.

\noindent\textbf{User pseudonym derivation.}
To derive unlinkable user pseudonyms, we adopt the OPRF-based scheme in Section~\ref{sec:OPRF},
ensuring that each pseudonym is unique per RP-Broker while remaining unlinkable across domains.  
Each user ($uid$) and each RP ($cid$) is associated with a unique identifier
assigned by the IdP, from which RP-specific user pseudonyms are derived in the form $uid_{rp} = cid^{uid}$.
  
During registration, the IdP samples $cid:=h, h \narrowxleftarrow{\$} \mathbb{G}^*_1$, and embeds $h$ into the signature $\sigma=(cid,\sigma_2)$ over the RP’s identity descriptor $D$ . 

During login phase, the RP generates a session-specific pseudonym $acid = \ret{(\sigma_1',\sigma_2')} = (cid^k, (\sigma_2 \cdot cid^t)^k)$,
which is similar to the RP blinds its identifier by computing $(k, cid^k) \leftarrow
\mathsf{Blind}(cid)$ for fresh randomness $k$.
The IdP performs a
blind evaluation and computes $auid = \mathsf{BlindEval}(uid, \ret{\sigma_1'})$ (step~\circnumtwo{8}).
Upon receiving $auid$, the RP removes the blinding factor
and computes $\mathsf{Unblind}(k, auid) = auid^{1/k} = h^{uid}$, obtaining the RP-specific user identifier $uid_{rp}$ (step~\circnumtwo{10}).

This construction allows the RP to later derive the RP-specific user identifier
$uid_{rp}$ without learning the user’s global identifier $uid$, while ensuring
that the IdP does not learn the RP identifier $cid$ during authentication.

\begin{figure}[t]
    \centering
    \includegraphics[width=1\linewidth]{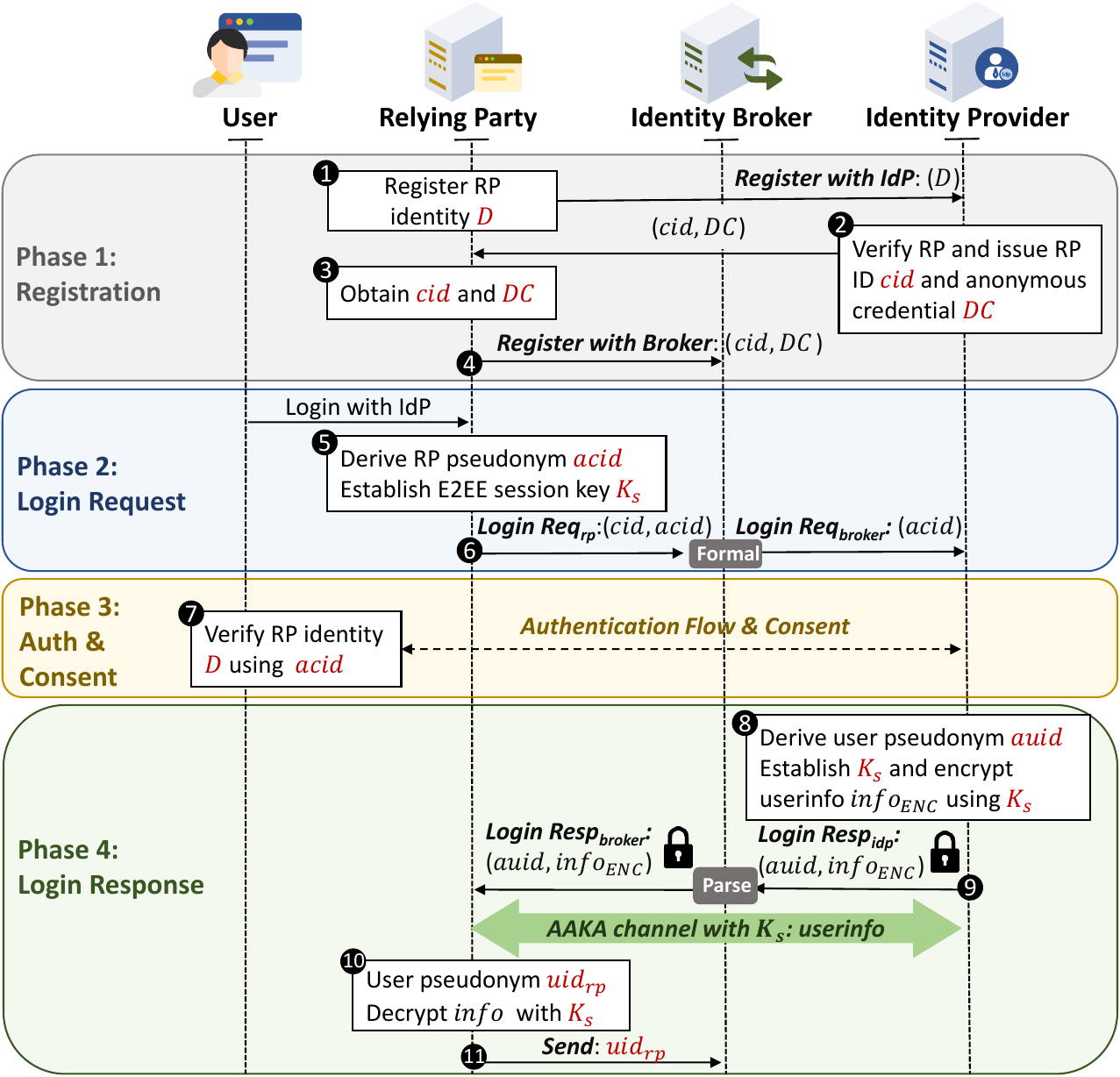}    
    \caption{Overview of OrionLink.}
    \label{fig:overview of OrionLink}
\end{figure}
\subsubsection{Anonymous Authentication Key Agreement}
After the IdP authenticates the RP and obtains user consent, it must forward the user’s authorization credentials and personal information to the RP via the broker, which may expose sensitive user data to the broker.
To prevent this, \name{} establishes an E2EE channel between the RP and the IdP using a variant of the X3DH key agreement protocol~\cite{marlinspike2016x3dh}. 
The RP and the IdP derive a fresh session key
for each session from their long-term identity keys, \ret{the RP's ephemeral key, and the IdP's signed prekey, which is refreshed once per epoch.}
\ret{Because the IdP reuses its signed prekey within an epoch, AAKA provides forward secrecy only across prekey epochs rather than at session granularity. Appendix~\ref{app:additional discussion} details this limitation and describes an upgrade to session-level forward secrecy.}
The broker only forwards ciphertexts and has no access to plaintext user data.

To resist man-in-the-middle (MitM) attacks while hiding the RP’s real identity, the RP needs to maintain a long-term identity key ($\mathsf{rsk}_{IK},\mathsf{rpk}_{IK}$) and derives a fresh, anonymous identity key ($\mathsf{rsk}'_{IK},\mathsf{rpk}'_{IK}$) for each session.
During registration, the RP embeds its long-term secret identity key $\mathsf{rsk}_{IK}$ into $\mathcal{SC}$ as a private attribute, with corresponding public key $\mathsf{rpk}_{IK}=g_1^{\mathsf{rsk}_{IK}}$.

For each session, the RP randomizes its public key as $r \narrowxleftarrow{\$} \mathbb{Z}_p, \mathsf{rpk}'_{IK}=g_1^{r + \mathsf{rsk}_{IK}}$ and proves its validity via NIZK (step~\circnum{5}).
By anonymously verifying the RP, the IdP checks that $\mathsf{rpk}'_{IK}$ is correctly derived from the RP’s long-term identity secret $\mathsf{rsk}_{IK}$,
establishing possession of a registered identity. The verified key
$\mathsf{rpk}'_{IK}$ is then used as the RP’s identity key in the key agreement
protocol and combined with \ret{the RP's ephemeral key and the IdP's signed prekey} to derive the session key
$K_S \leftarrow \mathsf{X3DH}_{\mathsf{RP}}(\mathsf{rsk}'_{IK}, \mathsf{rsk}_E, \mathsf{ipk}_{IK}, \mathsf{ipk}_{E})$
while the RP computes $K_S$ using the IdP's identity key $\mathsf{ipk}_{IK}$ obtained during the initial registration (step~\circnum{5}).
The IdP runs the mirrored responder routine $\mathsf{X3DH}_{\mathsf{IdP}}$ of Figure~\ref{fig:Anonymous Authentication Key Exchange Protocol} on the corresponding keys, so both sides derive the same $K_S$.
The IdP encrypts authorization credentials and personal information using the session key $K_S$, while keeping attribute names in plaintext. This allows the broker to map identities from diverse IdPs into a unified user representation.

\begin{figure}[t]
\begin{mdframed}[innerleftmargin=3.5pt, innerrightmargin=3.5pt, innertopmargin=3pt, innerbottommargin=3pt]
\begin{footnotesize}
\begin{center}
  $\Pi_{\mathsf{OrionLink}}$
\end{center}
  \textbf{Phase 0: Initialization} 
  
Set the public parameters
$pp \leftarrow (e,\mathbb{G}_1,\mathbb{G}_2,\mathbb{G}_T,g_1,g_2)$.
\begin{itemize}[leftmargin=*]
\item \textbf{IdP:}
(\emph{a}) a long-term identity key pair $(\mathsf{isk}_{IK}, \mathsf{ipk}_{IK})$;   
(\emph{b}) a credential key pair $(\mathsf{isk}_c, \mathsf{ivk}_c)$. 
\item \textbf{RP:} A long-term identity key pair $(\mathsf{rsk}_{IK}, \mathsf{rpk}_{IK})$.

\end{itemize}

  \textbf{Phase 1: Registration}  

\begin{itemize}[leftmargin=*]

\item \textbf{RP: Start Registration} to the IdP.
\begin{enumerate}
    \item Execute $(d,\pi_1) \leftarrow \mathsf{PrepareBlindSign}(\mathsf{ivk}_c,D)$
    \item Send $(D,\pi_1)$ to the IdP.
\end{enumerate}

\textbf{On receiving} $(cid,\sigma,\hat{\bar{\sigma}})$ from the IdP:

\begin{enumerate}
    \item $\mathsf{UnblindSign}(d,\bar{\sigma}_1) \xrightarrow{}\bar{\sigma}$, and set $\mathcal{DC} = (D,\sigma)$ and $\mathcal{SC} = (D,\mathsf{rsk}_{IK},\bar{\sigma})$.
    \item Send $(cid,\mathcal{DC})$ to the broker.
\end{enumerate}

\item \textbf{IdP: On receiving} $(D,\pi_1)$ from the RP.
\begin{enumerate}
    \item Execute $\sigma \leftarrow \mathsf{PS.Sign}(\mathsf{isk}_c,D)$ and $\hat{\bar{\sigma}} \leftarrow \mathsf{BlindSign}(\mathsf{isk}_c,D,\pi_1)$.
    \item Extract $cid$ from $\sigma$.
    \item Send $(cid,\sigma,\hat{\bar{\sigma}})$ to the RP.
\end{enumerate}

\item \textbf{Broker: On receiving} $(cid,\mathcal{DC})$ from the RP.
\begin{enumerate}
    \item Register and store $(cid,\mathcal{DC})$.
\end{enumerate}
\end{itemize}

\textbf{Phase 2: Login Request}

\begin{itemize}[leftmargin=*]
\item  \textbf{RP: On receiving} a login request from the user.

\begin{enumerate}
    \item \re{Derive a session-specific pseudonym $acid$, and  $\pi_D := t_1$.}
    $$(D,acid,\pi_D) \leftarrow \mathsf{AC.PresentCred}(\mathsf{ivk}_c,\sigma, D)$$
    \item Execute $\mathsf{AnonyAKA}$ to establish a session key $K_S$ and generate $$M=(\bar{\sigma}', \pi_2,\mathsf{rpk}'_{IK}, \mathsf{rpk}_E,\mathsf{msg}_{init})$$
    \item \re{Send $(cid,acid,\pi_D,M)$ to the broker.}
\end{enumerate}

\item \textbf{Broker: On receiving} $(cid,acid,\pi_D,M)$ from RP.

\begin{enumerate}
    \item Records $(cid,acid)$ and retrieves $D$ from $\mathcal{DC}$.
    \item Construct an authentication request with $(acid,M)$ and forward it to the IdP.
    \item Forward $(D,\pi_D)$ to the TCA in the user's browser.
\end{enumerate}
\end{itemize}

\textbf{Phase 3: User Authentication \& Consent}

\begin{itemize}[leftmargin=*]
\item \re{\textbf{User (via TCA): On receiving} $acid$ from IdP and $(D,\pi_D)$ from the broker via cross-window messaging.}
\begin{enumerate}
    \item \re{Parse $acid=(\sigma_1',\sigma_2')$ and $\pi_D = t_1$, then execute $\mathsf{AC.VerifyCred}(\mathsf{ivk}_c,D,acid,\pi_D)$ to verify RP's identity $D$.}
    \item If accepted, send consent to IdP.
\end{enumerate}
\end{itemize}

\textbf{Phase 4: Login Response}

\begin{itemize}[leftmargin=*]
\item \textbf{IdP: On receiving} consent from the user.
\begin{enumerate}
    \item Execute $\mathsf{AnonyAKA}$ to verify $\pi_2$ and derive the session key $K_S$.
    \item Encrypt $\mathsf{info}_{Enc} := \mathsf{Enc}_{K_S}(\mathsf{info})$.
    \item Derive an RP-specific pseudonym $auid = \mathsf{BlindEval}(uid, \ret{\sigma_1'})$.
    \item Send $(auid,\mathsf{info}_{Enc})$ to the broker.
\end{enumerate}

\item \textbf{Broker: On receiving} $(auid,\mathsf{info}_{Enc})$ from the IdP.
\begin{enumerate}
    \item Forward $(auid,\mathsf{info}_{Enc})$ to the RP
\end{enumerate}
\textbf{On receiving} $uid_{rp}$ from the RP.
\begin{enumerate}
    \item \re{Record $(cid, uid_{rp})$ as the per-RP user pseudonym for this session.}
\end{enumerate}

\item \textbf{RP: On receiving} $(auid,\mathsf{info}_{Enc})$ from the broker.
\begin{enumerate}
    \item Unblind $uid_{rp} \leftarrow \mathsf{Unblind}(k_1, auid)$
    \item Decrypt $\mathsf{info} := \mathsf{Dec}_{K_S}(\mathsf{info}_{Enc})$.
    \item \re{Send $uid_{rp}$ to the broker.}
\end{enumerate}

\end{itemize}

\end{footnotesize}
\end{mdframed}
\caption{The \textit{OrionLink} Protocol $\Pi_{\mathsf{OrionLink}}$.}
  \label{fig:The OrionLink Protocol}
\end{figure}

\subsection{Privacy without Sacrificing Broker
  Utility}
With both identifiers and user data hidden from the broker, one immediate question follows:

\textit{Does broker obliviousness come at the cost of broker utility?}

The answer is largely no. \name{} restricts only the broker functions that require plaintext values and preserves those that depend on protocol structure and metadata. The message pattern between the RP and the IdP is unchanged, so the broker still provides a unified authentication interface, protocol translation, and a single upgrade point for both sides (\textit{Unified Authentication Interface}~\cmark{}, \textit{Scalability}~\cmark{}). Attribute names stay in plaintext and the per-RP pseudonym $uid_{rp}=h^{uid}$ remains stable, so the broker can still map heterogeneous IdP claims into a unified representation and maintain user-RP relationships (\textit{Identity Normalization}~\hmark{}). Plaintext refresh metadata (Section~\ref{sec:Token Refresh}) still supports token validation, refresh, and audit (\textit{Centralized Security Enforcement}~\hmark{}).

\textbf{Limitations.}
\re{Broker obliviousness carries a functionality cost. Without plaintext values or token payloads, the broker cannot offer value-dependent services such as attribute-value normalization, broker-side attribute-based access control (ABAC), user profiling, or content-based threat analytics. We accept this cost because \name{} targets cross-organizational public brokers, which carry 71\% of observed brokered SSO flows (Appendix~\ref{app:Measurements of Ecosystem}), where user privacy takes priority. Enterprise deployments that value centralized logging or ABAC can selectively disclose attribute values to the broker, explicitly relaxing broker obliviousness for those attributes only. \name{} also adds modest key-management overhead for RPs.}

\subsection{System Details}\label{sec:System Details}
The \name{} protocol consists of four phases: \textit{Registration}, \textit{Login Request}, \textit{User Authentication \& Consent}, and \textit{Login Response}. Figure \ref{fig:overview of OrionLink} shows the overview of \name{}. Figure \ref{fig:The OrionLink Protocol} specifies the full \name{} protocol. We omit certain engineering details that do not affect the system.

\newcommand{\stepbox}[1]{%
  \tikz[baseline=(char.base)]{
    \filldraw[fill=black, draw=black] (0,0) rectangle (0.3,0.3);
    \node[text=white, font=\sffamily\small\bfseries] (char) at (0.15,0.15) {#1};
  }%
}

\subsubsection{Initialization}
\label{sec:Initialization}

In this stage, the IdP, the RP, and the broker generate the required key materials and public parameters.
 
\noindent\textbf{Public parameters:} 
Let $\lambda$ be the security parameter. 
Select a type-3 bilinear map $e: \mathbb{G}_1 \times \mathbb{G}_2 \rightarrow \mathbb{G}_T$ with generators $g_1 \in \mathbb{G}_1$ and $g_2 \in \mathbb{G}_2$. 
The public parameters are denoted as
$pp \leftarrow (e, \mathbb{G}_1, \mathbb{G}_2, \mathbb{G}_T, g_1, g_2)$.

\begin{itemize}[leftmargin=*]
\item The IdP generates the following key materials :

(\emph{a}) a long-term identity key pair $(\mathsf{isk}_{IK}, \mathsf{ipk}_{IK})$ for the AAKA protocol with the RP, where $\mathsf{isk}_{IK} \narrowxleftarrow{\$} \mathbb{Z}_p$ and $\mathsf{ipk}_{IK} = g_1^{\mathsf{isk}_{IK}}$;  

(\emph{b}) a credential key pair $(\mathsf{isk}_c, \mathsf{ivk}_c)$: the IdP chooses $x_1,y_1 \narrowxleftarrow{\$} \mathbb{Z}_p$ for $\mathcal{DC}$ and $x_2,y_{21},y_{22} \narrowxleftarrow{\$} \mathbb{Z}_p$ for $\mathcal{SC}$, then computes 
\(X_1 = g_1^{x_1}, Y_1 = g_1^{y_1}, \bar{X}_1 = g_2^{x_1}, \bar{Y}_1 = g_2^{y_1}\) and \(X_2 = g_1^{x_2}, Y_{21} = g_1^{y_{21}}, Y_{22} = g_1^{y_{22}}, \bar{X}_2 = g_2^{x_2}, \bar{Y}_{21} = g_2^{y_{21}}, \bar{Y}_{22} = g_2^{y_{22}}\). 
The IdP sets $\mathsf{isk}_c \leftarrow (x_1,y_1,X_1,\\x_2,y_{21},y_{22},X_2)$ and $\mathsf{ivk}_c \leftarrow (Y_1,\bar{X}_1,\bar{Y}_1,Y_{21},Y_{22},\bar{X}_2,\bar{Y}_{21},\bar{Y}_{22})$.

\item RP key materials: 
A long-term identity key pair $(\mathsf{rsk}_{IK}, \mathsf{rpk}_{IK})$, where $\mathsf{rsk}_{IK} \narrowxleftarrow{\$} \mathbb{Z}_p$ and $\mathsf{rpk}_{IK} = g_1^{\mathsf{rsk}_{IK}}$, for the AAKA protocol.

\end{itemize}

\subsubsection{Registration}
\label{sec:Registration}

In this stage, the user and the RP register with the IdP. 
The IdP verifies the user’s identity and assigns a unique identifier $uid \narrowxleftarrow{\$} \mathbb{Z}_p$.

The RP registers with the IdP by submitting its identity key $\mathsf{rsk}_{IK}$ as a private attribute and its identity attribute $D$ (e.g., domain) as a public attribute. 
The procedure is as follows:
\begin{itemize}[leftmargin=*]
    \item The RP selects a blinding factor $d \narrowxleftarrow{\$} \mathbb{Z}_p$ and computes a Pedersen commitment 
    $C := g_1^d \cdot Y_{21}^{\mathsf{rsk}_{IK}}$.  
    It then generates a NIZK proof
    $$\pi_1 = \mathsf{PoK}\{(d, \mathsf{rsk}_{IK}): C=g_1^d \cdot Y_{21}^{\mathsf{rsk}_{IK}}\},$$ 
    and sends $(D,\pi_1)$ to the IdP.
    \item The IdP verifies the proof $\pi_1$. If valid, it sets $cid=h$ as the RP's unique identifier, where $h \narrowxleftarrow{\$} \mathbb{G}^*_1$, and selects $r \narrowxleftarrow{\$} \mathbb{Z}_p$. It then computes a signature on $D$ as $\sigma=(\sigma_1,\sigma_2)=(h,h^{x_1+y_1 \cdot D})$, 
    and a blinded signature on $D$ and $\mathsf{rsk}_{IK}$ as 
    $\hat{\bar{\sigma}}=(g_1^{r},(X_2 \cdot C \cdot Y_{22}^D)^{r})$.
    It returns $(cid, \sigma, \hat{\bar{\sigma}})$ to the RP.
    \item The RP unblinds $\hat{\bar{\sigma}}$ to obtain 
    $\bar{\sigma}=(\hat{\bar{\sigma}}_1, \hat{\bar{\sigma}}_2  \hat{\bar{\sigma}}_1^{-d})$,  
    and derives the domain credential $\mathcal{DC}=(D,\sigma)$ and the secret credential $\mathcal{SC}=(D,\mathsf{rsk}_{IK},\bar{\sigma})$.
\end{itemize}
Finally, the RP forwards $(cid, \mathcal{DC})$ to the broker. We omit the broker’s response for brevity.

\subsubsection{Login Request}
\label{sec:Authorization Request}
In this phase, the RP receives the login request from the user, generates an RP pseudonym and the AAKA materials, and forwards them to the broker. The broker then constructs a standard SSO authorization request containing these materials and sends it to the IdP for authorization.
\vspace{0.1cm}

\noindent \textbf{Pseudonym Derivation.} The RP randomizes $\mathcal{DC}$ to derive the RP’s session-specific pseudonym.
It samples $k_1,t_1 \narrowxleftarrow{\$} \mathbb{Z}^*_p$ and computes
$acid := \sigma'=(\sigma_1^{'},\sigma_2^{'})=(\sigma_1^{k_1},(\sigma_2 \sigma_1^{t_1})^{k_1}).$
The RP also outputs $\pi_D := t_1$, the fully disclosed presentation proof that lets the user derandomize $acid$ and check it against $D$.
The blinding factor $k_1$ is never released, so $acid$ remains unlinkable across sessions.

\noindent \textbf{AAKA execution.} The RP then runs the $\mathsf{AnonyAKA}$ protocol in Figure \ref{fig:Anonymous Authentication Key Exchange Protocol} to generate $M=(\pi_2, \bar{\sigma}', \mathsf{rpk}'_{IK}, \mathsf{rpk}_E,\mathsf{msg}_{init})$ ,which includes the anonymous authentication proof $(\bar{\sigma}', \pi_2,\mathsf{rpk}'_{IK})$ and key agreement payload $(\mathsf{msg}_{init}, \mathsf{rpk}_E)$.

The RP executes 
\[
(\pi_2, \bar{\sigma}', (\mathsf{rsk}'_{IK}, \mathsf{rpk}'_{IK}))
\leftarrow
\mathsf{GenProof}(\mathcal{SC}, acid, \mathsf{ivk}_c) 
\] 
and obtains the proof: (i) a randomized credential signature $\bar{\sigma}'$ derived from $\mathcal{SC}$,  
(ii) an ephemeral identity key $\mathsf{rpk}'_{IK}$, and  
(iii) a NIZK proof $\pi_2$.
The proof guarantees that $acid$ and $\mathcal{SC}$ correspond to the same identity attribute $D$, and that $\mathsf{rpk}'_{IK}$ is correctly derived from $\mathsf{rsk}_{IK}$ embedded in $\mathcal{SC}$. 

The RP then derives the session key $K_S$ using its freshly generated ephemeral
key $\mathsf{rsk}_E$ and the \ret{signed prekey} $\mathsf{ipk}_E$ obtained from
the IdP
$$K_S \leftarrow \mathsf{X3DH}_{\mathsf{RP}}(\mathsf{rsk}'_{IK}, \mathsf{rsk}_E, \mathsf{ipk}_{IK}, \mathsf{ipk}_{E})$$
then obtains the initial message $\mathsf{msg}_{init}$ and an ephemeral key $\mathsf{rpk}_E$, which enable the IdP to perform the key agreement. The RP forwards the user’s login request, together with $(cid,acid,\pi_D,M)$ to the broker.
Specifically, $cid$ serves as an explicit identifier for the RP in this session, enabling the broker to store and manage future user data associated with the RP.
The broker constructs an authentication request according to standard SSO requirements with $(acid,M)$ and sends it to the IdP.
It retains $\pi_D$ and delivers it to the TCA together with $D$ in the consent phase.
\ret{$\pi_D$ must never reach the IdP, which could otherwise derandomize $acid$
with $t_1$ and match it against every registered RP descriptor to recover the
RP's identity.}

\subsubsection{User Authentication \& Consent}\label{sec:User Authentication and Consent}
\re{The user must verify the RP's identity and determine whether to grant authorization.

The user must first successfully log in to the IdP. Then the IdP loads TCA in the frontend window to help the user verify RP legitimacy.
TCA retrieves $(D,\pi_D)$ from the broker via cross-window messaging, parses $\pi_D=t_1$, and checks
$(0/1) \leftarrow \mathsf{AC.VerifyCred}(\mathsf{ivk}_c,D,acid,\pi_D)$, which derandomizes $acid=(\sigma_1',\sigma_2')$ with $t_1$ and verifies
$\mathsf{PS.Verify}(\mathsf{ivk}_c,(\sigma_1',\sigma_2' \cdot \sigma_1'^{-t_1}),D)=1$.
If successful, $D$ is shown on the consent page. If the user consents to authorize the RP, the IdP proceeds with the following steps.}

\subsubsection{Login Response}
\label{sec:Token Request}
This phase establishes a session key between the RP and IdP, enables anonymous RP authentication, generates an RP-specific user pseudonym, issues authentication credentials and user information $\mathsf{info}$.

\noindent \textbf{Pseudonym Derivation.} The IdP parses $acid=\sigma'=(\sigma_1',\sigma_2')$, where $\sigma_1'=cid^k_1$.
The IdP computes the user’s RP-specific pseudonym for the RP as $auid = \mathsf{BlindEval}(uid, \sigma_1')$

\noindent \textbf{AAKA execution.}  The IdP executes $\mathsf{AnonyAKA}$ to verify proof $\pi_2$:
\[
(0/1) \leftarrow
\mathsf{VerifyProof}(\pi_2, \sigma', \bar{\sigma}', \mathsf{rpk}'_{IK}, (\mathsf{isk}_c,\mathsf{ivk}_c))
\]
confirming the RP’s validity, and derives the shared session key $K_S$.  
The IdP encrypts the user’s sensitive information under the session key $K_S$ as
$\mathsf{info}_{Enc} := \mathsf{Enc}_{K_S}(\mathsf{info})$ and forwards the pair $(auid, \mathsf{info}_{Enc})$ to the RP via the broker.
The RP unblinds the user pseudonym as $uid_{rp} = \mathsf{Unblind}(k_1, auid)$,
decrypts $\mathsf{info}_{Enc}$ with $K_S$, completing the SSO flow.
Finally, the RP delivers the unblinded $uid_{rp}$ to the broker, which records it under $cid$ as the stable per-RP user pseudonym (step~\circnumtwo{11}).
Only the RP holds the blinding factor $k_1$, so this delivery is what gives the broker the $uid_{rp}$ that $\mathsf{BrokerView}$ in Figure~\ref{fig:ideal-orionlink} permits it to observe.

\subsection{Instantiation of \name{}}\label{sec:Instantiation}
We instantiate \name{} based on the widely adopted OIDC Authorization Code flow~\cite{sakimura2014openid}. 

Instantiating OrionLink in practice poses several challenges. In particular, OrionLink must be carefully integrated with the OIDC specifications~\cite{lodderstedt2025rfc}, while preserving the broker’s assistive functionalities described in Section \ref{sec:Brokered SSO Flow}, ensuring that it remains practically efficient and deployable. This fills a critical but often neglected gap in prior privacy-preserving SSO systems \cite{he2024arpsso,heher2025bison,xu2023miso,kroschewski2025oppid}.
To this end, we introduce two features into the OIDC-based instantiation:
(\emph{a}) \emph{\textit{private authorization management:}} allows users to review and revoke their RP accesses without disclosing authorization relationships to the IdP; 
(\emph{b}) \emph{\textit{forward-secure token refresh:}} enables the broker to securely assist RPs in maintaining the validity of user tokens through refresh tokens.

\subsubsection{Private Authorization Management}\label{sec:Private Authorization Management}

While IdP-to-RP anonymous authentication prevents the IdP from tracking users’ authorization choices, users nonetheless require visibility and control over their authorized RPs (e.g., for access review or revocation).  

We address this with the OPRF-based mechanism mentioned in Section \ref{sec:OPRF} that allows the user to
derive a high-entropy key $K_m$ from a short PIN in interaction with the IdP. The derived key $K_m$ is used on the frontend to encrypt the RP identity descriptor $D$,
bind it to the RP pseudonym $acid$, and store the encrypted result $(acid, Enc_{K_m}(D))$ at the IdP.
When the user later reviews or revokes an authorization, the same procedure is executed to re-derive the key and decrypt $D$ for RP identity verification. The corresponding $acid$ is then sent to the IdP to trigger the requested authorization management operation. The details of the algorithm are as follows.

During the \textit{User Authentication \& Consent} phase (step~\circnum{7} in Figure~\ref{fig:overview of OrionLink}), if the user approves the authorization for the RP, a short $PIN$ is entered.
TCA then runs an OPRF with the IdP on behalf of the user to derive a management key $K_m$ from the PIN:  
\begin{enumerate}[leftmargin=*]
    \item TCA computes $r',\alpha=\mathsf{Blind}(\mathsf{PIN})$ and sends $\alpha$ to the IdP;
    \item The IdP evaluates $\beta=\mathsf{BlindEval}(\mathsf{k}_{oprf},\alpha)$ with its OPRF key $\mathsf{k}_{oprf}$ and returns $\beta$;
    \item TCA computes $y=\mathsf{Unblind}(r',\beta)$, derives $K_m=\mathsf{KDF}(y)$, and encrypts the payload as $D_{\mathsf{Enc}}=\mathsf{Enc}_{K_m}(D)$.
\end{enumerate}

This approach derives strong keys from short PINs, reducing user burden while maintaining confidentiality.

\subsubsection{Forward-secure Token Refresh}
\label{sec:Token Refresh}

In the OIDC workflow, the IdP issues three types of tokens: \textit{ID tokens}, \textit{access tokens}, and \textit{refresh tokens} (see details in Appendix~\ref{app:tokens}).
The broker manages \textit{access token} lifecycles on behalf of the RP. 
If an \textit{access token} expires, the broker uses the \textit{refresh token} to request a new one from the IdP.

To preserve privacy against broker surveillance, we design a \emph{key rotation mechanism} inspired by the double ratchet~\cite{doubleratchet}. 
Each refreshed token is encrypted under a freshly derived session key, ensuring forward secrecy. 
Additionally, a Diffie–Hellman ratchet injects new randomness, achieving post-compromise security by preventing past or long-term key exposure from affecting future tokens.
The full protocol details are described in Figure \ref{fig:Token Refresh Protocol with Key Rotation Mechanism} in Appendix \ref{app:Refresh Token Protocol}. 
In practice, the RP pre-generates a sequence of ephemeral DH key pairs and sends the public keys to the broker. 
During operation, the RP simply requests access tokens from the broker. 
If an update is required, the broker uses the next DH key to negotiate a new session key with the IdP. 
The IdP then encrypts the refreshed access token under this key, and the broker returns it to the RP together with the material for updating $K_S$.  

This mechanism allows the broker to continuously manage \textit{access tokens}, 
without requiring RP involvement in every refresh operation, 
while ensuring that only the RP and IdP can access the token contents, preserving both functionality and end-to-end privacy.

\subsection{Security Analysis}\label{sec:security-analysis}

OrionLink is analyzed in the UC framework. The ideal functionality $\mathcal{F}_{\mathsf{OrionLink}}$ in Figure~\ref{fig:ideal-orionlink} captures both the security and the privacy goals, and Theorem~\ref{thm:security} states that the real protocol $\Pi_{\mathsf{OrionLink}}$ securely realizes this functionality. Appendix~\ref{app:security proof} gives the simulation-based proof, formalizes each privacy goal as a standalone definition, and establishes each definition under the assumptions of Theorem~\ref{thm:security} (Corollary~\ref{cor:privacy}).

\begin{theorem}[Security of OrionLink]\label{thm:security} 
Assume that (i) the Decisional Diffie--Hellman (DDH) problem is hard, (ii) the hash functions used in the protocol are modeled as random oracles, so that the $\mathsf{X3DH}$-derived session key is indistinguishable from random, (iii) the symmetric encryption scheme $\mathsf{Enc}$ is IND-CPA secure,
(iv) the anonymous credential scheme satisfies correctness, unforgeability, blindness, and unlinkability, and (v) the DH-based OPRF satisfies correctness, consistency, and pseudorandomness. Then the
protocol $\Pi_{\mathsf{OrionLink}}$ securely realizes the ideal functionality
$\mathcal{F}_{\mathsf{OrionLink}}$.
\end{theorem}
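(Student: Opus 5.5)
We prove the theorem in the standard simulation paradigm: for every real-world adversary $\mathcal{A}$ statically corrupting some set of users and RPs, we build a simulator $\mathcal{S}$ that interacts with $\mathcal{F}_{\mathsf{OrionLink}}$. We also cover the honest-but-curious broker, which shares its view with corrupted RPs. The simulator must produce a view indistinguishable, to any environment $\mathcal{Z}$, from a real execution of $\Pi_{\mathsf{OrionLink}}$.

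Since the IdP colludes with neither the broker nor any RP, we split into cases according to whose view $\mathcal{Z}$ sees:
\begin{itemize}
\item the broker together with the corrupted RPs and users;
\item the honest-but-curious IdP alone.
\end{itemize}
In each case $\mathcal{S}$ runs simulated copies of the honest parties, internally sets up the IdP credential keys $(\mathsf{isk}_c,\mathsf{ivk}_c)$ and $\mathsf{isk}_{IK}$ where it plays the IdP, and programs the random oracles $H$ and $\mathsf{KDF}$.

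For the \textbf{broker/RP view}, $\mathcal{S}$ proceeds phase by phase.
\begin{itemize}
\item On $(\mathsf{BrokerView},cid,D)$ it samples $h\xleftarrow{\$}\mathbb{G}_1^*$ as $cid$ and signs $D$ itself.
\item On $(\mathsf{BrokerView},sid,cid,acid)$ for an honest RP it produces $acid=(cid^{k},(\sigma_2 cid^{t})^{k})$ with fresh $k,t$, and simulates $M$. The key $\mathsf{rpk}'_{IK}$ is a uniformly random group element, justified because it is perfectly hidden by $r$. The proof $\pi_2$ comes from the NIZK simulator in the random-oracle model. For $\mathsf{msg}_{init}$, the X3DH key is replaced by a random $K_S$.
\item On $(\mathsf{BrokerView},sid,uid_{rp},|info|)$ it outputs $auid:=uid_{rp}^{k}$, which is consistent with the real $\mathsf{BlindEval}(uid,cid^{k})=h^{k\cdot uid}$, together with $\mathsf{Enc}_{K_S}(0^{|info|})$.
\end{itemize}
For corrupted RPs, $\mathcal{S}$ extracts from $\pi_1$ at registration, using the rewinding or RO-programming extractor, and from $\pi_2$ at login. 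It answers $\mathsf{Judge}$ with $\mathsf{accept}$ exactly when the presented $\bar\sigma',\pi_2$ verify against a credential issued to a corrupted owner. Unforgeability of the anonymous credential guarantees that a corrupted RP cannot present an honest RP's $\mathcal{SC}$, which gives the ``abort unless $\mathsf{Owner}[cid]$ is corrupted'' behaviour and hence authentication and session integrity.

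For the \textbf{IdP view}, $\mathcal{S}$ sees $(\mathsf{LoginReq},sid,acid)$ with a fresh random $acid$. It must therefore produce a real-looking $acid$ and $\bar\sigma'$ without knowing which RP is involved. It does so by signing a dummy descriptor and re-randomizing, and the IdP cannot detect the substitution by unlinkability of the PS presentation. Because $\pi_D=t_1$ never reaches the IdP, this remains sound.

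Indistinguishability is shown by a hybrid sequence:
\begin{itemize}
\item $H_1$: replace real NIZKs by simulated ones (zero knowledge).
\item $H_2$: replace the X3DH session keys between honest parties by random keys (DDH plus random oracle, assumption (ii)).
\item $H_3$: replace ciphertexts by encryptions of zeros (IND-CPA).
\item $H_4$: replace real $acid$ and $\bar\sigma'$ by presentations of fresh or dummy credentials (unlinkability and blindness).
\item $H_5$: replace $h^{uid}$ by the functionality's random $uid_{rp}$, consistently per $(uid,cid)$ (OPRF pseudorandomness and consistency, reducing to DDH on $(h,h^{uid},h',h'^{uid})$).
\item $H_6$: abort on any successful forgery of a credential or proof (unforgeability plus soundness and extraction).
\end{itemize}
The final hybrid is exactly the ideal execution with $\mathcal{S}$.

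The main obstacle is $H_5$ combined with $H_4$. The broker sees $cid=h$, $acid=(h^{k},\cdot)$, $auid=h^{k\cdot uid}$ and $uid_{rp}=h^{uid}$ across many sessions and RPs, and the argument must show these remain jointly consistent with independent random $uid_{rp}$ per RP. Since $k$ is known to the RP that can share its view, the reduction must embed a DDH instance across all RP generators $h_j$ and users simultaneously. We would try a random self-reduction of a single DDH tuple, giving each $h_j=g^{a_j}$ and $h_j^{uid}$ from a reduced instance. A further concern is that $(\sigma_2 cid^{t})^{k}$ in $acid$ must not leak $uid$-related structure, which we argue from the PS secret $x_1+y_1D$ being independent of $uid$.
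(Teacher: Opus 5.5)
Your simulator and hybrid ingredients are essentially the paper's. The simulator plays the honest parties, builds $acid$ from its own randomizers, and sets $auid:=uid_{rp}^{k}$ in honest-RP sessions. It replaces $K_S$ by a random key (DDH in the ROM) and $\mathsf{info}_{Enc}$ by an encryption of $0^{|info|}$ (IND-CPA). It relies on AC unlinkability for the IdP's view and on AC unforgeability to align acceptance with the $\mathsf{Owner}[cid]$/$\mathsf{Judge}$ check. Folding the paper's malicious-RP, broker and joint cases into one coalition case is fine.

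The gap is the step you call the main obstacle and leave open. Your per-step tuple $(h,h^{uid},h',h'^{uid})$ is the right one, but embedding a single self-reduced tuple into all RPs and all users at once does not work. With $cid_j=g^{\mu_j+\nu_j b}$, $uid_i=\rho_i+\tau_i a$ and $ab$ replaced by $c$, the embedded pseudonym exponents are $(\mu_j+\nu_j b)(\rho_i+\tau_i a)+\nu_j\tau_i(c-ab)$. That is a rank-one perturbation of the real matrix, not a uniform matrix, so one tuple re-randomizes only one user's row.

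No simultaneous embedding is needed. The paper (Case 4, Hybrid $H_3$) runs a hybrid over user--RP pairs. Each step embeds $(g,g^{\alpha},g^{\beta},Z)$ by setting the target user's identifier to $\alpha$ and the target $cid$ to $g^{\beta}$. Every other $cid$ is sampled as $g^{r}$ with known $r$, so every other pseudonym ($(g^{\alpha})^{r}$, $(g^{\beta})^{uid'}$, or an already-randomized value) is computable, and $Z$ goes at the target pair. Honest-RP values of $auid$ are still the pseudonym raised to the simulator's own $k$. A hybrid over users, each step applying the Naor--Reingold self-reduction across all RPs, would also work.

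Two further points need repair. First, your simulator extracts from $\pi_1$ and $\pi_2$ by rewinding, which a UC simulator cannot do. Programming the random oracle does not make plain Fiat--Shamir proofs straight-line extractable either. The paper needs no extraction: it answers $\mathsf{Judge}$ with the outcome of the real $\mathsf{VerifyProof}$ and lets $\mathcal{F}_{\mathsf{OrionLink}}$ enforce $\mathsf{Owner}[cid]$. It invokes unforgeability only inside a hybrid reduction, where rewinding is allowed. Likewise, simulating $\pi_2$ for honest RPs is unnecessary: the simulator plays those RPs, knows their $D$ and $\mathsf{rsk}_{IK}$, and can run $\mathsf{GenProof}$ honestly. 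Only $uid$ and $info$ are hidden from it.

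Second, you never specify the login response in a corrupted-RP session. There $\sigma_1'$ is adversarial and you do not know its exponent relative to $cid$. Computing $uid_{rp}^{k}$ from $(cid,\sigma_1',uid_{rp})$ is then a CDH instance, so your honest-RP formula does not carry over. The paper's Case 1 instead raises $\sigma_1'$ directly to a per-pair exponent obtained by hashing $uid_{rp}$ into $\mathbb{Z}_p$, and encrypts the real $info$ under the honestly derived $K_S$.

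This is where your remark about $k$ genuinely bites. To show that step indistinguishable, the pseudonym reduction must return $\sigma_1'^{\,uid}$ for the embedded user on an adversarial $\sigma_1'$ without knowing $uid$. Plain DDH does not provide this, and the paper's Case 4 reduction is silent on it too. You need either a proof of knowledge of the blinding exponent or a one-more-DH-type assumption on the OPRF.
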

We now spell out what Theorem~\ref{thm:security} implies for each target property of Section~\ref{sec:Target Properties}.
For every property we state how $\mathcal{F}_{\mathsf{OrionLink}}$ models it, which mechanism enforces it in the real protocol, and where its formal definition appears.

\noindent\textbf{Standard SSO Security.} We follow prior work~\cite{fett2017web} in defining standard SSO security and model these properties in $\mathcal{F}_{\mathsf{OrionLink}}$.  An honest IdP issues a token only after a valid user consent and successful authentication of the RP, and the broker forwards the
token exclusively to the corresponding RP session.
We omit an explicit definition of user authentication correctness in $\mathcal{F}_{\mathsf{OrionLink}}$, since our protocol integrates the standard user authentication mechanisms of conventional SSO systems without modification, and inherit the property directly from standard SSO.

\noindent\textbf{Broker Obliviousness.}
\re{This property is captured by the leakage function $\mathsf{BrokerView}$ in Figure~\ref{fig:ideal-orionlink}, which specifies the maximum information revealed to the broker during protocol execution. The broker observes only permitted metadata and the length of transferred user information. In the real protocol, this restriction is enforced by end-to-end encryption between the RP and the IdP (Definition~\ref{def:broker-obliviousness}).}

\noindent\textbf{Untraceability.}
\re{In $\mathcal{F}_{\mathsf{OrionLink}}$, the IdP observes only session-specific identifiers that are independent of the RP's long-term identity. The real protocol realizes this with freshly randomized anonymous credentials and zero-knowledge proofs in each login request, so user--RP authorization relationships remain untraceable at the IdP. In particular, the IdP issues and verifies the RP credential, so the joint issuer--verifier view is already in scope (Definition~\ref{def:untraceability}).}

\noindent\textbf{Unlinkability.}
\re{The RP and the broker observe only RP-specific user pseudonyms, derived via an OPRF as pairwise identifiers bound to the RP's anonymous identifier, so user identities remain unlinkable across RPs even when the broker and multiple RPs pool their views (Definition~\ref{def:unlinkability}).}

\section{Implementation and Evaluation}
\label{sec:Implementation and Evaluation} 
\begin{figure*}[t]
\centering
\captionsetup{justification=centering}
\begin{minipage}[c]{0.34\textwidth}
\centering
\begin{footnotesize}
\captionof{table}{\re{Local computation overhead.}}
\label{tab:local-overhead}

\begin{tabular}{llc}
\toprule
\textbf{Stage} & \textbf{Operation} & \textbf{Time (ms)} \\
\midrule
Initialization & Key generation & 46.59 \\
\midrule
Registration & RP registers to IdP & 5.63 \\
\midrule
\multirow{2}{*}{SSO} & Pseudonym generation & 2.58 \\
 & $\mathsf{AnonyAKA}$ execution & 11.39 \\
\midrule
\multirow{2}{*}{Extension} & Authorization revoke & 0.11 \\
 & Token refresh & 1.20 \\
\midrule
Total &  & 67.50 \\
\bottomrule
\end{tabular}
\end{footnotesize}
\end{minipage}
\hfill
\begin{minipage}[c]{0.31\textwidth}
\centering
\includegraphics[width=\linewidth]{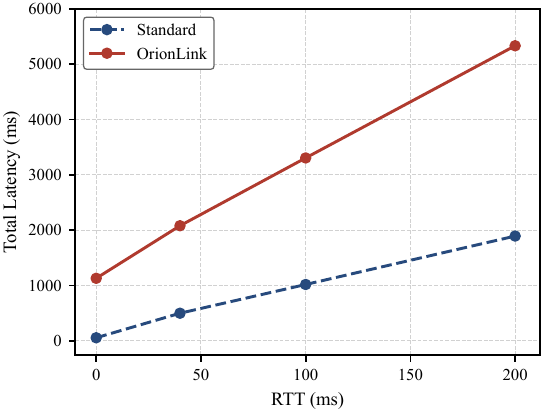}
\captionof{figure}{User login latency\\(\textit{OrionLink} vs. Standard).}
\label{fig:user login latency}
\end{minipage}
\hfill
\begin{minipage}[c]{0.31\textwidth}
\centering
\includegraphics[width=\linewidth]{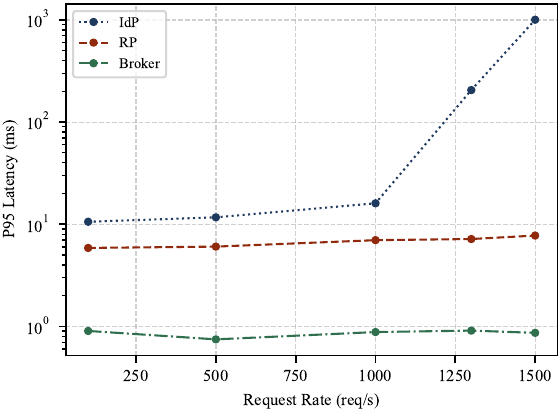}
\captionof{figure}{Stress test of the \\ $\mathsf{AnonyAKA}$ interface.}
\label{fig:plot2}
\end{minipage}

\end{figure*}

\definecolor{lightgray}{gray}{0.9}
\begin{table*}[t]
\centering
\caption{\re{End-to-End performance of \textit{OrionLink} and standard Brokered SSO (RTT=0 ms,40 ms,100 ms,200 ms).}} 
\label{tab:latency-breakdown}
\setlength{\tabcolsep}{4pt}
\begin{footnotesize}
\begin{tabular}{llcccccccc}
\toprule
\multirow{2}{*}{\textbf{Category}}&\multirow{2}{*}{\textbf{Stage}} & \multicolumn{2}{c}{0\,ms} &\multicolumn{2}{c}{40\,ms} & \multicolumn{2}{c}{100\,ms} & \multicolumn{2}{c}{200\,ms} \\
\cmidrule(lr){3-4} \cmidrule(lr){5-6} \cmidrule(lr){7-8} \cmidrule(lr){9-10}
 & 
& \textbf{OrionLink} & \textbf{Standard}
& \textbf{OrionLink} & \textbf{Standard}
& \textbf{OrionLink} & \textbf{Standard}
& \textbf{OrionLink} & \textbf{Standard} \\
\midrule

\multirow{4}{*}{SSO Flow}
& Click ``Login''          & 416.47  &36.69    & 625.67 & 206.60 & 904.23 & 389.51 & 1365.34 & 697.95 \\
& Consent check           &53.01   &--     & 207.28  & --     & 422.99 & --     & 776.94  & --     \\
& Authorization code exchange   &122.27  &19.80 & 595.88 & 293.23 & 1203.45 & 629.36 & 2215.37 & 1195.89 \\
\rowcolor{lightgray}
& \textbf{Login complete (Total) }    &\textbf{1131.08} & \textbf{56.67}      & \textbf{2080.63} & \textbf{500.01} & \textbf{3305.41} & \textbf{1019.12} & \textbf{5331.78} & \textbf{1894.04} \\
\midrule

\multirow{2}{*}{Extension}
& Authorization management   & 20.41  &--   & 68.51  & --     & 130.73 & --     & 235.38  & --     \\
& Token refresh         &2.62    & --       & 123.57 & --    & 272.53 & --    & 521.15 & --     \\
\bottomrule
\end{tabular}
\end{footnotesize}
\end{table*}

This section describes the implementation of the \textit{OrionLink} prototype and presents a comprehensive evaluation of its performance overhead.

\subsection{Implementation}
\label{sec:Implementation}

We implement a prototype of \name{} following the instantiation described in Section~\ref{sec:Instantiation}. 
The prototype comprises over 9,000 lines of code and is publicly available at \url{https://github.com/Ooo0ption/Orionlink.git}.

The system fully realizes all functionalities of the RP, IdP, and broker as specified in the \name{} design. 
Its implementation consists of the following core components.

\noindent(1) \textit{Complete OIDC Login Flow.}
The prototype implements the full OIDC Authorization Code flow, including registration, login request, user authentication and consent, and login response. 
All protocol interactions and request/response constructions strictly follow the OIDC specification~\cite{lodderstedt2025rfc}. 
We adopt the BLS12-381 curve~\cite{bls12} for Pointcheval--Sanders (PS) signatures and issue and verify tokens as JSON Web Tokens (JWTs)~\cite{jones2015jwt}.
All backend components are implemented in Go and leverage Cloudflare’s CIRCL cryptographic library~\cite{circl}.

\vspace{0.05cm}
\noindent(2) \textit{Trusted Computation Agent.}
The TCA is a JavaScript component responsible for executing all user-side cryptographic operations, including anonymous credential verification and OPRF computations. 
It is implemented using the noble-curves~\cite{novel} and facebook-ristretto255~\cite{rist} libraries.

\vspace{0.05cm}
\noindent(3) \textit{Private Authorization Management.}
Private authorization management is realized through user-side OPRF operations executed within the TCA, with corresponding server-side support deployed at the IdP using the Ristretto255 curve~\cite{rfc9496}. 
 
\vspace{0.05cm}
\noindent(4) \textit{Token Refresh with Key Rotation.}
The token refresh mechanism is inspired by the double ratchet protocol~\cite{doubleratchet}, providing forward secrecy across successive refresh operations.

\subsection{Evaluation}
\label{sec:Evaluation}

All RP, broker, and IdP backend components are deployed as Docker containers on a single Ubuntu~24.04 server equipped with a 16-core 4.3\,GHz CPU, 96\,GB of RAM, and a 2\,TB SSD.

\noindent\textbf{Local Computational Overhead}. Table~\ref{tab:local-overhead} reports the local computational overhead of each privacy-enhancing stage in \name{}, with results averaged over 1,000 runs. 
The \textit{RP--broker} registration step is omitted, as it introduces no additional computation. 
The \textit{Authorization Management} measurement captures the time for the TCA to decrypt domain information $D$ on the IdP page during authorization management. 
The \textit{Token Refresh} measurement reports the latency of a single refresh operation involving the RP, broker, and IdP.

Across all stages, the overhead remains minimal, indicating that \name{}'s privacy mechanisms incur negligible local computation cost. 
The total local computational overhead is approximately 67.50~ms.

\noindent\textbf{End-to-End Performance}. 
We evaluate the end-to-end latency of \name{} under realistic user interactions and emulated wide-area network conditions. 
Latency is defined as the elapsed time from when the user clicks the ``Login'' button to the successful completion of authentication at the RP. 

A login session consists of four steps: 
\noindent(1) redirecting to the IdP consent page after the user clicks ``Login''; 
\noindent(2) user consent with frontend execution of the TCA; 
\noindent(3) authorization code exchange by the RP; 
\noindent(4) successful login completion.

We implement a standard brokered SSO baseline using the OIDC Authorization Code flow.
Except for additional UI pages required by \textit{OrionLink}, both systems follow similar webpage structures and interaction patterns. 
User interactions are automated using Playwright~\cite{playwright} to deterministically trigger click events. 
Table~\ref{tab:latency-breakdown} reports the results under emulated round-trip time (RTT) network latencies of 0~ms, 40~ms, 100~ms, and 200~ms, with each result averaged over 100 runs.

Compared to the baseline, \textit{OrionLink} introduces additional latency due to its privacy-enhancing mechanisms and the resulting increase in user interactions, page redirections, and frontend rendering overhead. 
These overheads primarily stem from additional login-phase redirections, frontend execution of the TCA during consent, and cross-window communication during authorization code exchange. 
As shown in Figure~\ref{fig:user login latency}, the end-to-end latency of both systems increases approximately linearly with network delay.
Under a typical network latency of 40~ms, a complete \textit{OrionLink} authentication takes 2.08~s on average, with the privacy extensions contributing an additional 1.58~s.
The overall login latency remains acceptable for interactive use.
Future optimizations could further reduce latency by improving frontend efficiency (e.g., through WebAssembly~\cite{zhang2020passo}) and streamlining the interaction flow to reduce protocol rounds.

\noindent\textbf{Stress Test.}
Local overhead analysis identifies AAKA execution as the most computationally intensive component of \textit{OrionLink}. 
We therefore conduct stress tests on the AAKA interfaces jointly executed by the RP, broker, and IdP. 
As shown in Figure~\ref{fig:plot2}, under a load of 1,000 requests per second (RPS), the system maintains a P95 latency of approximately 10~ms, indicating stable and low-latency performance even under high load. 

\noindent\textbf{Mobile Evaluation.}
\re{The TCA executes only a fixed set of cryptographic operations in the browser during one login. This set takes 32.4\,ms on an iPhone~15 (iOS~18.7, Safari) versus 32.2\,ms on a consumer laptop (AMD Ryzen~7 5800HS, 16\,GB RAM, Edge), both medians over 100 runs, so mobile clients incur no measurable extra cost.}

\section{Discussion} \label{sec:discussion}

\begin{table*}[t]
\begin{threeparttable}
\centering
\begin{footnotesize}
\caption{Comparison of \textit{OrionLink} with prior privacy-preserving SSO schemes.}
\label{tab:comparison-feature}
\renewcommand\theadfont{\bfseries}
\renewcommand\theadalign{cc}
\renewcommand\theadset{\renewcommand\arraystretch{0.5}}
\setlength{\arrayrulewidth}{0.4pt}
\newcommand{\halfcircle}{%
    \begin{minipage}{0.03\linewidth}\centering
    \tikz[baseline=-0.5ex] {
        \fill[black] (90:0.1cm) arc (90:-90:0.1cm) -- cycle;
        \draw (90:0.1cm) arc (90:270:0.1cm);
    }%
    \end{minipage}
}
\begin{tabular}{c|ccc|cc|cc}
\toprule
\multirow{2}{*}{\thead{Scheme}} &
\multicolumn{3}{c|}{\thead{Privacy}} &
\multicolumn{2}{c|}{\thead{Brokered SSO Features}} &
\multicolumn{2}{c}{\thead{Extension}}\\
\cmidrule(lr){2-4} \cmidrule(lr){5-6} \cmidrule(lr){7-8}
 & \makecell[c]{Identity \\ Obliviousness} 
 & \makecell[c]{Credential \\ Obliviousness} 
  & \makecell[c]{Personal Info. \\ Obliviousness} 
 & \makecell[c]{End-to-End \\ Confidentiality} 
 & \makecell[c]{Broker \\ Compatibility} 
  & \makecell[c]{ Authorization \\ Mgmt.} 
    & \makecell[c]{Forward-Secure\\Token Refresh} \\
\hline
UP-SSO \cite{guo2022up}  & \cmark & \xmark & \xmark & \xmark & \xmark & \xmark & \xmark\\
BISON \cite{heher2025bison}    & \cmark & \xmark & \xmark & \xmark & \xmark & \xmark & \xmark\\
AIF-ZKP \cite{kroschewski2023save} & \halfcircle\tnote{1} & \xmark & \xmark & \xmark & \xmark & \halfcircle\tnote{2} & \xmark\\
ARPSSO \cite{he2024arpsso}  & \cmark & \xmark & \xmark & \xmark & \xmark & \xmark & \xmark \\
MISO \cite{xu2023miso}    & \cmark & \cmark & \cmark & \xmark & \xmark & \xmark & \xmark\\
\hline
\textbf{\textit{OrionLink}}  & \cmark & \cmark & \cmark & \cmark & \cmark & \cmark & \cmark\\
\bottomrule
\end{tabular}
\end{footnotesize}
\begin{tablenotes}
\footnotesize
\item[] \textsuperscript{1}Unlinkability limited to the User--RP relationship. \quad \textsuperscript{2}RP authorization control only indirectly, through validity-period enforcement.
\end{tablenotes}
\end{threeparttable}
\end{table*}

\noindent\textbf{Choice of PS over BBS+.}
\ret{\name{} builds on one structural property of PS signatures. The first
component $\sigma_1=h$ is a random group element chosen by the issuer, reused in
\name{} as the per-RP identifier $cid:=h$. Re-randomizing the credential gives
$\sigma_1'=h^{k_1}$, from which the IdP blindly evaluates
$auid=(h^{k_1})^{uid}=(h^{uid})^{k_1}$ and the RP unblinds $uid_{rp}=h^{uid}$, so
the RP credential and the OPRF-based pseudonym derivation of
Section~\ref{sec:Design} share one algebraic base. BBS+ does not expose an analogous independently chosen group-element component that can be directly reused as the pseudonym base, and would therefore require a separate pseudonym-derivation mechanism.}

\noindent\textbf{Compatibility with Other SSO Protocols.}
\textit{OrionLink} is compatible with other federated SSO protocols, including SAML~2.0~\cite{hughes2005security} and WS-Federation~\cite{standardweb}. 
At a high level, these protocols follow a common authentication workflow, as described in Section~\ref{sec:Brokered SSO Flow}, upon which OrionLink is built. 
Differences are largely confined to message formats and token representations and do not fundamentally affect OrionLink’s design. 
That said, the privacy guarantees targeted by OrionLink are not required in all SSO scenarios. 
In enterprise settings using SAML~2.0 or WS-Federation, RPs and IdPs are often under common organizational control and already have visibility into users’ identities and access patterns, making anonymity less critical. 
In such environments, anonymous credentials may be unnecessary, while OrionLink’s end-to-end encryption can still provide value by protecting authentication data from broker observation.

\noindent \textbf{TCA Security.}
In OrionLink's deployment, the TCA runs in a cross-origin \texttt{<iframe>} on an origin independent of the IdP, the broker, and any RP, so the same-origin policy prevents IdP-hosted scripts from reading TCA-internal state such as the PIN, $K_m$, or plaintext $D$. The TCA page loads its open, version-pinned script bundle via an SRI-pinned \texttt{<script>} element, while versioned URLs and an independently auditable transparency log provide bundle provenance. We defer the remaining TCA security details to Appendix~\ref{app:TCA}.

\noindent \textbf{Web Attacks and Side Channels.}
\re{OrionLink builds an end-to-end encrypted channel between the RP and the IdP through AAKA, which resists the network attacks that arise once the broker is inserted as an additional relay point. In standard brokered SSO, a mis-validated redirection chain delivers plaintext authentication tokens to an attacker-chosen endpoint (Appendix~\ref{app:Measurements of Vulnerabilities}), whereas in OrionLink the same attack yields only $\mathsf{info}_{Enc}$ and content-free pseudonyms that reveal nothing without the session key $K_S$ shared by the RP and the IdP. The brokered architecture also removes one class of side channels against the RP, since every login request reaches the IdP from the broker, so the IdP observes neither the request fingerprint nor the network traffic of the RP. Side channels beyond the model remain, including the timing of TCA computations and correlation by a global passive adversary, and established countermeasures apply here as well, such as building the TCA from constant-time cryptographic components. A compromised browser or malicious extension instead breaks the client-side trust anchor altogether, and we discuss this case in Appendix~\ref{app:TCA}.}

\section{Related Works}
We review prior work on privacy-preserving SSO and position \textit{OrionLink} within research on brokered authentication.

\noindent\textbf{Privacy-preserving SSO.} 
To the best of our knowledge, \textit{OrionLink} is the first privacy-enhancing framework designed for \emph{brokered} SSO. In contrast, most existing approaches~\cite{hammann2020privacy,fett2015spresso,heher2025bison,xu2023miso,kroschewski2025oppid} focus on standard SSO without an intermediary broker. Table~\ref{tab:comparison-feature} summarizes a detailed comparison. 
Several schemes~\cite{lin2025understanding,guo2022up,heher2025bison,xu2023miso,kroschewski2025oppid} employ anonymous credentials to achieve properties such as \emph{user behavior untraceability} and \emph{cross-RP unlinkability}, which \textit{OrionLink} also provides. However, many rely on outdated or insecure protocol designs. In particular, multiple proposals~\cite{fett2015spresso,hammann2020privacy,kroschewski2023save} are based on the deprecated OIDC Implicit Flow, which lacks RP authentication and is vulnerable to token leakage. By contrast, ARPSSO~\cite{he2024arpsso} was the first system built entirely on the modern OIDC Authorization Code flow, a secure-by-design approach that \textit{OrionLink} also adopts. 
\textit{OrionLink} uniquely supports key brokered SSO properties, including \emph{(i) end-to-end confidentiality}, preventing brokers from accessing tokens or sensitive user data, and \emph{(ii) broker compatibility}, largely preserving the assistive functions critical to existing brokered SSO ecosystems. Although MISO~\cite{xu2023miso} was designed for standard SSO, its TEE-based mixer intermediates communication between the RP and IdP to hide credentials and personal data, conceptually resembling a broker. However, reliance on TEEs introduces additional trust assumptions and deployment complexity.  
In AIF-ZKP~\cite{kroschewski2023save}, RP authorization is enforced by limiting the lifetime of RP credentials used in anonymous authentication. This approach remains passive and does not support active, user-driven authorization management.

\noindent\textbf{Brokered SSO in Emerging Architectures.} 
The importance of secure brokered SSO has increased with emerging architectures such as MCP, an open standard for connecting AI systems with external applications~\cite{hou2025model,mcpauth,cloudflare}. 
In MCP, servers simultaneously act as IdPs for clients and as RPs for upstream services, naturally fitting the brokered SSO model and making delegated authorization essential. 
Cloudflare~\cite{cloudflare}, a widely deployed reverse proxy, promotes brokered SSO as a practical means of integrating authorization into MCP, recommending ``\emph{authorization-as-a-service}'' providers such as Stytch~\cite{stytch}, Auth0~\cite{auth0}, and WorkOS~\cite{workos}. As brokered SSO adoption grows in AI-driven ecosystems, privacy and data exposure risks increase accordingly, highlighting the need for systems like \textit{OrionLink}.
\vspace{-0.2cm}
\section{Conclusion}

In this work, we identify data exposure at identity brokers as a fundamental security and privacy risk in brokered SSO and propose \textbf{OrionLink}, 
the first privacy-preserving framework designed for \textit{brokered SSO} to mitigate risks posed by \textit{identity brokers}. 
OrionLink adopts \emph{broker obliviousness} as its central design principle and minimizes the information observable by the broker throughout protocol execution. 
Specifically, OrionLink introduces (\emph{a}) an anonymous pseudonym derivation scheme that generates unlinkable, RP-specific user pseudonyms and session-bound RP pseudonyms, preventing cross-service behavioral tracking, 
and (\emph{b}) an anonymous key agreement protocol that establishes end-to-end encrypted channels between RPs and IdPs, shielding users’ authentication and authorization data from broker visibility. 
We further instantiate OrionLink in a practical system and incorporate two features to enhance usability and security.

We formally prove the security of OrionLink, and our prototype implementation and experimental evaluation demonstrate that the system achieves all stated design goals with minimal performance overhead, 
while providing strong privacy guarantees and a seamless user experience.  
In particular, OrionLink attains an end-to-end login latency of about 2.1 seconds in a WAN setting, underscoring its practicality for deployment in real-world brokered SSO ecosystems.

\subsection{Ethical Considerations}
This work presents a defensive, privacy-enhancing framework that does not involve human subjects, real user data, or attacks on deployed systems.

\section*{Acknowledgment}
This work was supported in part by the Beijing Natural Science Foundation under Grant 4242023; in part by the National Natural Science Foundation of China under Grant 62472021; and in part by the Fundamental Research Funds for the Central Universities.

\bibliographystyle{IEEEtran}
\bibliography{Article/cite}

\appendices

\section{Open Science}
To support transparency and reproducibility, we provide a prototype implementation of our system. The source code is publicly available at \url{https://github.com/Ooo0ption/Orionlink.git}.

Our prototype system fully implements all functionalities of the RP, IdP, and broker as designed in OrionLink. The core artifacts included are described in Section \ref{sec:Implementation}.

\section{Brokered SSO Ecosystem}
We briefly summarize the current Brokered SSO ecosystem, with key measurements primarily referenced from a recent study \cite{innocenti2025only}. More details and results can be found in the original paper.
\subsection{Measurements of Ecosystem.}\label{app:Measurements of Ecosystem}
Based on publicly available data, the study \cite{innocenti2025only} filtered 55,083 SSO flows considered valid for analysis. As shown in Table \ref{tab:brokers-ecosystem}, 249 Brokers were identified in 15,213 SSO flows, with approximately 25\% of websites integrating Brokered SSO for authentication.
\begin{table}[t]
\begin{footnotesize}
  \caption{The Brokered SSO Ecosystem}
  \label{tab:brokers-ecosystem}
   \setlength{\tabcolsep}{3pt}
  \centering
  \begin{tabular}{cccccccc}
    \toprule
    \multicolumn{1}{c}{\textbf{Model}}& &
    \multicolumn{2}{c}{\textbf{Brokers}} &
    \multicolumn{2}{c}{\textbf{Flows}} &
    \multicolumn{2}{c}{\textbf{Websites}} 
    \\
    \hline
   \textbf{Three-Party}& & - & - & \textbf{39,870 }& \textbf{72\%} & \textbf{24,880} &\textbf{75\%}\\
    \hline
  \multirow{4}{*}{\textbf{Brokered}} & All Types & \textbf{249} & \textbf{100\%} & \textbf{15,213} & \textbf{28\%} & \textbf{8,241} & \textbf{25\%}  \\
\cmidrule(lr){2-8}
\quad & \multicolumn{1}{l}{\# Public}   & 64  & 26\% & 10,911 & 71\% & 5,944 & 72\%  \\
\quad & \multicolumn{1}{l}{\# Internal} & 117 & 47\% & 3,442  & 22\% & 1,635 & 20\%  \\
\quad & \multicolumn{1}{l}{\# Unknown}  & 68  & 27\% & 1,139  &  7\% &   686 &  8\%  \\
    \midrule
   \textbf{ Total } &  &  & &55,083 & 100\% & 33,121 & 100\% \\
    \bottomrule
  \end{tabular}
\end{footnotesize}
\end{table}

\textit{Public vs. Internal Brokers.}
Public brokers are open to any RP and may be offered as Software as a Service (SaaS) solutions or as open-source options for self-hosted deployments. 
By contrast, internal brokers serve only a specific set of RPs, often within the same organization such as government agencies. 

As shown in Table \ref{tab:brokers-ecosystem}, public brokers have a broader impact on the overall brokered SSO ecosystem. We therefore focus more on the privacy and vulnerability issues introduced by public brokers.

\subsection{Measurements of Vulnerabilities.} \label{app:Measurements of Vulnerabilities}
 The prior work \cite{innocenti2025only} identified three major vulnerabilities currently faced by brokered SSO: 
\begin{itemize}[leftmargin=*]
    \item \textit{Open redirection chains\cite{ORS}.} About 20\% (49 of 249) of brokers fail to sufficiently validate redirection chains, leading to the leakage of authentication tokens and, in turn, enabling account takeover attacks. (see Table \ref{tab:brokers-vulnerability})
    \item \textit{Unauthorized Data Access.} Some brokers lack strict access control and authorization validation mechanisms, allowing attackers to access user data without authorization. This results in data leakage and account overtake attacks. (see Table \ref{tab:brokers-vulnerability})
    \item \textit{Violations of SBCPs.} Some brokers (with 3,367 violations observed out of 5,668 flows) fail to adhere to the recommended specifications, thereby weakening the security guarantees of SSO flows.
\end{itemize}

\begin{table}[t]
\begin{footnotesize}
  \caption{Vulnerabilities in Brokered SSO Ecosystem}
  \label{tab:brokers-vulnerability}
  \centering
  
  \begin{tabular}{cccc}
    \toprule
    \textbf{Vulnerabilities} &
   \textbf{Impact} &
    \textbf{Brokers} &
   \textbf{Affected RPs}
    \\
    \hline
  \multirow{4}{1.8cm}{\textbf{Open Redirection Chains}} & \multicolumn{1}{l}{Data Leakage}   & 38  & 1,979 \\
 & \multicolumn{1}{l}{Account Takeover} & 34 & 1,950\\
\cmidrule(lr){2-4}
& \multicolumn{1}{l}{\textbf{Total}}  & \textbf{49} & \textbf{2,421}\\
\hline
  \multirow{4}{1.8cm}{\textbf{Unauthorized Data Access}} & \multicolumn{1}{l}{Data Leakage} & 15 & 1,103   \\
 & \multicolumn{1}{l}{Account Takeover}   & 6 & 779 \\
 \cmidrule(lr){2-4}
 & \multicolumn{1}{l}{ \textbf{Total} } & \textbf{18} & \textbf{1,149}\\
    \bottomrule
  \end{tabular}
\end{footnotesize}
\end{table}
\section{OIDC Tokens} \label{app:tokens}
OIDC issues three main types of tokens to enable authentication and authorization \cite{sakimura2014openid}:
\begin{itemize}[leftmargin=*]
    \item \textit{ID Token}: A signed JSON Web Token issued by the IdP after user authentication. It conveys user identity claims and includes fields such as \texttt{iss} for the issuer, \texttt{sub} for the user identifier, and \texttt{aud} for the intended audience or client.  
   \item \textit{Access Token}: A token issued by the IdP that allows the client to access protected resources or APIs on behalf of the user. It carries authorization information such as scopes and resource identifiers but does not expose full user identity details. 
  \item \textit{Refresh Token}: A long-lived token issued together with the access token. It enables the client to obtain new access tokens without requiring the user to re-authenticate.
\end{itemize}

\section{More Details of \textit{OrionLink}} \label{app:signal}

\begin{figure}[t]
\begin{mdframed}[innerleftmargin=3.5pt, innerrightmargin=3.5pt, innertopmargin=3pt, innerbottommargin=3pt]
\begin{footnotesize}
\begin{center}
    $\mathsf{AnonyAKA}$ between RP and IdP via Broker
\end{center}
\textbf{Input:}
\begin{itemize}[leftmargin=*]
  \item RP holds the proof generation materials $\mathcal{SC}=(D,\mathsf{rsk}_{IK},\bar{\sigma})$,$acid$, and the IdP’s credential public key $\mathsf{ivk}_c$, identity key $\mathsf{ipk}_{IK}$;
  \item IdP holds its identity key pair $(\mathsf{isk}_{IK},\mathsf{ipk}_{IK})$ and the credential key pair $(\mathsf{isk}_c,\mathsf{ivk}_c)$.
\end{itemize}
\textbf{Output:} session key $K_S$ for RP and IdP.

\textbf{Initialization:}
IdP pre-generates a signed prekey pair $(\mathsf{isk}_{E},\mathsf{ipk}_{E})$ and publishes $\mathsf{ipk}_{E}$ together with its signature $\mathsf{Sig}_{\mathsf{isk}_{IK}}(\mathsf{ipk}_{E})$ via a public endpoint. \ret{The pair is refreshed once per epoch, and $\mathsf{isk}_{E}$ is erased on rotation.}

\textbf{RP: Executes} the following steps.
\begin{enumerate}[leftmargin=*]
  \item generates proof 
  $$(\pi_2, \bar{\sigma}', (\mathsf{rsk}'_{IK},\mathsf{rpk}'_{IK})) \leftarrow \mathsf{GenProof}(\mathcal{SC},acid,\mathsf{ivk}_c)$$
  \item requests the $\mathsf{ipk}_{E}$ and its signature and verifies the signature by $\mathsf{ipk}_{IK}$.
  \item generates an ephemeral keypair $(\mathsf{rsk}_E, \mathsf{rpk}_E)$.
  \item derives the shared session key $K_S$ with the initiator routine
  \begin{align*}
     K_S \leftarrow \mathsf{X3DH}_{\mathsf{RP}}(\mathsf{rsk}'_{IK}, \mathsf{rsk}_E, \mathsf{ipk}_{IK}, \mathsf{ipk}_{E})
  \end{align*}
  \item constructs the associated data $\mathsf{AD} = (\mathsf{rpk}'_{IK}|| \mathsf{ipk}_{IK})$ and prepares the initial message:
   \begin{align*}
    \mathsf{msg}_{init} = \mathsf{AEAD.Enc}_{K_S}(\mathsf{rpk}_E||\mathsf{ipk}_{E},\mathsf{AD}) 
  \end{align*}
  \item set $M=(\pi_2, \bar{\sigma}', \mathsf{rpk}'_{IK}, \mathsf{rpk}_E,\mathsf{msg}_{init})$.
\end{enumerate}

 \textbf{IdP: On receiving} $acid, M=(\pi_2, \bar{\sigma}', \mathsf{rpk}'_{IK}, \mathsf{rpk}_E,\mathsf{msg}_{init})$ from RP
\begin{enumerate}[leftmargin=*]
  \item sets $\sigma'=acid$, verifies the proof $\pi_2$
  $$(0/1) \leftarrow \mathsf{VerifyProof}(\pi_2,\sigma', \bar{\sigma}',\mathsf{rpk}'_{IK},(\mathsf{isk}_c,\mathsf{ivk}_c))$$
  \item derives the shared session key $K_S$ with the mirrored responder routine
  \begin{align*}
    K_S \leftarrow \mathsf{X3DH}_{\mathsf{IdP}}(\mathsf{isk}_{IK}, \mathsf{isk}_{E}, \mathsf{rpk}'_{IK}, \mathsf{rpk}_{E})
  \end{align*}
  \item reconstructs the associated data $\mathsf{AD} = (\mathsf{rpk}'_{IK}|| \mathsf{ipk}_{IK})$ and decrypt $ \mathsf{AEAD.Dec}_{K_S}(\mathsf{msg}_{init},\mathsf{AD})$.
  \item If $\mathsf{msg}_{init}$ decrypts successfully, the IdP accepts the $K_S$; else aborts.
\end{enumerate}
\begin{mdframed}
\reon
\begin{center}
    \textbf{X3DH Routines of $\mathsf{AnonyAKA}$}
\end{center}
\textbf{Inputs:} \ret{the identity keys of both parties, the RP's ephemeral key, and the IdP's signed prekey.}

\textbf{Output:} shared session key $K_S$.

\noindent $\underline{\mathsf{X3DH}_{\mathsf{RP}}(\mathsf{rsk}'_{IK}, \mathsf{rsk}_E, \mathsf{ipk}_{IK}, \mathsf{ipk}_{E}) \rightarrow K_S}$, run by the RP as initiator:
\begin{align*}
dh_1 &= \mathsf{DH}(\mathsf{rsk}'_{IK}, \mathsf{ipk}_{E}),\quad
dh_2 = \mathsf{DH}(\mathsf{rsk}_E, \mathsf{ipk}_{IK})\\
dh_3 &= \mathsf{DH}(\mathsf{rsk}_E, \mathsf{ipk}_{E}), \quad
K_S = \mathsf{KDF}(dh_1||dh_2||dh_3)
\end{align*}

\noindent $\underline{\mathsf{X3DH}_{\mathsf{IdP}}(\mathsf{isk}_{IK}, \mathsf{isk}_{E}, \mathsf{rpk}'_{IK}, \mathsf{rpk}_{E}) \rightarrow K_S}$, run by the IdP as responder:
\begin{align*}
dh_1 &= \mathsf{DH}(\mathsf{isk}_{E}, \mathsf{rpk}'_{IK}),\quad
dh_2 = \mathsf{DH}(\mathsf{isk}_{IK}, \mathsf{rpk}_{E})\\
dh_3 &= \mathsf{DH}(\mathsf{isk}_{E}, \mathsf{rpk}_{E}), \quad
K_S = \mathsf{KDF}(dh_1||dh_2||dh_3)
\end{align*}
\end{mdframed}
\end{footnotesize}
\end{mdframed}
\caption{Anonymous Authentication Key Agreement.} %
  \label{fig:Anonymous Authentication Key Exchange Protocol}
\end{figure}

\noindent\textbf{Notations.}
The notations used in the Token Refresh Protocol are defined as follows.
(i) $\text{RK}$, the root key, initialized from the session key $K_S$ established via the AAKA protocol. The ratchet state $(\text{RK},\text{CK})$ is held only by the RP and the IdP, the two parties that know $K_S$, while the broker keeps no key state and only relays headers and ciphertexts.
(ii) $\text{CK}$, the chain key, used by the IdP as a sending chain to derive symmetric message keys.
(iii) $\mathsf{KDF\_RK}(\text{RK}, dh)$, a $\mathsf{KDF}$ that updates the root key and produces a new chain key from the current root key and a fresh DH shared secret $dh$.
(iv) $\mathsf{KDF\_CK}(\text{CK})$, a $\mathsf{KDF}$ that derives the next chain key and a message key $\mathsf{MK}$ from the current chain key.

\noindent\textbf{AAKA Protocol $\mathsf{AnonyAKA}$}\label{app:AAKE Protocol}
This protocol is described in Sections~\ref{sec:Authorization Request} and~\ref{sec:Token Request}. Figure~\ref{fig:Anonymous Authentication Key Exchange Protocol} shows the Anonymous Authentication Key Agreement (AAKA) protocol, which is an end-to-end session key establishment protocol that allows the IdP to verify RP legitimacy under anonymity and derive a unique session key for each authorization process.

\noindent\textbf{Proof Generation and Verification.}\label{app:Proof Generation and Verification}
The algorithms are invoked in the protocol flow described in Section \ref{sec:Token Request}.
Figure \ref{fig:A Proof Generation Algorithm in AAKE Protocol} shows (1) $\mathsf{GenProof}$, that produces a randomized signature $\bar{\sigma}'$ from $\mathcal{SC}$, an ephemeral public identity key $\mathsf{rpk}'_{IK}$ for the AAKA protocol, and a NIZK proof $\pi_2$; (2)$\mathsf{VerifyProof}$, that allows the IdP to check the validity of $\pi_2$.  The NIZK proof $\pi_2$ is obtained by applying the Fiat--Shamir transform to a
standard $\Sigma$-protocol, and its zero-knowledge and soundness properties
follow directly in the random oracle model. 

\noindent\textbf{Token Refresh Protocol $\mathsf{RefToken}$.} \label{app:Refresh Token Protocol}
This protocol is described in Section~\ref{sec:Token Refresh}. Figure~\ref{fig:Token Refresh Protocol with Key Rotation Mechanism} shows a token refresh protocol that employs a double-ratchet–based key rotation mechanism to protect token confidentiality from broker surveillance.

\begin{figure}[t]
\begin{mdframed}[innerleftmargin=3.5pt, innerrightmargin=3.5pt, innertopmargin=3pt, innerbottommargin=3pt]
\begin{footnotesize}
\begin{flushleft}
$\underline{(\pi_2, \bar{\sigma}', (\mathsf{rsk}'_{IK},\mathsf{rpk}'_{IK})) \leftarrow \mathsf{GenProof}(\mathcal{SC},acid,\mathsf{ivk}_c)}$

\noindent \textbf{RP inputs:}
\begin{itemize}[leftmargin=*]
    \item the secret credential $\mathcal{SC}=(D,\mathsf{rsk}_{IK}, \bar{\sigma})$.
    \item $acid=\sigma' = (\sigma'_1, \sigma'_2)$ with randomness $k_1, t_1$
    \item credential verification key $\mathsf{ivk}_c$.
\end{itemize}
\textbf{Output:} a NIZK proof $\pi_2=(\vec{C},\vec{R},\vec{z})$, an ephemeral identity keypair $(\mathsf{rsk}'_{IK},\mathsf{rpk}'_{IK})$, randomized signature $\bar{\sigma}'$.

\noindent \textbf{RP executes the following steps:}
\begin{enumerate}[leftmargin=*]
    \item Randomize signature $\bar{\sigma}$ in $\mathcal{SC}$:
\begin{align*}
k_2,t_2 \narrowxleftarrow{\$} \mathbb{Z}^*_p, \bar{\sigma}'=(\bar{\sigma}_1^{'},\bar{\sigma}_2^{'})=(\bar{\sigma}_1^{{k_2}},(\bar{\sigma}_2 \bar{\sigma}_1^{{t_2}})^{k_2})
\end{align*}
\item Generate an ephemeral public identity key:
\begin{align*}
t_3 \narrowxleftarrow{\$} \mathbb{Z}^*_p, \mathsf{rsk}'_{IK} = t_3 + \mathsf{rsk}_{IK},  \mathsf{rpk}'_{IK} = g_1^{t_3 + \mathsf{rsk}_{IK}}    
\end{align*}
\item Sample randomness: $r_1, r_2, r_3, r_D,r_{rsk}  \narrowxleftarrow{\$} \mathbb{Z}^*_p$
\item Compute commitment $\vec{C}=(C_1, C_2, C_3)$  and response commitment $\vec{R} = (R_1, R_2, R_3)$
\begin{align*}
  C_1 &= g_2^{t_1} \cdot \bar{Y}_1^D, \quad & R_1 &= g_2^{r_1} \cdot \bar{Y}_1^{r_D} \\
    C_2 &= g_2^{t_2} \cdot \bar{Y}_{21}^{\mathsf{rsk}_{IK}} \cdot \bar{Y}_{22}^D, \quad & R_2 &= g_2^{r_2} \cdot \bar{Y}_{21}^{r_{\mathsf{rsk}}} \cdot \bar{Y}_{22}^{r_D} \\
    C_3 &= g_1^{t_3} \cdot g_1^{\mathsf{rsk}_{IK}}, \quad & R_3 &= g_1^{r_3} \cdot g_1^{r_{\mathsf{rsk}}}
\end{align*}
\item Compute challenge $c = H(C_1||C_2||C_3||R_1||R_2||R_3)$
\item Compute response vector $\vec{z} = (z_1, z_2, z_3, z_D, z_{\mathsf{rsk}})$
  \begin{align*}
    z_1 &= r_1 + c \cdot t_1, \quad
    z_2 = r_2 + c \cdot t_2, \quad
    z_3 = r_3 + c \cdot t_3 \\
    z_D &= r_D + c \cdot D, \quad
    z_{\mathsf{rsk}} = r_{\mathsf{rsk}} + c \cdot \mathsf{rsk}_{IK}
  \end{align*}
\end{enumerate}

\end{flushleft}
\begin{flushleft}
$\underline{ (0/1) \leftarrow \mathsf{VerifyProof}(\pi_2,\sigma', \bar{\sigma}',\mathsf{rpk}'_{IK},(\mathsf{isk}_c,\mathsf{ivk}_c))}$

\noindent \textbf{IdP inputs:}
\begin{itemize}[leftmargin=*]
    \item two signatures $\sigma', \bar{\sigma}'$ and NIZK proof $\pi_2$
    \item credential keypair $(\mathsf{isk}_c,\mathsf{ivk}_c)$ and public key $\mathsf{rpk}'_{IK}$.
\end{itemize}
\textbf{Output:} accept iff all checks pass; otherwise reject.

\noindent \textbf{IdP executes the following steps:}

\begin{enumerate}[leftmargin=*]
    \item extract $\vec{C},\vec{R},\vec{z}$ from $\pi_2$;
    \item re-compute the challenge $c = H(C_1||C_2||C_3||R_1||R_2||R_3)$
    \item check the following equations:
 \begin{align*}
     C_1^c \cdot R_1 &\stackrel{?}{=} g_2^{z_1} \cdot \bar{Y}_1^{z_D}\\ \quad C_2^c \cdot R_2 &\stackrel{?}{=} g_2^{z_2} \cdot \bar{Y}_{21}^{z_{\mathsf{rsk}}} \cdot \bar{Y}_{22}^{z_D} \\
 C_3^c \cdot R_3 &\stackrel{?}{=} g_1^{z_3} \cdot g_1^{z_{\mathsf{rsk}}}
 \end{align*}
    \item check $C_3 = \mathsf{rpk}'_{IK}$ and the two pairing equations:
\[ e(\sigma_{1}', C_1 \cdot \bar{X}_1) \stackrel{?}{=} e(\sigma_{2}', g_2), \quad e(\bar{\sigma}_{1}', C_2 \cdot \bar{X}_2) \stackrel{?}{=} e(\bar{\sigma}_{2}', g_2)\]
\end{enumerate}
\end{flushleft}
\end{footnotesize}
\end{mdframed}
\caption{The Proof Algorithm in AAKA Protocol} %
  \label{fig:A Proof Generation Algorithm in AAKE Protocol}
\end{figure}
\begin{figure}[t]
\begin{mdframed}[innerleftmargin=3.5pt, innerrightmargin=3.5pt, innertopmargin=3pt, innerbottommargin=3pt]
\begin{footnotesize}
\begin{center}
$\mathsf{RefToken}$ between RP, IdP and Broker
\end{center}

\textbf{Input:} 
\begin{itemize}[leftmargin=*]
    \item RP and IdP hold the session key $K_S$ established from AAKA protocol.
    \item Broker holds the $\mathsf{refresh\_token}$
\end{itemize}
\textbf{Output:} a new $\mathsf{access\_token}$ for RP.

\textbf{\re{Initialization:}}
\begin{itemize}[leftmargin=*]
\reon
\item RP sets its initial state:\\
-- initializes $\text{RK} \leftarrow K_S$, $\text{CK} \leftarrow \bot$\\
-- pre-computes a list of DH key pairs $\{(\mathsf{rsk}_i^{dh}, \mathsf{rpk}_i^{dh})\}_{i=0}^{n-1}$\\
-- sets current DH key state: $\text{DH}_A^{rp} \leftarrow \bot $, $\text{DH}_B^{rp} \leftarrow \bot$
\item Broker sets its initial state:\\
  -- initializes $seq \leftarrow 0$\\
  -- loads $\{\mathsf{rpk}_i^{dh}\}_{i=0}^{n-1}$ from RP\\
  -- sets current DH key state: $\text{DH}_A^{idb} \leftarrow \mathsf{rpk}_{0}^{dh}$, $\text{DH}_B^{idb} \leftarrow \bot$\\
  -- holds no ratchet key state, since it never learns $K_S$
\item IdP sets its initial state:\\
  -- initializes $\text{RK} \leftarrow K_S$, $\text{CK} \leftarrow \bot$\\
  -- generates DH key pair $(\mathsf{isk}^{dh}, \mathsf{ipk}^{dh})$\\
  -- sets current DH key state $\text{DH}_A^{idp} \leftarrow \bot$, $\text{DH}_B^{idp} \leftarrow \mathsf{ipk}^{dh}$
\end{itemize}

\textbf{Broker:}\\
\textbf{On receiving} token request from the RP and the access token is expired:
\begin{enumerate}[leftmargin=*]
  \item constructs message header $\mathsf{hdr} = \text{DH}_A^{idb}$
  \item generates token request message: $\mathsf{msg} = (\mathsf{hdr}, seq, \mathsf{refresh\_token})$
\end{enumerate}

\textbf{On receiving} $(\mathsf{hdr}, seq, c)$ from the IdP:
\begin{enumerate}[leftmargin=*]
  \item if $\text{DH}^{idb}_B \neq \mathsf{hdr}$:
  \begin{itemize}
    \item updates $	\text{DH}_B^{idb} \leftarrow \mathsf{hdr}$
    \item advances local key: $seq \leftarrow seq+1$, $ \text{DH}_A^{idb} \leftarrow \mathsf{rpk}_{seq}^{dh}$
  \end{itemize}
\end{enumerate}

\textbf{IdP:}\\
\textbf{On receiving} $\mathsf{msg} = (\mathsf{hdr}, seq, \mathsf{refresh\_token})$ from the broker
\begin{enumerate}[leftmargin=*]
  \item if $	\text{DH}_A^{idp} \neq \mathsf{hdr}$:
  \begin{itemize}
    \item updates $	\text{DH}_A^{idp} \leftarrow \mathsf{hdr}$
    \item generates new $(\mathsf{isk}^{dh}, \mathsf{ipk}^{dh})$ and updates $\text{DH}_B^{idp} \leftarrow \mathsf{ipk}^{dh}$
    \item updates $(\text{RK}, \text{CK}) \leftarrow \mathsf{KDF\_RK}(\text{RK}, \mathsf{DH}(\mathsf{isk}^{dh}, 	\text{DH}_A^{idp}))$
  \end{itemize}
  \item derives $\text{CK}, \text{MK} \leftarrow \mathsf{KDF\_CK}(\text{CK})$
  \item encrypts $c = \mathsf{ENC}_{\text{MK}}(m)$
  \item construct message header $\mathsf{hdr} = \text{DH}_B^{idp}$
  \item sends $(\mathsf{hdr}, seq, c)$ to RP via Broker
\end{enumerate}

\textbf{RP:}\\
\textbf{On receiving} $(hdr, seq, c)$ from the broker:
\begin{enumerate}[leftmargin=*]
\item if $\text{DH}_B^{rp} \neq \mathsf{hdr}$:
  \begin{itemize}
  \item updates $\text{DH}_B^{rp} \leftarrow \mathsf{hdr}$
  \item gets private key $\mathsf{rsk}_{dh} \leftarrow \mathsf{rsk}_{seq}^{dh}$
  \item updates $(\text{RK}, \text{CK}) \leftarrow \mathsf{KDF\_RK}(\text{RK}, \mathsf{DH}(\mathsf{rsk}_{dh}, 	\text{DH}_B^{rp}))$
  \end{itemize}
  \item derives new message key $\text{CK}, \text{MK} \leftarrow \mathsf{KDF\_CK}(\text{CK})$
  \item decrypts $m = \mathsf{DEC}_{\text{MK}}(c)$
\end{enumerate}
\end{footnotesize}
\end{mdframed}
\caption{Token Refresh with Key Rotation Mechanism}
  \label{fig:Token Refresh Protocol with Key Rotation Mechanism}
\end{figure}

\section{Trusted Computation Agent}\label{app:TCA}

The Trusted Computation Agent (TCA) is the sole client-side trusted compute unit of \name{}, instantiated inside the user's browser as a cross-origin \texttt{<iframe>}. It executes user-side cryptographic operations on the user's behalf, removing the cryptographic burden from the RP client and avoiding any user-side deployment such as a browser extension. As the sole \emph{client-side trust anchor} of \name{}, the TCA must satisfy two security goals.
\begin{itemize}[leftmargin=*]
    \item \emph{Trusted execution of cryptographic operations.} The TCA must faithfully execute the prescribed user-side cryptographic operations, in particular $\mathsf{AC.VerifyCred}$ for consent-time RP verification and the OPRF subprotocol for PIN-derived key generation, so that the $D$ rendered on the consent page is exactly the value the IdP signed into $acid$ and the management key $K_m$ is derived from a genuine OPRF evaluation rather than a malicious substitute.
    \item \emph{Confidentiality of process data.} Intermediate values produced or held during TCA execution, including the PIN, the management key $K_m$, the OPRF blinding factor, and the plaintext $D$, must not become observable to the IdP back end, the broker, or any third-party script in the user's browser.
\end{itemize}

\noindent\textbf{Implementation Details.}
The TCA is loaded as a cross-origin \texttt{<iframe>} embedded in the relevant IdP-hosted pages and performs two functions tied to specific \name{} phases.
\begin{itemize}[leftmargin=*]
    \item \emph{Consent-time RP verification} (Section~\ref{sec:User Authentication and Consent}). The broker iframe forwards $(cid, acid, \pi_D, D)$ to the TCA via \texttt{window.postMessage}. The TCA then parses $\pi_D = t_1$ and evaluates $\mathsf{AC.VerifyCred}(\mathsf{ivk}_c, D, acid, \pi_D)$, which derandomizes $acid$ with $t_1$ before checking the signature, and, on success, renders $D$ inside its own iframe surface for user confirmation.
    \item \emph{Private authorization management} (Section~\ref{sec:Private Authorization Management}). The broker iframe delivers $(acid, D_{\mathsf{Enc}})$, the TCA prompts the user for a short PIN, runs the OPRF subprotocol with the IdP to derive $K_m = \mathsf{KDF}(\mathsf{Unblind}(r', \mathsf{BlindEval}(\mathsf{k}_{oprf}, \mathsf{Blind}(\mathsf{PIN}))))$, decrypts $D \leftarrow \mathsf{Dec}_{K_m}(D_{\mathsf{Enc}})$, renders the plaintext $D$ for review or revocation, and emits only $(acid, \text{decision})$ to the broker.
\end{itemize}

\noindent\textbf{TCA Security.}
In practice, the TCA's security is achieved not through bespoke cryptographic constructions but through a small set of concrete frontend deployment strategies that modern browsers provide natively. We combine three such standard web-platform primitives to jointly enforce the two goals above.
\begin{itemize}[leftmargin=*]
    \item \emph{Browser same-origin policy.} Because the TCA, the broker iframe, and the IdP page each live in distinct origins, the browser same-origin policy prevents any IdP-hosted script from reading TCA-internal state. The only cross-frame channel is \texttt{window.postMessage}. Each sender pins the recipient with an explicit \texttt{targetOrigin}, and the receiving frame accepts a message only after checking \texttt{event.origin} and \texttt{event.source} against the expected counterpart, validating the message structure, and matching a fresh per-session nonce bound to the session identifier, so that cross-origin injection and replayed messages are rejected.
    \item \emph{Runtime integrity via Subresource Integrity.} The TCA document loads its external script bundle through an SRI-pinned \texttt{<script>} element, so the browser refuses to execute any bundle whose bytes deviate from the pinned hash, binding runtime execution to the exact release named by the versioned bundle URL. Users can also recompute the bundle hash on demand to independently verify its integrity.
    \item \emph{Auditable and upgradeable provenance.} The TCA bundle is served from a publicly auditable host independent of any IdP, broker, or RP, and the reference hash of every release is recorded in an append-only binary transparency log analogous to Certificate Transparency. This supports controlled version upgrades while making silent divergence between users, as well as retroactive replacement of a past release, globally detectable.
\end{itemize}

\noindent\textbf{Hardening against a Malicious IdP.}
The mechanisms above suffice under our honest-but-curious model. Stronger guarantees against an actively malicious IdP can be obtained by redistributing the canonical TCA bundle through out-of-band channels (e.g., a browser extension, a signed WebAssembly module, or RP-operated transparency-log mirrors), so that bundle delivery no longer depends on the IdP-served \texttt{<iframe>} tag. A strictly malicious IdP combining a backdoored bundle, a forged TCA host, or browser zero-days falls outside this model and is left to future work, aligning with how prior privacy-preserving SSO systems such as ARPSSO~\cite{he2024arpsso} treat equivalent client-side components.

\section{Additional Discussion} \label{app:additional discussion}
\re{
\noindent\textbf{Scope of RP Anonymity.}
RP anonymity in \name{} is defined relative to the IdP rather than to the broker. The protected property is the \emph{user--RP authorization relationship}, namely which user has authorized which RP, and it is the IdP from which this relationship stays hidden. The broker needs $D$ and $cid$ in the clear to retrieve the RP configuration, route the request, and bind the IdP response back to the RP session, and the $\mathsf{BrokerView}$ outputs in Figure~\ref{fig:ideal-orionlink} grant this metadata explicitly. Knowing the RP therefore only lets the broker associate it with the per-RP pseudonym $uid_{rp}$, which carries no user attributes and stays unlinkable across RPs, so Definitions~\ref{def:untraceability} and~\ref{def:unlinkability} are unaffected.

\noindent\textbf{Discussion of Forward Secrecy.}
\ret{In our X3DH variant the RP contributes session-level secrets, the ephemeral
$\mathsf{rsk}_E$ and the randomizer $t_3$, which it erases after each session,
while the IdP contributes its long-lived identity key
$\mathsf{isk}_{IK}$ and a signed prekey $\mathsf{isk}_E$ refreshed once per
epoch. Losing both IdP secrets exposes one prekey epoch rather than all past
sessions, and shortening the epoch interval mitigates this exposure.}

\ret{\name{} can be upgraded to session-level forward secrecy at an acceptable
additional cost. In step~\circnumtwo{8} of
Figure~\ref{fig:overview of OrionLink} the IdP samples a one-time key pair
$(\mathsf{isk}_O,\mathsf{ipk}_O)$, derives
$K_S'=\mathsf{KDF}(K_S\|\mathsf{DH}(\mathsf{isk}_O,\mathsf{rpk}_E))$, erases
$\mathsf{isk}_O$, and returns $\mathsf{ipk}_O$ alongside $\mathsf{info}_{Enc}$.
On receiving the response, the RP rebuilds $K_S'$ from $\mathsf{rsk}_E$, and
both sides use it in place of $K_S$. Refreshed tokens already ratchet in fresh
secrets and are unaffected.}

\noindent\textbf{Hardening Private Authorization Management.}
The private authorization management of Section~\ref{sec:Private Authorization Management} is an optional feature that lets a user review and revoke authorizations from any device with a memorized PIN. Because the key is derived through an OPRF in the manner of the aPAKE design OPAQUE~\cite{jarecki2018opaque}, no party that lacks the OPRF key $\mathsf{k}_{oprf}$ can mount an offline dictionary attack on the PIN, which covers the broker, the RPs, a network attacker, and anyone who obtains the stored $D_{\mathsf{Enc}}$.
One limitation remains, since the IdP both holds $\mathsf{k}_{oprf}$ and stores $D_{\mathsf{Enc}}$ and can therefore enumerate a short PIN offline in principle. A deployment can remove this capability with a $(2,2)$-distributed OPRF evaluated jointly by the IdP and the broker~\cite{jarecki2014round}, which turns every guess into a rate-limited online interaction under our non-collusion assumption, or with a key bound to user-side hardware such as a WebAuthn passkey, at some cost in convenience.

\noindent\textbf{Post-Quantum Migration.}
\name{} composes standard primitives, so a post-quantum upgrade can proceed primitive by primitive. The X3DH core of AAKA can be replaced by PQXDH~\cite{kret2023pqxdh}, which Signal already deploys at scale, and the PS-based credential by a lattice-based anonymous credential scheme~\cite{bootle2023framework} exposing the same issuance and presentation interface. The DH-based OPRF that derives $uid_{rp}$ and $K_m$ has lattice-based candidates~\cite{albrecht2021round}. A full post-quantum instantiation and its evaluation remain future work.
}
\section{Preliminary}
In this section, we describe the security properties required of the
building blocks used in our construction.
Let $\lambda$ denote the security parameter, and let $\mathcal{A}$ be a probabilistic polynomial-time (PPT) adversary.
\subsection{Security Properties of OPRFs}\label{sec:Security Properties of OPRFs}
Under the Decisional Diffie--Hellman (DDH) problem \cite{boneh1998decision} in the random oracle model, the OPRF used in our construction satisfies correctness,
consistency, and pseudorandomness.
\begin{enumerate}[leftmargin=*]
\item \textit{Correctness.}
For any $\lambda$, any secret key $k$, and any input $x$,
an honest execution of the $\mathsf{Blind}(\cdot)$, $\mathsf{BlindEval}(\cdot)$, and
$\mathsf{Unblind}(\cdot)$ algorithms outputs $\mathsf{OPRF}_k(x)$,
except with $\negl(\lambda)$.
\item \textit{Consistency.}
For any $\lambda$, for any $\mathcal{A}$, there exists a negligible function
$\negl(\cdot)$ such that
\[
\Pr[\mathsf{Exp}^{\mathsf{cons}}_{\mathsf{OPRF}}(\mathcal{A},\lambda)=1]
\ge 1-\negl(\lambda),
\]
where the experiment $\mathsf{Exp}^{\mathsf{cons}}_{\mathsf{OPRF}}(\mathcal{A},\lambda)$
is defined as follows:
\begin{itemize}[leftmargin=*]
\item The challenger samples a secret key $k \xleftarrow{\$} \mathbb{Z}_p$
and an input $x$.
\item The challenger runs two independent honest executions of the OPRF
protocol on input $x$, obtaining outputs $y_1$ and $y_2$.
\item The experiment outputs $1$ iff $y_1 = y_2$.
\end{itemize}
\item \textit{Pseudorandomness.}
For any $\lambda$, for any $\mathcal{A}$, there exists a negligible function $\negl(\cdot)$ such that
\[
\Big| \Pr[\mathsf{Exp}^{\mathsf{pseud}}_{\mathsf{OPRF}}(\mathcal{A},\lambda)=1] - \tfrac12 \Big| \le \negl(\lambda)
\]
define the experiment $\mathsf{Exp}^{\mathsf{pseud}}_{\mathsf{OPRF}}(\mathcal{A},\lambda)$ as follows:
\begin{itemize}[leftmargin=*]
\item The challenger samples a secret key
$k \xleftarrow{\$} \mathbb{Z}_p$
and keeps it secret.

\item $\mathcal{A}$ is given adaptive oracle access to $\mathsf{OPRF}_k(\cdot)$, implemented via the honest $\mathsf{BlindEval}(\cdot)$ and $\mathsf{Unblind}(\cdot)$.

\item Eventually, $\mathcal{A}$ outputs a challenge input $x'$ that has not been queried before.

\item The challenger samples a random bit $b \xleftarrow{\$} \{0,1\}$.
If $b=0$, it sets $y' := \mathsf{OPRF}_k(x')$;
if $b=1$, it sets $y' \xleftarrow{\$} \mathbb{G}$.
The challenger returns $y'$ to $\mathcal{A}$.
\item $\mathcal{A}$ outputs a guess $b'$.
\item The experiment outputs $1$ iff $b'=b$.
\end{itemize}
\end{enumerate}

\subsection{Security Properties of ACs} \label{sec:Security Properties of Anonymous Credential}
Under the XDH assumption, implying the hardness of the Computational
co-Diffie--Hellman (co-CDH) problem in $\mathbb{G}_1$ and $\mathbb{G}_2$ and the
Decisional Diffie--Hellman (DDH) problem in $\mathbb{G}_1$ \cite{pointcheval2016short}, where all zero-knowledge proofs are instantiated using the Fiat--Shamir transform of underlying $\Sigma$-protocols. The anonymous credential protocol that we used satisfies correctness, unforgeability, blindness, unlinkability/zero-knowledge. 
\begin{enumerate}[leftmargin=*]
\item \textit{Correctness.}
For any $\lambda$, and any attribute vector
$A=(A_h,A_p)$ such that $\phi(A_h)=1$, where $\phi$ is an efficiently computable predicate on hidden attributes, it holds that
\[
\Pr\left[
\begin{aligned}
&(pp)\leftarrow \mathsf{AC.Setup}(1^\lambda);\\
&(sk,vk) \leftarrow \mathsf{AC.KeyGen}(pp,n);\\
&(A,\sigma)\leftarrow \mathsf{AC.IssueCred}(pp,sk,A_h,A_p);\\
&(A'_h,A'_p,\pi,\phi)\leftarrow \mathsf{AC.PresentCred}(vk,\sigma,A):\\
&\mathsf{AC.VerifyCred}(vk,A'_p,\pi,\phi)=1
\end{aligned}
\right]=1.
\]

\item \textit{Unforgeability.}
For any $\lambda$, for any $\mathcal{A}$, there exists a negligible function $\negl(\cdot)$ such that
\[
\Pr\!\left[\mathsf{Exp}^{\mathsf{uf}}_{\mathsf{AC}}(\mathcal{A},\lambda)=1\right]
\le \negl(\lambda).
\]
where the experiment
$\mathsf{Exp}^{\mathsf{uf}}_{\mathsf{AC}}(\mathcal{A},\lambda)$ is defined as follows:

\begin{itemize}[leftmargin=*]
\item The challenger runs
$$pp \leftarrow \mathsf{AC.Setup}(1^\lambda),
(sk,vk) \leftarrow \mathsf{AC.KeyGen}(pp,n)$$
and gives $(pp,vk)$ to $\mathcal{A}$.

\item $\mathcal{A}$ is given oracle access to $\mathsf{AC.IssueCred}(\cdot)$ and $\mathsf{AC.PresentCred}(\cdot)$. 
The challenger maintains a set $\mathcal{Q}$ containing all pairs
$(A,\pi)$ that arise from honest executions of these oracles.
\item Eventually, $\mathcal{A}$ outputs a tuple
$(\tilde{A},\tilde{\sigma},\tilde{\pi},\phi)$.
\item The experiment outputs $1$ if
\[
\mathsf{AC.VerifyCred}(\tilde{A},\tilde{\sigma},\tilde{\pi},\phi)=1
\quad\text{and}\quad
(\tilde{A},\tilde{\pi}) \notin \mathcal{Q};
\]
otherwise, the experiment outputs $0$.
\end{itemize}
\item \textit{Blindness.}
For any $\mathcal{A}$ controlling the issuer, there exists a negligible
function $\negl(\cdot)$ such that
\[
\Big|\Pr\!\left[\mathsf{Exp}^{\mathsf{blind}}_{\mathsf{AC}}(\mathcal{A},\lambda)=1\right]-\tfrac12\Big|
\le \negl(\lambda).
\]
where the experiment $\mathsf{Exp}^{\mathsf{blind}}_{\mathsf{AC}}(\mathcal{A},\lambda)$  proceeds as follows:
\begin{itemize}[leftmargin=*]
  \item $\mathcal{A}$ samples two attribute vectors
  $A^{(0)}=(A_h^{(0)},A_p)$ and
  $A^{(1)}=(A_h^{(1)},A_p)$ such that
  $\phi(A_h^{(0)})=\phi(A_h^{(1)})=1$.
  \item A challenger samples $b\leftarrow\{0,1\}$ and runs
  \[
  (A,\sigma)\leftarrow \mathsf{AC.IssueCred}(pp,sk,A^{(b)}),
  \]
  interacting with $\mathcal{A}$ as the issuer.
  \item $\mathcal{A}$ outputs a guess $b'$.
   \item The experiment outputs $1$ iff $b'=b$.
\end{itemize}

\item \textit{Unlinkability/Zero-Knowledge.}
For any $\mathcal{A}$, including one that generates the key pair $(sk,vk)$ itself and therefore controls both credential issuance and verification, there exists a negligible function $\negl(\cdot)$ such that
\[
\Big|\Pr\!\left[\mathsf{Exp}^{\mathsf{unlink}}_{\mathsf{AC}}(\mathcal{A},\lambda)=1\right]-\tfrac12\Big|
\le \negl(\lambda).
\]
where the experiment $\mathsf{Exp}^{\mathsf{unlink}}_{\mathsf{AC}}(\mathcal{A},\lambda)$ is defined as follows:
\begin{itemize}[leftmargin=*]
  \item $\mathcal{A}$ generates the key pair $(sk,vk)$ and outputs $vk$ together with a predicate $\phi$.
  \item The challenger samples two attribute vectors
  $A^{(0)}=(A_h^{(0)},A_p)$ and
  $A^{(1)}=(A_h^{(1)},A_p)$ such that
  $\phi(A_h^{(0)})=\phi(A_h^{(1)})=1$.
  \item The challenger obtains two credentials
  $(A^{(0)},\sigma^{(0)})$ and
  $(A^{(1)},\sigma^{(1)}) $ by executing $\mathsf{AC.IssueCred}$ with $\mathcal{A}$ as the issuer.

  \item A bit $b\leftarrow\{0,1\}$ is sampled.
  \item If $b=0$, the challenger runs
  \[
  \begin{aligned}
  &T_1 \leftarrow \mathsf{AC.PresentCred}(vk,\sigma^{(0)},A^{(0)}),\\
  &T_2 \leftarrow \mathsf{AC.PresentCred}(vk,\sigma^{(0)},A^{(0)}).
  \end{aligned}
  \]
  \item If $b=1$, the challenger runs
  \[
  \begin{aligned}
  &T_1 \leftarrow \mathsf{AC.PresentCred}(vk,\sigma^{(0)},A^{(0)}),\\
  &T_2 \leftarrow \mathsf{AC.PresentCred}(vk,\sigma^{(1)},A^{(1)}).
  \end{aligned}
  \]
  \item The challenger gives $(T_1,T_2)$ to $\mathcal{A}$, which outputs a guess $b'$.
  \item The experiment outputs $1$ iff $b'=b$.
\end{itemize}

\end{enumerate}

\section{Security Proof of \textit{OrionLink}} \label{app:security proof}

Recall Theorem \ref{thm:security}. We now prove that the protocol in Figure~\ref{fig:The OrionLink Protocol} securely realizes $\mathcal{F}_{\mathsf{OrionLink}}$. Specifically, we show that for any real-world  adversary $\mathcal{A}$, we can construct an ideal world simulator $\mathcal{S}$, such  that for all environments $\mathcal{Z}$, the ideal execution with $\mathcal{S}$ is indistinguishable from the real execution with $\mathcal{A}$. Throughout, parties communicate over authenticated secure channels, $\mathcal{F}_{\mathsf{OrionLink}}$ delivers its outputs immediately.

We consider four distinct adversary models based on the trust models:
\begin{enumerate}
    \item \textbf{Malicious RP ($\mathcal{A}_{RP}$):} Controls the RP to violate authentication correctness or user identity unlinkability.
    \item \textbf{Semi-Honest IdP ($\mathcal{A}_{IdP}$):} Follows the protocol but attempts to trace user behavior (i.e., link users with the RPs they access).
    \item \textbf{Semi-Honest Broker ($\mathcal{A}_{\mathsf{IdB}}$):} Follows the protocol but attempts to violate broker obliviousness (i.e., learn user sensitive data).
    \re{\item \textbf{Joint Corruption ($\mathcal{A}_{\mathsf{Joint}}$):} The semi-honest broker pools its view with an arbitrary set of malicious colluding RPs, attempting to link user pseudonyms across RPs.}
\end{enumerate}

For the sake of completeness of the threat model, we consider users to be potentially malicious. However, since OrionLink inherits the user management and authentication mechanisms of standard SSO systems, it also inherits their user authentication correctness guarantees.

\subsection{UC Simulation Proof}
We prove Theorem~\ref{thm:security} through the four corruption cases listed above.
\subsubsection*{Case 1: Malicious RP ($\mathcal{A}_{\mathsf{RP}}$)}
Ensure that the RP cannot break authentication correctness or link a user’s identity across multiple RPs.

\textbf{Simulator} $\mathcal{S}_{\mathsf{RP}}$
runs $\mathcal{A}_{\mathsf{RP}}$ internally, and simulates the honest IdP, broker, and user.

\noindent\textbf{Setup.} 
$\mathcal{S}_{\mathsf{RP}}$ initializes the public parameters $\mathsf{pp}$ and samples the IdP's long-term keys exactly as in \textit{Initialization phase} in Figure \ref{fig:The OrionLink Protocol}.

\noindent\textbf{Registration.} Upon receives  $(D,\pi_1)$ from $\mathcal{A}_{\mathsf{RP}}$:
\begin{enumerate}[leftmargin=*]
\item  $\mathcal{S}_{RP}$ sends $(\mathsf{RegisterRP}, RP, D)$ to $\mathcal{F}_{\mathsf{OrionLink}}$, receives ($cid$).
\item $\mathcal{S}_{RP}$ signs as IdP with $cid$:
$\sigma \leftarrow \mathsf{PS.Sign}(\mathsf{isk}_c,D)$ and
$\hat{\bar{\sigma}}\leftarrow \mathsf{BlindSign}(\mathsf{isk}_c,D,\pi_1)$,
then returns $(cid,\sigma,\hat{\bar{\sigma}})$ to $\mathcal{A}_{\mathsf{RP}}$, and locally records $(cid,\mathcal{DC})$ in the broker state.
\end{enumerate}

\noindent\textbf{Login Request.} $\mathcal{S}_{\mathsf{RP}}$ receives $(cid,acid,\pi_D,M)$ from $\mathcal{A}_{\mathsf{RP}}$, where $acid$ is the randomized-signature session pseudonym, $\pi_D=t_1$ is the disclosed derandomizer, and $M=(\pi_2,\bar{\sigma}',\mathsf{rpk}'_{IK},\mathsf{rpk}_E,\mathsf{msg}_{init})$ carries the anonymous authentication proof and the key-agreement payload.
\begin{enumerate}[leftmargin=*]
  \item $\mathcal{S}_{RP}$ runs the honest IdP's verification of the presentation, evaluating $\mathsf{VerifyProof}(\pi_2,\sigma',\bar{\sigma}',\mathsf{rpk}'_{IK},(\mathsf{isk}_c,\mathsf{ivk}_c))$ with $\sigma'=acid$ exactly as in the real protocol, and records the outcome.
  \item $\mathcal{S}_{RP}$ sends $(\mathsf{InitLogin},RP,sid,cid,acid)$ to $\mathcal{F}_{\mathsf{OrionLink}}$. Because the RP is corrupted, $\mathcal{F}_{\mathsf{OrionLink}}$ adopts the supplied $acid$ and issues the query $(\mathsf{Judge},sid,cid,acid)$, which $\mathcal{S}_{RP}$ answers with the recorded verification outcome, so the ideal session proceeds exactly when the real IdP would accept the presentation. $\mathcal{S}_{RP}$ then relays the remaining messages to the intended receivers.
\end{enumerate}
\noindent\textbf{User Authentication \& Consent.}
\begin{enumerate}[leftmargin=*]
\item $\mathcal{S}_{RP}$ forwards $(D,\pi_D)$ to the simulated TCA and checks $\mathsf{AC.VerifyCred}(\mathsf{ivk}_c,D,acid,\pi_D)$, aborting if the check fails.
\item  $\mathcal{S}_{RP}$ sends $(\mathsf{ShowRP},RP,sid,cid,D)$ to $\mathcal{F}_{\mathsf{OrionLink}}$.
\item When the user returns $decision\in\{\mathsf{approve},\mathsf{deny}\}$, $\mathcal{S}_{RP}$ sends $(\mathsf{Authorize},User,sid,uid,decision)$ to $\mathcal{F}_{\mathsf{OrionLink}}$.
\end{enumerate}
\textbf{Login Response}. If $\mathcal{F}_{\mathsf{OrionLink}}$ outputs
$(\mathsf{AuthOK}, sid, uid_{rp}, Info)$, $\mathcal{S}_{\mathsf{RP}}$ proceeds as follows:
\begin{enumerate}[leftmargin=5mm]
\item $\mathcal{S}_{\mathsf{RP}}$ maps the ideal identifier $uid_{rp}$ to
$\mathbb{Z}_p$ by applying a hash-to-field function
$H_{\mathbb{Z}_p}$, extracts the group element $\ret{\sigma_1'}$ from the RP pseudonymous
$acid=\ret{(\sigma_1',\sigma_2')}$, and computes $auid:= \ret{\sigma_1'^{\,uid_{rp}}}$,
which is sent to $\mathcal{A}_{\mathsf{RP}}$ so that unblinding yields $uid_{rp}$.

\item
Based on the previously received $(acid,M)$
and the honest Id’s keys materials,
$\mathcal{S}_{RP}$ derives $K_S$ as the honset IdP does, and encrypts $\mathsf{info}_{Enc}=Enc_{K_S}(\mathsf{info})$.
\item Sends $(auid,\mathsf{info}_{Enc})$ to $\mathcal{A}_{\mathsf{RP}}$.
\end{enumerate}

Upon receiving $uid_{rp}$ from $\mathcal{A}_{\mathsf{RP}}$, the simulated broker records $(cid, uid_{rp})$.

Otherwise, $\mathcal{S}_{\mathsf{RP}}$ aborts the session and returns a
rejection to the RP, simulating a failed credential verification.

Now we argue that the ideal execution with $\mathcal{S}_{\mathsf{RP}}$ is indistinguishable from the real execution with $\mathcal{A}_{\mathsf{RP}}$.

\textbf{Hybrid} $\textbf{H}_0$
This is the real execution of $\Pi_{\mathsf{OrionLink}}$ with $\mathcal{A}_{\mathsf{RP}}$ controlling the RP.

\textbf{Hybrid} $\textbf{H}_1$ is the same as $H_0$, except that in the registration phase $\mathcal{S}_{\mathsf{RP}}$  computes
$(\sigma,\hat{\bar{\sigma}})$ using the same signing and blind-signing
algorithms as the real IdP and returns
$(cid,\sigma,\hat{\bar{\sigma}})$ to $\mathcal{A}_{\mathsf{RP}}$.
$\mathcal{S}_{\mathsf{RP}}$ samples all keys and randomness
according to the same distributions as the honest IdP, $H_0$ and $H_1$ are perfectly indistinguishable.

\textbf{Hybrid} $\textbf{H}_2$ is the same as $H_1$, except that 
during the login request and user-consent phases, $\mathcal{S}_{\mathsf{RP}}$ simply forwards all messages between the parties and the ideal functionality $\mathcal{F}_{\mathsf{OrionLink}}$. 

Hence, $H_1$ and $H_2$ are perfectly indistinguishable.

\textbf{Hybrid} $\textbf{H}_3$  is the same as $H_2$, except that the acceptance condition of a login session is made explicit
and aligned between the real world and the ideal functionality. In the real world, the IdP proceeds to generate a login response
if and only if
\[
1\leftarrow \mathsf{VerifyProof}(\pi_2,\sigma',\bar{\sigma}',
\mathsf{rpk}'_{IK},(\mathsf{isk}_c,\mathsf{ivk}_c)),
\]
otherwise the session aborts.
In $\mathcal{F}_{\mathsf{OrionLink}}$, the session is finalized
if and only if (i) the user has authorized the session and (ii) the finalize message carries the same $acid$ recorded in $\mathsf{Sess}[sid]$, a binding that $\mathcal{F}_{\mathsf{OrionLink}}$ created itself for an honest RP and, for the corrupted RP, created only after checking that $\mathsf{Owner}[cid]$ is corrupted and that the adversary replied $\mathsf{accept}$ to the $\mathsf{Judge}$ query.

Consider the two possible mismatch events.

\re{
(1) \textit{Real accepts but Ideal rejects.}
Suppose the IdP-side verification in the real world outputs $1$ while the ideal session is not finalized.
If the user did not authorize, the real world also produces no response.
An abort in the $\mathsf{Judge}$ step is impossible, because $\mathcal{S}_{\mathsf{RP}}$ answers that query by running exactly the real verification.
The remaining case is an abort because $\mathsf{Owner}[cid]$ is honest, meaning that $\mathcal{A}_{\mathsf{RP}}$ produced a presentation $(\sigma',\bar{\sigma}',\pi_2)$ that $\mathsf{VerifyProof}$ accepts for a $cid$ registered by an honest RP. We then construct $\mathcal{B}$ that wins $\mathsf{Exp}^{\mathsf{uf}}_{\mathsf{AC}}$ by outputting this tuple, which is accepted without arising from any credential issued to the adversary. By the unforgeability of $\mathsf{AC}$, this event occurs only with negligible probability.

(2) \textit{Ideal accepts but Real rejects.}
For the corrupted RP this event is excluded by construction, because $\mathcal{F}_{\mathsf{OrionLink}}$ finalizes only sessions whose presentation $\mathcal{S}_{\mathsf{RP}}$ accepted by running the real verification.
For an honest RP the event means that a presentation honestly derived from a valid issuer-generated credential is rejected by $\mathsf{VerifyCred}$, which contradicts the correctness of $\mathsf{AC}$ and hence occurs only with negligible probability.
}
Since both mismatch events are ruled out by the unforgeability and the correctness of $\mathsf{AC}$, $H_2$ and $H_3$ are computationally indistinguishable.

\textbf{Hybrid} $\textbf{H}_4$ is the same as $H_3$, except for the RP-specific pseudonym $auid$ returned to the RP. 
From the view of $\mathcal{A}_{\mathsf{RP}}$, the real user identity $uid$ is never
revealed. The RP only obtains $uid_{rp}$, which is pseudorandom except for consistency across sessions with the same User--RP pair.
$\mathcal{S}_{\mathsf{RP}}$ receives a uniformly random
$uid_{rp}$ from $\mathcal{F}_{\mathsf{OrionLink}}$. If two sessions correspond to the same
User--RP pair, $\mathcal{S}_{\mathsf{RP}}$ outputs consistent values of $auid$,
otherwise $auid$ is generated pseudorandomly according to the OPRF.

Hence, $H_3$ and $H_4$ are computationally indistinguishable under the pseudorandomness of the OPRF.

\textbf{Hybrid} $\textbf{H}_5$ is the same as $H_4$, except for the derivation of the session key $K_S$. $\mathcal{S}_{\mathsf{RP}}$ derives the session key $K_S$ by honestly executing
$\mathsf{X3DH}$ exactly as an honest IdP would do in the real world.

Hence, $H_4$ and $H_5$ are perfectly indistinguishable.

Since $H_0 \approx_c H_5$, the real execution with $\mathcal{A}_{\mathsf{RP}}$ is computationally indistinguishable from the ideal execution with $\mathcal{S}_{\mathsf{RP}}$ and $\mathcal{F}_{\mathsf{OrionLink}}$.

\subsubsection*{Case 2: Semi-Honest IdP ($\mathcal{A}_{\mathsf{IdP}}$)}
Ensure the IdP cannot identify which RP the user is accessing or link multiple sessions.

\textbf{Simulator $\mathcal{S}_{\mathsf{IdP}}$.}
Since the IdP is semi-honest, $\mathcal{S}_{\mathsf{IdP}}$ simulates the honest RP, broker and user, and relays all messages between $\mathcal{A}_{\mathsf{IdP}}$ and
$\mathcal{F}_{\mathsf{OrionLink}}$.

\textbf{Hybrid} $\textbf{H}_0$
This is the real execution of $\Pi_{\mathsf{OrionLink}}$ with $\mathcal{A}_{\mathsf{IdP}}$, This hybrid corresponds to the real execution of
$\Pi_{\mathsf{OrionLink}}$ with $\mathcal{A}_{\mathsf{IdP}}$.
In a login session, the IdP receives a login request forwarded by the broker,
consisting of
\[
(acid, M),
\quad
M=(\bar{\sigma}', \pi_2, \mathsf{rpk}'_{IK}, \mathsf{rpk}_E, \mathsf{msg}_{init}),
\]
which are the RP's anonymous credentials.

\textbf{Hybrid} $\textbf{H}_1$ is the same as $H_0$, except that the credential materials $(acid, \bar{\sigma}', \pi_2)$ are generated by the simulator $\mathsf{AC.PresentCred(\cdot)}$ and $\mathsf{GenProof}(\cdot)$, without using any RP-specific secret or identity.
The simulator produces a tuple $(acid^{\mathsf{sim}}, \bar{\sigma}'^{\mathsf{sim}}, \pi_2^{\mathsf{sim}}, {rpk}'^{\mathsf{sim}}_{IK})$, 
such that the IdP's verification procedure accepts 
and whose distribution is computationally indistinguishable from that of an honestly generated anonymous credential presentation.
$\mathcal{A}_{\mathsf{IdP}}$ cannot distinguish $H_0$ and $H_1$.
Otherwise, we can construct a $\mathcal{B}$ to win the unlinkability experiment
$\mathsf{Exp}^{\mathsf{unlink}}_{\mathsf{AC}}$ of the anonymous credential scheme
with non-negligible advantage.

Hence, $H_0$ and $H_1$ are computationally indistinguishable under the unlinkability of $\mathsf{AC}$.

\textbf{Hybrid} $\textbf{H}_2$ is the same as $H_1$, except that the remaining components
$(\mathsf{rpk}_E, \mathsf{msg}_{init})$ are generated according to their protocol
specification.
Since $\mathsf{rpk}_E$ is an ephemeral, one-time public key sampled uniformly at
random, and $\mathsf{msg}_{init}$ is an encryption of values derived solely from
such ephemeral randomness, these components are independent of the RP's
identity. $H_1$ and $H_2$ are perfectly indistinguishable.

Since $H_0 \approx_c H_2$, the real execution with $\mathcal{A}_{\mathsf{IdP}}$ is computationally indistinguishable from the ideal execution with $\mathcal{S}_{\mathsf{IdP}}$ and $\mathcal{F}_{\mathsf{OrionLink}}$.

\subsubsection*{Case 3: Semi-Honest Broker ($\mathcal{A}_{\mathsf{IdB}}$)}
Ensure that the view of $\mathcal{A}_{\mathsf{IdB}}$ can be simulated from the leakage $\mathcal{L}_{\mathsf{IdB}}$ alone, as required by Definition~\ref{def:broker-obliviousness}. In particular, the broker learns neither the user's real identity, nor the contents of $info$, nor the session key $K_S$.

\textbf{Simulator $\mathcal{S}_{\mathsf{IdB}}$.}
Since the broker is semi-honest, $\mathcal{S}_{\mathsf{IdB}}$ simulates the honest IdP, RP, and user, and relays all messages between $\mathcal{A}_{\mathsf{IdB}}$ and
$\mathcal{F}_{\mathsf{OrionLink}}$.

\textbf{Hybrid} $\textbf{H}_0$ The real world protocol execution.

\textbf{Hybrid} $\textbf{H}_1$ is the same as $H_0$, except that every message in the broker's view other than the ciphertext $\mathsf{info}_{Enc}$ is produced by $\mathcal{S}_{\mathsf{IdB}}$ from the leakage $\mathcal{L}_{\mathsf{IdB}}$ and the randomness of the simulated honest parties.
Upon $(\mathsf{BrokerView}, cid, D)$, $\mathcal{S}_{\mathsf{IdB}}$ runs the registration phase for $D$ on behalf of the simulated IdP, whose keys it holds.
Upon $(\mathsf{BrokerView}, sid, cid, acid)$, it runs $\mathsf{AC.PresentCred}(\cdot)$ and $\mathsf{GenProof}(\cdot)$ on behalf of the simulated RP with fresh randomizers $(k_1, t_1)$ and ephemeral keys, obtaining the login request $(cid, acid, \pi_D, M)$ that the broker forwards.
Upon $(\mathsf{BrokerView}, sid, uid_{rp}, |info|)$, it computes $auid$ from the leaked $uid_{rp}$ and the blinding factor $k_1$ of the simulated RP, exactly as $\mathcal{S}_{\mathsf{RP}}$ does in Case 1. It also replays the leaked $uid_{rp}$ as the delivery message from the simulated RP to the broker.
Every such message is therefore a function of $\mathcal{L}_{\mathsf{IdB}}$ and randomness sampled with the same distribution as in the real protocol, and none of them depends on the user identity $uid$ or the contents of $info$.
Hence, $H_0$ and $H_1$ are perfectly indistinguishable.

\re{\textbf{Hybrid} $\textbf{H}_2$ is the same as $H_1$, except that the real session key $K_S$ derived via $\mathsf{AnonyAKA}$ is replaced with a uniformly random key $K_{R}$. This step rests on the key indistinguishability of the $\mathsf{X3DH}$ key agreement rather than on any property of the encryption scheme. In $\mathsf{AnonyAKA}$, $K_S = \mathsf{KDF}(dh_1\|dh_2\|dh_3)$ is derived from secret keys held only by the RP and the IdP, and the broker observes only the corresponding public keys. Any distinguisher between $H_1$ and $H_2$ therefore yields a PPT adversary $\mathcal{B}$ that breaks DDH in the random oracle model, where $\mathcal{B}$ embeds its challenge into the ephemeral Diffie--Hellman terms and answers $\mathsf{KDF}$ queries with the random oracle.
Hence, $H_1$ and $H_2$ are computationally indistinguishable.}

\textbf{Hybrid} $\textbf{H}_3$ is the same as $H_2$, except that the ciphertext $\mathsf{info}_{Enc} = \mathsf{Enc}_{K_{R}}(Info)$ is replaced with $\mathsf{Enc}_{K_{R}}(0^{|Info|})$. Since $K_R$ is now a uniformly random key that is independent of the broker's view, the IND-CPA security of the symmetric encryption scheme applies, and the broker cannot distinguish the encryption of real data from the encryption of zeros. Note that only $|Info|$ is preserved across this hybrid, which is exactly the leakage that $\mathsf{BrokerView}$ grants the broker in Figure~\ref{fig:ideal-orionlink}.
Hence, $H_2$ and $H_3$ are computationally indistinguishable.

Since $H_0 \approx_c H_3$, the real execution with $\mathcal{A}_{\mathsf{IdB}}$ is computationally indistinguishable from the ideal execution with $\mathcal{S}_{\mathsf{IdB}}$ and $\mathcal{F}_{\mathsf{OrionLink}}$.
Moreover, the inputs that $\mathcal{S}_{\mathsf{IdB}}$ receives from $\mathcal{F}_{\mathsf{OrionLink}}$ are exactly the leakage $\mathcal{L}_{\mathsf{IdB}}$, so this case establishes the broker obliviousness stated in Definition~\ref{def:broker-obliviousness}.

\re{
\subsubsection*{Case 4: Joint Corruption of the Broker and RPs ($\mathcal{A}_{\mathsf{Joint}}$)}
Ensure that a semi-honest broker that pools its view with an arbitrary set of malicious colluding RPs still cannot link a user's identity across RPs.

\textbf{Simulator $\mathcal{S}_{\mathsf{Joint}}$.}
$\mathcal{S}_{\mathsf{Joint}}$ runs $\mathcal{A}_{\mathsf{Joint}}$ internally and simulates the honest IdP, the honest RPs, and the users. Toward the corrupted RPs it behaves exactly as $\mathcal{S}_{\mathsf{RP}}$ in Case~1, verifying every presentation and answering the $\mathsf{Judge}$ queries of $\mathcal{F}_{\mathsf{OrionLink}}$. Toward the broker it behaves exactly as $\mathcal{S}_{\mathsf{IdB}}$ in Case~3, producing the broker-visible messages of honest-RP sessions from the leakage $\mathcal{L}_{\mathsf{IdB}}$ and fresh randomness.

\textbf{Hybrid} $\textbf{H}_0$ The real world protocol execution.

\textbf{Hybrid} $\textbf{H}_1$ is the same as $H_0$, except that every message in the joint view other than the pseudonyms and the ciphertexts $\mathsf{info}_{Enc}$ of honest-RP sessions is produced by $\mathcal{S}_{\mathsf{Joint}}$. The broker-side messages are generated from $\mathcal{L}_{\mathsf{IdB}}$ exactly as in Hybrid $H_1$ of Case~3, and the messages exchanged with the corrupted RPs are generated exactly as in Hybrids $H_1$ through $H_3$ of Case~1. Hence $H_0$ and $H_1$ are indistinguishable by the arguments of those hybrids.

\textbf{Hybrid} $\textbf{H}_2$ is the same as $H_1$, except that in honest-RP sessions the session key is replaced with a uniformly random key and $\mathsf{info}_{Enc}$ with an encryption of zeros. The coalition holds no key material of these sessions, so the same reductions to DDH in the random oracle model and to IND-CPA apply, and $H_1$ and $H_2$ are computationally indistinguishable.

\textbf{Hybrid} $\textbf{H}_3$ is the same as $H_2$, except that every pseudonym the coalition obtains, whether unblinded by a corrupted RP or delivered to the broker in an honest-RP session, is replaced by the uniformly random $uid_{rp}$ that $\mathcal{F}_{\mathsf{OrionLink}}$ samples per user--RP pair, embedded consistently as in Case~1. Distinguishing $H_2$ from $H_3$ means distinguishing the family $f_{uid}(cid) = cid^{\,uid}$ from an independent random function on the observed user--RP pairs, and a standard hybrid over these pairs reduces each step to DDH in $\mathbb{G}_1$. The reduction embeds a challenge $(g, g^{\alpha}, g^{\beta}, Z)$ by implicitly setting the identifier of the target user to $\alpha$ and the $cid$ of the target RP to $g^{\beta}$, samples every other $cid$ as $g^{r}$ with a known exponent so that all remaining pseudonyms stay computable, and places $Z$ at the target pair. Hence $H_2$ and $H_3$ are computationally indistinguishable.

Since $H_0 \approx_c H_3$, the real execution with $\mathcal{A}_{\mathsf{Joint}}$ is computationally indistinguishable from the ideal execution with $\mathcal{S}_{\mathsf{Joint}}$ and $\mathcal{F}_{\mathsf{OrionLink}}$.
}

\noindent \textbf{Conclusion.}
Combining these cases, we have proved that $\Pi_{\mathsf{OrionLink}}$ satisfies the defined security goals.
\re{
\subsection{Formal Privacy Definitions and Derivations}
We now state the three privacy definitions summarized in Section~\ref{sec:security-analysis} and establish each of them under the assumptions of Theorem~\ref{thm:security}.

\begin{definition}[Broker Obliviousness]\label{def:broker-obliviousness}
Let $\mathsf{view}^{\Pi}_{\mathsf{IdB}}$ denote the view of an honest-but-curious broker in an execution of a brokered SSO protocol $\Pi$, and let $\mathcal{L}_{\mathsf{IdB}}$ denote the leakage granted by $\mathsf{BrokerView}$ in Figure~\ref{fig:ideal-orionlink}, namely $(cid, D)$ at RP registration, $(sid, cid, acid)$ at login initialization, and $(sid, uid_{rp}, |info|)$ at login finalization. $\Pi$ achieves \emph{broker obliviousness} if there exists a PPT simulator $\mathcal{S}_{\mathsf{IdB}}$ such that $\mathcal{S}_{\mathsf{IdB}}(1^{\lambda}, \mathcal{L}_{\mathsf{IdB}})$ is computationally indistinguishable from $\mathsf{view}^{\Pi}_{\mathsf{IdB}}$.
\end{definition}

\noindent\textit{Derivation.} This definition is exactly the statement established by Case~3 above, so no separate reduction is required.

\begin{definition}[Untraceability]\label{def:untraceability}
A brokered SSO protocol $\Pi$ achieves \emph{untraceability} if for any $\lambda$ and any PPT adversary $\mathcal{A}$ there exists a negligible function $\negl(\cdot)$ such that
\[
\mathsf{Adv}^{\mathsf{untr}}_{\mathcal{A}} :=
\left|\Pr[\mathsf{Exp}^{\mathsf{untr}}_{\Pi}(\mathcal{A},\lambda)=1] - \tfrac{1}{2}\right| \le \negl(\lambda),
\]
where the experiment $\mathsf{Exp}^{\mathsf{untr}}_{\Pi}(\mathcal{A},\lambda)$ is defined as follows.
\begin{itemize}[leftmargin=*]
\item The challenger runs the honest key generation to obtain the issuance key pair $(\mathsf{isk}_c, \mathsf{ivk}_c)$ and hands both keys to $\mathcal{A}$, who plays the IdP.
\item The challenger plays two honest RPs with distinct identity descriptors $D_0 \neq D_1$ chosen by $\mathcal{A}$ and obtains their credentials $(\mathcal{DC}_0, \mathcal{SC}_0)$ and $(\mathcal{DC}_1, \mathcal{SC}_1)$ from $\mathcal{A}$ via the registration protocol.
\item On each query $d \in \{0,1\}$ of $\mathcal{A}$, the challenger returns a fresh login request $(acid, M)$ generated from $(\mathcal{DC}_d, \mathcal{SC}_d)$.
\item The challenger samples $b \narrowxleftarrow{\$} \{0,1\}$ and outputs a challenge request $(acid^{*}, M^{*})$ generated from $(\mathcal{DC}_b, \mathcal{SC}_b)$.
\item $\mathcal{A}$ outputs a guess $b'$, and the experiment outputs $1$ iff $b' = b$.
\end{itemize}
\end{definition}

\noindent\textit{Derivation.} In the experiment the adversary plays the IdP acting as both the credential issuer and the credential verifier, so issuer--verifier collusion is within scope. The environment corrupts the IdP semi-honestly, generates the issuance keys honestly, and hands them to $\mathcal{A}$, exactly as in the experiment. In the ideal execution the IdP observes only $(\mathsf{LoginReq}, sid, acid)$, where $\mathcal{F}_{\mathsf{OrionLink}}$ samples $acid$ fresh at random for honest RPs independently of $D_0$ and $D_1$, so the advantage is zero. The transition to the real execution is bounded by Hybrids $H_1$ and $H_2$ of Case~2. By Theorem~\ref{thm:security}, the advantage of any PPT adversary in the real execution is therefore negligible. We in fact cover an IdP that generates $(\mathsf{isk}_c, \mathsf{ivk}_c)$ maliciously rather than honestly, because the unlinkability assumption on $\mathsf{AC}$ used above already permits the adversary to generate the key pair itself (Section~\ref{sec:Security Properties of Anonymous Credential}), although such an IdP lies beyond the honest-but-curious scope of our threat model.

\begin{definition}[Unlinkability]\label{def:unlinkability}
A brokered SSO protocol $\Pi$ achieves \emph{unlinkability} if for any $\lambda$ and any PPT adversary $\mathcal{A}$ there exists a negligible function $\negl(\cdot)$ such that
\[
\mathsf{Adv}^{\mathsf{unlink}}_{\mathcal{A}} :=
\left|\Pr[\mathsf{Exp}^{\mathsf{unlink}}_{\Pi}(\mathcal{A},\lambda)=1] - \tfrac{1}{2}\right| \le \negl(\lambda),
\]
where the experiment $\mathsf{Exp}^{\mathsf{unlink}}_{\Pi}(\mathcal{A},\lambda)$ is defined as follows.
\begin{itemize}[leftmargin=*]
\item $\mathcal{A}$ controls the broker and an arbitrary set of colluding RPs, while the challenger plays the IdP and the honest users.
\item The challenger creates users with secret identifiers $uid_i \narrowxleftarrow{\$} \mathbb{Z}^*_p$ and, on each RP registration of $\mathcal{A}$, returns a fresh identifier $cid \narrowxleftarrow{\$} \mathbb{G}^*_1$.
\item $\mathcal{A}$ adaptively queries an oracle $\mathcal{O}_{\mathsf{pseu}}(i, cid)$ that returns the pseudonym $cid^{uid_i}$ observed in a login of user $i$ at the RP identified by $cid$.
\item $\mathcal{A}$ outputs two registered identifiers $cid_1 \neq cid_2$ and two user indices $i_0 \neq i_1$.
\item The challenger samples $b \narrowxleftarrow{\$} \{0,1\}$ and returns the challenge pair $(cid_1^{uid_{i_0}}, cid_2^{uid_{i_b}})$.
\item $\mathcal{A}$ outputs a guess $b'$, and the experiment outputs $1$ iff $b' = b$ and no query to $\mathcal{O}_{\mathsf{pseu}}$ involved $cid_2$ with index $i_0$ or $i_1$.
\end{itemize}
\end{definition}

\noindent\textit{Derivation.} The adversary in this experiment, a broker that pools its view with colluding RPs, matches exactly the corruption pattern of Case~4, and every pseudonym in this joint view, including those the broker observes in honest-RP sessions, is an output of $\mathcal{O}_{\mathsf{pseu}}$. In the ideal execution $\mathcal{F}_{\mathsf{OrionLink}}$ samples every pseudonym independently and uniformly at random, so the advantage is zero. The transition to the real execution is bounded by Hybrid $H_3$ of Case~4, which reduces to the pseudorandomness of the DH-based OPRF $f_{uid}(cid) = cid^{\,uid}$ and hence to DDH in $\mathbb{G}_1$. By Theorem~\ref{thm:security}, the advantage of any PPT adversary in the real execution is therefore negligible.

The three derivations above yield the following corollary.

\begin{corollary}\label{cor:privacy}
Under the assumptions of Theorem~\ref{thm:security}, OrionLink achieves broker obliviousness, untraceability, and unlinkability in the sense of Definitions~\ref{def:broker-obliviousness}, \ref{def:untraceability}, and \ref{def:unlinkability}, respectively.
\end{corollary}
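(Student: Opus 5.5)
The corollary follows by instantiating each privacy definition as a consequence of the UC realization in Theorem~\ref{thm:security}, together with the case analysis of Appendix~\ref{app:security proof}. For each of Definitions~\ref{def:broker-obliviousness}, \ref{def:untraceability} and \ref{def:unlinkability} the argument has the same shape. First, I identify the corruption pattern the definition refers to. Second, I show that in the ideal world with $\mathcal{F}_{\mathsf{OrionLink}}$ the adversary's advantage is exactly zero, or that its view is exactly a function of the leakage. Third, I transfer this to the real protocol using the hybrids already established for the matching case, adding up the negligible distinguishing gaps.

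\emph{Broker obliviousness.} The corruption pattern is a semi-honest broker, which is Case~3. Definition~\ref{def:broker-obliviousness} asks for a simulator $\mathcal{S}_{\mathsf{IdB}}$ whose only input is $\mathcal{L}_{\mathsf{IdB}}$, and $\mathcal{S}_{\mathsf{IdB}}$ as constructed has exactly this form. Its inputs are precisely the three $\mathsf{BrokerView}$ outputs. The chain $H_0\approx H_3$ relies on a perfect simulation of the metadata messages, on DDH in the random oracle model for $K_S$, and on IND-CPA security for $\mathsf{info}_{Enc}$. Together these give the required computational indistinguishability of views.

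\emph{Untraceability.} The corruption pattern is a semi-honest IdP holding the issuance keys, which is Case~2. In the ideal world, $\mathcal{F}_{\mathsf{OrionLink}}$ replaces each honest RP's $acid$ with a fresh random value. It outputs only $(\mathsf{LoginReq},sid,acid)$ to the IdP, so the challenge request is independent of $b$ and the advantage is $0$. Every login request in the experiment, the challenge included, comes from an honest RP. Replacing these requests one at a time by simulated presentations, as in $H_1$ and $H_2$ of Case~2, costs one unlinkability advantage $\mathsf{Exp}^{\mathsf{unlink}}_{\mathsf{AC}}$ per query, and $\mathsf{rpk}_E$ and $\mathsf{msg}_{init}$ remain independent of $D_b$. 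A polynomial number of queries then gives a negligible total.

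\emph{Unlinkability.} The corruption pattern is a broker colluding with RPs, which is Case~4. In the ideal world, each pair $(uid,cid)$ gets an independent uniform $uid_{rp}$, so the challenge pair carries no information about $b$. The real-world gap is controlled by $H_3$ of Case~4. There, a hybrid over the queried user--RP pairs reduces to DDH in $\mathbb{G}_1$ by embedding $g^\alpha$ as the target $uid$ and $g^\beta$ as the target $cid$. The experiment's restriction that no oracle query pairs $cid_2$ with $i_0$ or $i_1$ ensures the challenge pairs are never queried, so they can be replaced by random values.

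\textbf{Main obstacle.} The hard part is the unlinkability step. The reduction needs all non-target pseudonyms to stay computable while both challenge entries are random. The key requirement is checking that the number of hybrid steps is polynomial, namely $|\text{users}|\cdot|\text{RPs}|$, and that the challenge-query restriction is exactly what allows both $cid_2^{uid_{i_0}}$ and $cid_2^{uid_{i_1}}$ to be embedded.
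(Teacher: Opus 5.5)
Your proposal follows essentially the same route as the paper. Each definition is matched to its corruption case: Case~3 for broker obliviousness, Case~2 for untraceability and Case~4 for unlinkability. The ideal-world advantage is then zero because $\mathcal{F}_{\mathsf{OrionLink}}$ issues fresh random $acid$ for honest RPs and independent per-pair $uid_{rp}$, and that case's hybrids bound the real-world gap. Your per-query hybrid count for untraceability, and your use of the restriction that $(i_0,cid_2)$ and $(i_1,cid_2)$ are never queried in the DDH embedding, are more explicit versions of the paper's short derivations rather than a different argument.
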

}
\subsection{Security of Forward-Secure Token Refresh}
We provide a brief proof that our forward-secure token refresh mechanism achieves both forward secrecy and post-compromise security.

\begin{definition}[Forward Secrecy and Post-Compromise Security] Suppose $\mathcal{A}$ obtains the key for any round of the session, messages sent before or after that round will not be compromised.
\end{definition}

\textit{Proof Sketch.}
We utilize a symmetric-key ratchet to achieve \textit{forward secrecy}. Each new session key used by the IdP to encrypt the refreshed access token is derived from the previous one using a key derivation function. And we utilize a Diffie–Hellman ratchet to achieve post-compromise security. The RP pre-generates a list of ephemeral DH key pairs and shares the public keys with the broker. When a token refresh is required, the broker forwards the next public key to the IdP, who performs a new DH exchange to inject fresh randomness into the session key derivation.

\clearpage
\section{Artifact Appendix}
\label{page:artifact-start}

This appendix describes how to reproduce the results in the main body of the
paper and is written to be read on its own. The artifact is a complete
prototype of \textit{OrionLink}, a privacy-preserving single sign-on (SSO)
system. In brokered SSO a user logs in to a website, the relying party (RP),
through an intermediary, the broker, which forwards the login to the identity
provider (IdP) that authenticates the user and issues the tokens. In
\textit{OrionLink} the broker is \emph{oblivious}: it routes logins without
learning the user's cross-RP identity or any personal data. The artifact
contains the four deployable roles of the paper's Implementation section
(Section~\ref{sec:Implementation}): IdP, broker, RP, and the Trusted
Computation Agent (TCA), a browser-side script on its own origin that performs
the user's client-side cryptography. It implements the four protocol phases
(Registration, Login Request, User Authentication and Consent, Login Response),
the two extensions (private authorization management, forward-secure token
refresh), and one measurement harness per experiment of the Evaluation section
(Section~\ref{sec:Evaluation}); about 9K lines of Go and JavaScript in
total.
``README~\S$x$'' below refers to \path{README.md} at the repository root.

\subsection{Description \& Requirements}

\subsubsection{How to access}
The artifact repository is available via a persistent DOI:
\url{https://doi.org/10.5281/zenodo.22815552}, and on GitHub:
\url{https://github.com/Ooo0ption/Orionlink}. The four roles are also
published as one prebuilt Docker image,
\path{walnutd/orionlink-ae:ndss2027-v2}.

\subsubsection{Hardware dependencies}
No special hardware. A commodity x86-64 machine with 8 cores, 16\,GB of RAM
and about 10\,GB of free disk runs every experiment and is the basis for the
time estimates below. The paper's results were produced on an AMD Ryzen~9
9950X3D (16 cores) with 96\,GB of RAM under Ubuntu~24.04.

\subsubsection{Software dependencies}
Two layers. (i)~To deploy the prototype and to run E3 the host needs only
Docker~$\ge$~24 with the Compose~v2 plugin, on Linux (Ubuntu~24.04
recommended) or on Windows with Docker Desktop in Linux-container mode; every
command below has a PowerShell equivalent in the README. The image is
self-contained and resolves no package from the network at run time.
(ii)~E1 and E2 are host-side harnesses that drive the containers from outside,
so the host additionally needs Go~$\ge$~1.25 (E1 and the unit tests),
Node.js~$\ge$~18 (E2, which installs its own Chromium) and Python~$\ge$~3.8
(E2, standard library only). Install commands for Ubuntu, macOS and Windows,
with version checks, are in README~\S3.2; \path{requirements.txt} is the
authoritative list. Nothing is ever installed inside a container.

\subsection{Artifact Installation \& Configuration}

README~\S3.1 gives the whole reproduction, from checkout to the three result
sets, as one numbered command sequence; the steps here and the experiments
below expand its rows. Every command is typed in the host shell at the
repository root.

\begin{enumerate}\itemsep1pt\parskip0pt
\item \textbf{Pull the image} (README~\S2.2). Set
\path{ORION_IMAGE=walnutd/orionlink-ae:ndss2027-v2} in the shell and run
\texttt{docker image pull}; the README shows the expected status line. To
build the same image from source instead, online or fully offline,
follow README~\S5 and continue here.
\item \textbf{Start the containers} (README~\S2.3).
\texttt{docker compose up -d -{}-wait} starts the roles in dependency order and
returns once all four report healthy; ports 3000--3003 must be free. The
README lists the four addresses and marks the only one to open by hand, the
RP at \path{http://localhost:3002/ssso/}.
\item \textbf{Install the host-side tools} (README~\S3.2):
\texttt{go mod download}, then \texttt{npm ci} and \texttt{npx playwright
install chromium} in \path{benchmark/e2/browser}, each once.
\end{enumerate}

The stack runs in one of two profiles, selected by \path{ORION_PROFILE}. The
default, \texttt{demo}, is the deployment to click through by hand.
\texttt{bench} runs the same binary and pages but registers the RP at startup,
seeds synthetic users and mounts the load-test endpoints; E2 and E3 require it.

\textbf{Optionally}, the front end of the prototype lets the evaluator walk
through the \textit{OrionLink} flow one button at a time (README~\S4),
registering the RP, signing in as the demo user, refreshing the token and
revoking the authorization. This walk-through is \emph{not} a prerequisite
for any experiment.

\subsection{Major Claims}

The major claims of the paper that this artifact substantiates are summarized as
follows.

\begin{itemize}
\item (C1): \textit{OrionLink}'s privacy mechanisms incur only marginal local
computation, and the anonymous authenticated key agreement $\mathsf{AnonyAKA}$
(AAKA), which RP and IdP run through the broker, is the most expensive stage on
the login path. This claim is substantiated by Experiment (E1), which reproduces
Table~\ref{tab:local-overhead}.
\item (C2): End-to-end browser login remains within interactive bounds and
degrades gracefully with network delay, growing approximately linearly with the
network round-trip time (RTT) for both \textit{OrionLink} and the standard brokered-SSO baseline. This
claim is substantiated by Experiment (E2), which reproduces
Table~\ref{tab:latency-breakdown} and Fig.~\ref{fig:user login latency}.
\item (C3): The system stays low-latency under sustained load on its most
expensive interface, the $\mathsf{AnonyAKA}$ path jointly executed by RP, broker
and IdP. This claim is substantiated by Experiment (E3), which reproduces
Fig.~\ref{fig:plot2}.
\end{itemize}

\subsection{Evaluation}

One experiment per claim. Each is driven by a single runner under
\path{benchmark/} and writes its report to \path{benchmark/results/};
iteration counts and arrival rates are arguments, so an experiment runs at the
paper's setting or at a reduced one. All commands are typed in the host shell
at the repository root, and the evaluator never has to enter a container. The
two steps that do run inside one, the k6 load generator of E3 and the netem
delay setting, are started from the host by Compose.

\medskip
\noindent\textbf{Experiment (E1)} \emph{Local computation overhead}
[Table~\ref{tab:local-overhead}] ($\le$10 compute-min; README~\S3.3). Supports
claim C1. It measures each privacy-enhancing stage in isolation and prints the
six rows of Table~\ref{tab:local-overhead}.

\textit{[Configuration Parameters]} The paper's setting is 1{,}000 runs per
stage; the argument defaults to 100. Every stage is measured in-process.

\textit{[Preparation and Execution]} Needs Go on the host (step~3). The stack
must be \emph{down} (\texttt{docker compose down}): it would compete for CPU,
and the unit tests bind port 3000 themselves.

\begin{footnotesize}
\begin{verbatim}
go test ./benchmark/unit/
./benchmark/run-e1.sh 1000
\end{verbatim}
\end{footnotesize}

\textit{[Results]} The mean time per operation for the six stages, plus their
total. The overhead stays small at every stage, which supports the claim that
the privacy mechanisms of \textit{OrionLink} add negligible local computation.
Absolute values scale with single-core speed, so a deviation of up to a factor
of two is expected on slower hardware.

\medskip
\noindent\textbf{Experiment (E2)} \emph{End-to-end latency}
[Table~\ref{tab:latency-breakdown}, Fig.~\ref{fig:user login latency}]
($\le$45 compute-min; README~\S3.4). Supports claim C2. A Playwright
browser-automation harness drives the running containers and produces every
row of Table~\ref{tab:latency-breakdown}: the \textit{SSO Flow} rows from a
full login, the \textit{Authorization management} row from the PIN-based
decryption in the TCA, and the \textit{Token refresh} row as an authenticated
browser request that takes the live RP\,$\rightarrow$\,broker\,$\rightarrow$\,IdP
path and back.

\textit{[Configuration Parameters]} The paper's setting is 100 runs per
configuration, at emulated RTTs of 0, 40, 100 and 200\,ms.

\textit{[Preparation and Execution]} Needs Node.js and Python on the host
(step~3). Start the stack in the bench profile, then run the harness; repeat at
each RTT after applying the per-direction delay with
\texttt{./netem-compose.sh set-egress <delay>} (README~\S3.6).

\begin{footnotesize}
\begin{verbatim}
ORION_PROFILE=bench docker compose up -d --wait
ORION_PROFILE=bench ./benchmark/run-e2.sh 100
\end{verbatim}
\end{footnotesize}

\noindent The brokered OpenID Connect (OIDC) baseline that Table~\ref{tab:latency-breakdown} compares
against is a separate stack with its own image and ports, documented in
\path{benchmark/oidc-baseline/}.

\textit{[Results]} The runner ends by printing Table~\ref{tab:latency-breakdown}
(\path{benchmark/e2/summarize.py} re-renders it from the reports). Browser
timings are noisy, so accept a deviation of roughly 50\% on absolute values.
The claim to check is that the \textit{OrionLink} total at 40\,ms RTT is on the
order of two seconds and grows linearly with the RTT.

\medskip
\noindent\textbf{Experiment (E3)} \emph{Stress test of the
$\mathsf{AnonyAKA}$ interface} [Fig.~\ref{fig:plot2}] ($\le$20 compute-min;
README~\S3.5). Supports claim C3. It drives the AAKA work of the three roles
under a rising arrival rate and produces the three lines of
Fig.~\ref{fig:plot2}, one per role.

\textit{[Configuration Parameters]} The paper's setting is an offered load of
100, 500, 1{,}000, 1{,}300 and 1{,}500 requests per second (RPS), generated by
k6 with \texttt{constant-arrival-rate}, 60\,s per rate, for each of the three
roles. By default the runner starts a lightweight configuration, 10 and
200\,RPS, that any machine sustains; the paper's ladder is requested with
\texttt{-{}-rps}.

\textit{[Preparation and Execution]} Needs only Docker. Start the stack in the
bench profile with \path{ORION_NETEM_DELAY} empty, then start the load
generator, a fifth Compose service that carries the pinned k6 binary and the
E3 scripts inside the submitted image. The command is typed on the host;
Compose creates and removes the container.

\begin{footnotesize}
\begin{verbatim}
docker compose --profile benchmark run --rm e3
docker compose --profile benchmark run --rm e3 \
    --rps 100,500,1000,1300,1500
\end{verbatim}
\end{footnotesize}

\noindent The first line is the lightweight default (about eight minutes;
\texttt{-{}-quick} shortens it to two); the second reproduces the paper's
ladder. Anything after \texttt{e3} is passed to \path{benchmark/run-e3.sh}.

\textit{[Results]} One JSON report per role with, for each rate, the target
and achieved rate, the p50/p95/p99 latency, the server-side p95 and the
dropped-request count; a rate is a valid data point only when nothing was
dropped and the achieved rate matches the target. Plot p95 against the rate to
obtain Fig.~\ref{fig:plot2}. The result to check is the paper's: at
1{,}000\,RPS the p95 latency stays at about 10\,ms for the IdP and below that
for RP and broker, i.e., flat across the first three rates. Above 1{,}000\,RPS
the IdP line rises steeply as its pairing-bound AAKA work saturates the CPU;
on a machine with fewer cores than the paper's this knee moves to a lower
rate, which does not affect the claim.

\subsection{Customization}

The deployment is configured through \path{.env.example}: \path{ORION_IMAGE}
selects the image, \path{ORION_PROFILE} the profile,
\path{ORION_NETEM_DELAY} the egress delay, and \path{ORION_*_URL} /
\path{ORION_*_PORT} relocate the roles. \path{benchmark/README.md} documents
every remaining parameter and the troubleshooting steps.

\subsection{Notes}

The prototype is dominated by public-key cryptography and is bound to the
CPU, so every experiment is sensitive to other load on the machine and the
tolerances quoted above need not cover every deployment; close other workloads
first. In (E3) the load generator shares the cores with the server roles and
every request crosses the Docker network stack, so running the roles and k6
directly on the host sustains a higher rate.

\end{document}